\newcolumntype{L}[1]{>{\raggedright\let\newline\\\arraybackslash\hspace{0pt}}m{#1}}
\newcolumntype{C}[1]{>{\centering\let\newline\\\arraybackslash\hspace{0pt}}m{#1}}
\newcolumntype{R}[1]{>{\raggedleft\let\newline\\\arraybackslash\hspace{0pt}}m{#1}}
\definecolor{vert}{rgb}{0,0.546875,0}
\definecolor{rouge}{rgb}{0.775,0.2,0.2}
\DeclareMathOperator{\rank}{Rank}
\DeclareMathOperator{\Supp}{Supp}
\newcommand{\ff}[1]{\mathbb{F}_{#1}} 
\newcommand{\lar}{\stackrel{\mathdollar}{\leftarrow}}
\newcommand{\KeyGen}{\normalfont\textsf{KeyGen}}
\newcommand{\Encrypt}{\normalfont\textsf{Encrypt}}
\newcommand{\Decrypt}{\normalfont\textsf{Decrypt}}
\newcommand{\INDCPA}{\ensuremath{\normalfont\textsf{IND\text{-}CPA}}\xspace}
\newcommand{\RSD}{\textsf{RSD}\xspace}
\newcommand{\DRSD}{\textsf{DRSD}\xspace}
\newcommand{\DNHRSD}{\textsf{DNHRSD}\xspace}
\newcommand{\RSL}{\textsf{RSL}\xspace}
\newcommand{\IRSL}{\textsf{IRSL}\xspace}
\newcommand{\param}{\ensuremath{\mathsf{param}}}	
\newcommand{\Setup}{\ensuremath{\mathsf{Setup}}}
\newcommand{\T}{\mathcal{T}}
\newcommand{\sk}{\ensuremath{\mathsf{sk}}\xspace}
\newcommand{\pk}{\ensuremath{\mathsf{pk}}\xspace}
\newcommand{\fold}{\mathsf{Fold}}
\newcommand{\unfold}{\mathsf{Unfold}}
\newcommand{\norme}[1]{\| #1 \|}
\newcommand{\IRSD}{\textsf{IRSD}\xspace}
\newcommand{\DIRSD}{\textsf{DIRSD}\xspace}
\newcommand{\NHRSD}{\textsf{NHRSD}\xspace}
\newcommand{\NHRSL}{\textsf{NHRSL}\xspace}
\newcommand{\DRSL}{\textsf{DRSL}\xspace}
\newcommand{\DNHRSL}{\textsf{DNHRSL}\xspace}
\newcommand{\NHIRSD}{\textsf{NHIRSD}\xspace}
\newcommand{\NHIRSL}{\textsf{NHIRSL}\xspace}
\newcommand{\DNHIRSL}{\textsf{DNHIRSL}\xspace}
\newcommand{\N}{\mathbb{N}}
\newcommand{\Fq}{\mathbb{F}_q}
\newcommand{\fq}{\Fq}
\newcommand{\Fqm}{\mathbb{F}_{q^m}}
\newcommand{\fqm}{\Fqm}
\newcommand{\C}{{\mathcal{C}}}
\newcommand{\word}[1]{\ensuremath{\boldsymbol{#1}}}
\renewcommand\word[1]{\mathbf{#1}}
\newcommand{\cv}{\word{c}}
\newcommand{\ev}{\word{e}}
\newcommand{\gv}{\word{g}}
\newcommand{\hv}{\word{h}}
\newcommand{\mv}{\word{m}}
\newcommand{\sv}{\word{s}}
\newcommand{\uv}{\word{u}}
\newcommand{\vv}{\word{v}}
\newcommand{\xv}{\word{x}}
\newcommand{\yv}{\word{y}}
\DeclareMathOperator{\Mat}{Mat}
\newcommand{\mat}[1]{\boldsymbol{#1}} 
\newcommand{\probab}[2]{\Pr_{\substack{#2}}\left[#1 \right]}
\newcommand{\trsp}[1]{#1^\mathsf{T}} 
\newcommand{\Hm}{\mat{H}}
\newcommand{\Sm}{\mat{S}}
\newcommand{\Gm}{\mat{G}}
\newcommand{\Cm}{\mat{C}}
\newtheorem{ass}{Assumption}
\newtheorem{probe}{Problem}
\newtheorem{coro}{Corollary}
\newcommand{\Idm}{\mat{I}}
\newcommand{\IM}[1]{\mathcal{IM}(#1)}
\newcommand{\schemeone}{Multi-RQC-AG~}
\newcommand{\schemetwo}{Multi-UR-AG~}
\newcommand{\repeattheorem}[1]{%
  \begingroup
  \renewcommand{\thetheorem}{\ref{#1}}%
  \expandafter\expandafter\expandafter\theorem
  \csname reptheorem@#1\endcsname
  \endtheorem
  \endgroup
}
\xdef\csname reptheorem@#1\endcsname{%
    \unexpanded\expandafter{\BODY}%
  }%
\unskip\label{#1}\endtheorem
\begin{document}
\title{RQC revisited and more cryptanalysis 
for Rank-based Cryptography}

\author{}
\institute{}


\author{
Loïc Bidoux\inst{1} 
\and Pierre Briaud \inst{2,3} \and
Maxime Bros\inst{4}
\and
Philippe Gaborit\inst{4}
}
\authorrunning{L. Bidoux et al.}

\institute{
Cryptography Research Center, Technology Innovation Institute, Abu Dhabi, UAE \\
\email{loic.bidoux@tii.ae}\\
\and
Inria, 2 rue Simone Iff, 75012 Paris, France\
	\and
	Sorbonne Universit\'es, UPMC Univ Paris 06 \\
	\email{pierre.briaud@inria.fr}
\and
University of Limoges, CNRS, XLIM, UMR 7252, Limoges, France \\
\email{\{maxime.bros, philippe.gaborit\}@unilim.fr}
}
\maketitle              
\begin{abstract}

We propose two main contributions: first, we revisit the encryption scheme
Rank Quasi-Cyclic (RQC) \cite{AABBBBCDGHZ20} by introducing new efficient 
variations, in particular, a new class of codes, 
the Augmented Gabidulin codes; second, we propose new attacks against 
the Rank Support Learning (\RSL), the Non-Homogeneous 
Rank Decoding (\NHRSD), \ and the Non-Homogeneous Rank 
Support Learning (\NHRSL) problems. 
\RSL is primordial for all recent 
rank-based cryptosystems such as Durandal \cite{ABGHZ19} or
LRPC with multiple syndromes \cite{LRPC2022}, 
moreover, \NHRSD and \NHRSL, together with \RSL, are at the core of 
our new schemes. 

The new attacks we propose 
are of both types: combinatorial and algebraic. 
For all these attacks, we provide a precise analysis of their complexity.

Overall, when all of these new improvements for the RQC scheme
are put together, and their security
evaluated with our different attacks, they enable one to 
gain 50\% in parameter sizes compared to the previous RQC version.
More precisely, we give very competitive parameters, around 11~KBytes,
for RQC schemes with unstructured public key matrices.
This is currently the only scheme with such short parameters
whose security relies solely on pure random instances without any masking
assumptions, contrary to McEliece-like schemes.
At last, when considering the case of Non-Homogeneous errors, our scheme permits 
to reach even smaller parameters. 

  \keywords{Rank Metric \and Encryption \and Code-Based Cryptography \and Gabidulin Codes.}
\end{abstract}

\section{Introduction}

    \subsubsection*{Background on rank metric code-based cryptography.}
      In the last decade, rank metric code-based cryptography has evolved to become a
      real alternative to traditional code-based cryptography based on the
      Hamming metric. The original scheme based on rank metric was the GPT cryptosystem
      \cite{GPT91}, an adaptation of the McEliece scheme in a rank metric context where
      Gabidulin codes \cite{G85}, a rank metric analogue of Reed-Solomon codes, were the masked
      codes. 
      However, the strong algebraic structure of these codes was successfully 
      exploited for attacking the original GPT cryptosystem and its variants with 
      the Overbeck attack \cite{O05} (see \cite{OTN18} for the latest developments). 
      This situation is similar to the Hamming metric where most of McEliece cryptosystems 
      based on variants of Reed-Solomon codes have been broken.

      Besides the McEliece scheme where a secret code is masked through using permutation, 
      it is possible to generalize the approach by considering public key matrices with trapdoor. 
      Examples of such an approach are NTRU \cite{HPS98} or MDPC \cite{MTSB13} 
      cryptosystems where the masking consists
      in knowing a very small weight vector of the given public matrix. 
      Such an approach was adapted for rank metric through the introduction of 
      LRPC codes \cite{GMRZ13}, a rank metric analogue of MDPC. 

      The security of such type of cryptosystems relies on the general rank decoding problem together with the computational indistinguishability of the public key (a public matrix). 
      The fact that the public matrix is used both for encryption 
      and decryption, permits to obtain very efficient schemes, at the cost of an inversion. 
      It is worth noticing that Loidreau's scheme, which uses homogeneous LRPC matrices in 
      a McEliece context, seems to resist to structural attacks with an homogeneous 
      matrix of sufficiently high enough rank \cite{L17}.

    \subsubsection*{The RQC scheme.} 
      Another approach, proposed by Aleknovich in \cite{A03}, permits to rely solely on random instances of the Syndrome Decoding problem without any masking of a public key. However, such an approach is 
      strongly inefficient in practice; a few years later a more optimized approach 
      was proposed with the HQC scheme \cite{AABBBDGPZ21a}, relying on Quasi-Cyclic codes. It has been generalized to rank metric 
      with the RQC scheme \cite{AABBBBCDGHZ20}. 
      For these schemes, two type of codes are used: a first random double circulant code permits 
      to ensure the security of the scheme when a second public code permits to decode/decrypt
      the ciphertext. In RQC, Gabidulin codes are used as public decryption codes.
      Besides RQC, some other variations were proposed in 
      \cite{GHPT17a,wang2019loong,gaborit2020dual}.
      The main advantage of the RQC cryptosystem compared to the LRPC cryptosystem is the fact 
      that its security reduction is done to random decoding instances whereas the LRPC approach 
      requires another indistinguishability assumption; however, this advantage comes at a price 
      since parameters are larger for RQC than for LRPC.

      The RQC scheme was proposed to NIST Standardization Process with very competitive 
      parameters but algebraic attacks of \cite{BBBGNRT20,BBCGPSTV20}, which were published 
      during the standardization process, had a dreadful impact on RQC parameters 
      so that, in order not to increase too much RQC parameters, the introduction of
      non-homogeneous errors \cite{AABBBBCDGHZ20} permitted to limit the impact of these algebraic attacks. 

      The idea of non-homogeneous errors is to consider errors in three parts of length $n$ 
      such that the error weight is the same for the first two sets, but larger for the third one.
      Such an approach permits to limit the impact of the 
      security reduction of RQC to decoding random $[3n,n]$ codes rather than $[2n,n]$ 
      codes in LRPC cryptosystems. 
      The notion of non-homogeneous error led to the introduction of the Non Homogeneous \RSD problem (\NHRSD)
      and was a first approach to decrease the size of RQC parameters. 
      At this point, it is meaningful to notice that for LRPC and RQC systems, the weight
      of the error to attack is structurally $\mathcal{O}(\sqrt{n})$ (where $n$ is the length of the code), 
      a type of parameters for which algebraic attacks are very efficient.

      Besides the \RSD problem, the \RSL problem which consists in having $N$ syndromes
      whose associated errors share the same support, was introduced
      in \cite{GHPT17a} to construct the RankPKE scheme and later in \cite{wang2019loong}. 
      This problem which generalizes \RSD is meaningful to give more margin in building cryptosystems; 
      it has been recently used to improve on the LRPC schemes \cite{LRPC2022}. 
      It permits, in particular, to increase the weight of the error to decode from $O(\sqrt{n})$ 
      to a weight closer to the Rank Gilbert-Varshamov (RGV) bound; this is (of importance) 
      since for that type of parameters, i.e. close to the RGV bound, algebraic attacks 
      become relatively less efficient than combinatorial attacks.

    \subsubsection*{Attacks and problems in rank metric.} 
      There are two types of attacks in rank metric. Combinatorial 
      attacks which were the first to be introduced in the late 1990's 
      then algebraic attacks ten years later. 
      At first combinatorial attacks were the most efficient ones, but 
      recently and especially for parameters where the error has weight
       $O(\sqrt{n})$ the seminal approaches of \cite{BBBGNRT20,BBCGPSTV20} 
       permitted to have a strong impact on such parameters. 
       Besides the \RSD problem, the \RSL problem was studied in \cite{GHPT17a} 
       and \cite{DT18b} and more recently algebraic attacks 
       were considered in \cite{BB21}. In particular in the definition of 
       the \RSL problem in \cite{GHPT17a} it was shown that giving more 
       than $nr$ syndromes led to a combinatorial attack on the \RSL problem. 
       Moreover, the Non Homogeneous \RSD (\NHRSD) problem was introduced in \cite{AABBBBCDGHZ20}
       in which a first approach for algebraic attack was proposed. 

    \subsubsection*{Contributions.}

      We saw in previous paragraphs, how before the present paper some new problem in rank metric
      had emerged (NHRSD, \RSL)  which permitted to improve on parameters both for RQC and LRPC systems.

      In this paper our contributions are twofold: first we propose new variations on the RQC scheme in order to improve on parameters and second we study in details the new problems on which are based these approaches. All these problems \NHRSD and \RSL appear as natural variations on the \RSD problems and are bound to be the future problems on which will be relying systems in rank metric.

      \medskip
      \noindent \emph{\textbf{New schemes.}}
        The new schemes we propose are based on three types of improvements:

        Our first and main improvement, is the introduction of a new class decodable code, denoted by Augmented Gabidulin codes. These codes exploit the concept of \emph{support erasure} in a
        rank-metric context. Compared to classical Gabidulin codes, the introduction of known support erasure
        permits to decrease the value $m$ down to a value close to the weight of the error,
        whereas $m$ had to be at least twice bigger with classical Gabidulin codes.
        This comes at the cost of a probabilistic decoding; however the decryption failure rate (DFR) can
        be controlled very easily as it is done with LRPC in \cite{GMRZ13,LRPC2022}.

        Second, as for the recent LRPC improvement \cite{LRPC2022} we 
        consider the use of multiple syndromes
        in the RQC scheme. As for LRPC this approach permits to greatly 
        improve the decoding capacity of the RQC scheme
        by increasing the information available for the decryption at a 
        lower cost than directly increasing all parameters.
        This variation implies that the scheme relied on the
        \RSL \cite{GHPT17a,wang2019loong,ABGHZ19} rather than on the \RSD problem.
        As for the new LRPC approach \cite{GMRZ13} this approach permits 
        in particular to increase the weight of the error to decode, so 
        that in practice reaching almost the RGV bound becomes possible (but with larger parameters).

        Third,  like pioneered in \cite{AABBBBCDGHZ20}, we use a variant
        of the \RSD problem by
        considering an error with non-homogeneous weight $(w_1,w_2)$ which is
        the Non-Homogeneous Rank Syndrome Decoding problem (\NHRSD). In short,
        this error contains a part of weight $w_1$ and a part of weight $w_1+w_2$.
        This optimization allows one extra degree of freedom while choosing the target error weights,
        and this has a strong impact on the parameters.

        \medskip
        In conclusion, we propose two types of scheme with very competitive sizes.
        First, \schemeone has parameters similar to MS-LRPC \cite{LRPC2022},
        around 4.5 KBytes for the public key together with the ciphertext; its
        security relies on ideal-codes.
        Second, \schemetwo has parameters a little bit larger than MS-LRPC,
        around 11 KBytes total, this time without any structure; more precisely,
        it relies only on pure random instances of the \RSL problem.
        This is the most conservative security one could expect.
        For both of the aforementioned schemes, one could add a non-homogeneous structure
        in order to shorten the sizes down of 30\%, this corresponds to our scheme:
        NH-\schemeone and NH-\schemetwo.

        The scheme we propose without any ideal structure with small parameters of 11KBytes
        is meaningful, indeed  since it is not proven
        that any ideal structure cannot be used to get faster attack with a quantum computer, scheme without any ideal structure may hence provide a better security. Of course in that case the size of parameters increases a lot but for rank metric our scheme shows that it remains small when for Hamming metric having no additional structure implies very large public key
        (see McEliece scheme for instance). Moreover our scheme does not necessitate any supplementary indistinguishability assumption. 
        Moreover, our schemes compare very well with other code-based schemes.

      \bigskip
      \noindent \emph{\textbf{New attacks and analysis.}}
        We saw that our new improvements on RQC relied on recent problems for rank metric, 
        namely the \NHRSD et \RSL problems (and also a combination of the two latter problems the \NHRSL problem).
        Although these schemes have begun to be considered
        we go deeper in their study by proposing new attacks and adaptation of known attacks for the security evaluation of these problems. The motivation comes both from the general interest of these problems for rank based cryptography and for the new schemes that we introduce in this paper.

        More precisely, recall that an \RSL$(m,n,k,r,N)$ instance is like a 
        rank syndrome decoding instance
        of parameters $(m,n,k,r)$ where $N$ instead of 1 syndromes are given,
        and all their associated errors
        share the same support.
        The security of \RSL is inherent to the value of $N$, and it is known since
        \cite{GHPT17a} that the problem can be solved in polynomial time as long as
        $N>nr$. Our contributions are then the following:
        \begin{itemize}
          \item With our new combinatorial attack against \RSL, first we improve on the most
          recent algebraic attack for some instances; most importantly, we improve the aforementioned
          bound, unchanged since 2017 \cite{GHPT17a}, showing that \RSL becomes polynomial
          as long as $N>kr\frac{m}{m-r}$.
          \item We also propose the first combinatorial attack against \NHRSD, together
          with a precise complexity analysis of the algebraic attack, still against \NHRSD,
          described in \cite{AABBBBCDGHZ20}.
          \item Finally, we propose an attack against \NHRSL. That it is
          to say that we were able to take advantage of two structure in the same attack: the fact that
          the error is non-homogeneous and that one is given several syndromes.
        \end{itemize} 

\section{Preliminaries}

  \subsection{Coding theory and rank metric}

    Let $q$ be a prime power, let $m$ a positive integer, let $\ff{q^m}$ an extension of degree $m$ of $\ff{q}$ and let $\mat\beta := (\beta_1,\dots,\beta_m)$ be an $\ff{q}$-basis of $\ff{q^m}$. 
    Any vector in $\ff{q^m}^n$ can naturally be viewed as a matrix in $\ff{q}^{m \times n}$ by expressing its coordinates in $\mat\beta$.

    \begin{definition}[Rank weight]
    Let $\xv = (x_1,\ldots,x_n) \in \ff{q^m}^{n}$ be a vector. The rank weight of $\xv$ denoted $\norme{\xv}$ is defined as the rank of the matrix $\Mat(\xv) := (x_{ij})_{i,j} \in \ff{q}^{m \times n}$ where $x_j = \beta_1 x_{1j} + \dots + \beta_m x_{mj}$ for $j \in \{1..n\}$. 
    The set of vectors of weight $w$ in $\Fqm^n$ is denoted $\mathcal{S}_{w}^{n}(\Fqm)$.
    \end{definition}

    \begin{definition}[Support]
    The support of $\xv \in \ff{q^m}^{n}$ is the $\ff{q}$-linear space generated by the coordinates of $\xv$, \emph{i.e.} $\text{Supp}(\xv) := \langle x_1,\dots,x_n \rangle_{\ff{q}}$.
    It follows from the definition that $\norme{\xv} = \dim_{\ff{q}}{\left(\text{Supp}(\xv) \right)}$.
    \end{definition}

    These notions can be extended to matrices. The support of a matrix $\mathbf{M} \in \fqm^{r \times c}$ denoted $\Supp(\mathbf{M})$ is the $\fq$-vector space spanned by all its $r \times c$ entries, and the rank weight $\norme{\mathbf{M}}$ is defined as the dimension of this support. Note that we always have $\norme{\mathbf{M}} \leq \rank{(\mathbf{M})}$, for example $\norme{\mathbf{I}_n} = 1$ while $\rank{(\mathbf{I}_n)} = n$.

    \begin{definition}[$\ff{q^m}$-linear code]
    An $\ff{q^m}$-linear code $\mathcal{C}$ of length $n$ and dimension $k$ is an $\Fqm$-linear subspace of $\Fqm^n$ of dimension $k$. We say that it has parameters $[n,k]_{q^m}$. A generator matrix for $\mathcal{C}$ is a full-rank matrix $\Gm \in \ff{q^m}^{k \times n}$ such that
    $\mathcal{C} = \left\lbrace \mv \Gm,~\mv \in \ff{q^m}^k\right\rbrace.$
    A parity-check matrix is a full-rank matrix $\Hm \in \ff{q^m}^{(n-k) \times n}$ such that
    $\mathcal{C} = \left\lbrace \xv \in \ff{q^m}^n,~\Hm \trsp{\xv} = 0\right\rbrace.$
    Finally, the rowspace of $\Hm$ is a basis of the dual code $\mathcal{C}^{\perp}$.
    \end{definition}

    The use of $\ff{q^m}$-linear codes instead of standard $\ff{q}$-linear codes permits to obtain a more compact description for the public key in code-based cryptosystems.
    Another classical way to reduce the keysize is to use some type of cyclic structure, which leads to the notion of ideal codes.
    Let $P \in \ff{q}[X]$ denote a polynomial of degree $n$.
    The linear map $\uv := (u_0,\dots,u_{n-1}) \mapsto \uv(X) := \sum_{i=0}^{n-1}u_i X^i$ is a vector space isomorphism between $\ff{q^m}^{n}$ and $\ff{q^m}[X] / \langle P \rangle$, and we use it to define a product between two elements $\uv$ and $\vv$ in $\ff{q^m}^n$ via $\uv \cdot_{P} \vv := \uv(X)\vv(X) \text{ mod }P$.
    Note that we have 
    \begin{align*}
    \uv \cdot \vv = \left( \sum_{i=0}^{n-1}u_i X^{i} \right) \vv(X) \text{ mod }P 
    = \sum_{i=0}^{n-1}u_i \left( X^{i} \vv(X) \text{ mod }P \right),
    \end{align*}
    so that the product by $\vv \in \ff{q^m}^n$ can be seen as a matrix-vector product by the so-called ideal matrix generated by $\xv$ and $P$.

    \begin{definition}[Ideal matrix]
    Let $P \in \ff{q}[X]$ a polynomial of degree $n$ and let $\vv \in \ff{q^m}^n$. 
    The ideal matrix generated by $\vv$ and $P$, noted $\mathcal{IM}_{P}(\mathbf{v})$, 
    is the $n \times n$ matrix, with entries in $\fqm$, and whose rows are the following:
    $\mathbf{v}(X) \mod P$, $X \mathbf{v}(X) \mod P$, $\hdots$, 
    $X^{n-1} \mathbf{v}(X) \mod P$.

    For conciseness, we use the notation $\mathcal{IM}(\mathbf{v})$ since there will be no ambiguity in the choice of $P$ in the paper. 
    \end{definition}

    \noindent One can see that $\uv \cdot \vv = \uv\mathcal{IM}(\mathbf{v}) = \vv \mathcal{IM}(\mathbf{u}) = \vv \cdot \uv$. 
    An ideal code $\mathcal{C}$ of parameters $[sn,tn]_{q^m}$ is an $\ff{q^m}$-linear code which admits a generating matrix made of $s \times t$ ideal matrix blocks in $\ff{q^m}^{n \times n}$. 
    A crucial point regarding the choice of the modulus $P$ (see \cite[Lemma 1]{AABBBBCDGHZ20}) is that if $P \in \ff{q}[X]$ is \emph{irreducible} of degree $n$ and if $n$ and $m$ are \emph{prime}, then such a code $\mathcal{C}$ always admits a \emph{systematic} generator matrix made of ideal blocks. 
    Hereafter, we only consider $t = 1$.

    \begin{definition}[Ideal codes]
    Let $P \in \ff{q}[X]$ a polynomial of degree $n$. An $[ns, n]_{q^m}$-code $\C$ is an ideal code if it has a generator matrix of the form 
    $\Gm =  \begin{pmatrix}
    \mat{I}_{n} ~ & \mathcal{IM}(\mathbf{g}_1) && \dots && \mathcal{IM}(\mathbf{g}_{s-1}) \\
    \end{pmatrix} \in \ff{q^m}^{n \times ns}$, where $\mathbf{g}_{i} \in \ff{q^m}^n$ 
    for $1 \leq i \leq s-1$. Similarly, $\C$ 
    is an ideal code if it admits a parity-check matrix of the form

    $$\Hm =  \begin{pmatrix}
    &~ \trsp{\mathcal{IM}(\mathbf{h}_1)} \\
    \mat{I}_{n(s-1)}& \vdots \\
    &~ \trsp{\mathcal{IM}(\mathbf{h}_{s-1})}
    \end{pmatrix} \in \ff{q^m}^{n(s-1) \times ns}.$$
    \end{definition}

  \subsection{Gabidulin codes}

    Gabidulin codes were introduced by Gabidulin in 1985 \cite{G85}. These codes can be seen as the rank metric analogue of Reed-Solomon codes \cite{ReedSolomon60}, where standard polynomials are replaced by $q$-polynomials (also called Ore polynomials or linearized polynomials).

    \begin{definition}[$q$-polynomials]\label{def:q-polynomes}
    The set of $q$-polynomials over $\Fqm$ is the set of polynomials with the following shape:
    \[ \left\lbrace P(X) = \sum_{i = 0}^{r} p_i X^{q^i}, \text{ with } p_i
    \in \Fqm \text{ and } p_r \neq 0 \right\rbrace. \]
    The $q$-degree
    of a $q$-polynomial $P$ is defined as $\deg_q(P) = r$.
    \end{definition}

    \begin{definition}[Ring structure]
    The set of $q$-polynomials over $\Fqm$ is a non-commutative ring 
    for $(+,\circ)$, where $\circ$ is the composition of $\ff{q}$-linear endomorphisms.

    \end{definition}
    Due to their structure, the $q$-polynomials are inherently related to decoding problems in the rank metric as stated by the following
    propositions.

    \begin{theorem}[\cite{ore1933special}]
    Any $\Fq$-subspace of $\Fqm$ of dimension $r$ is the set of the roots of a unique monic $q$-polynomial $P$ such that $\deg_q(P) = r$.
    \end{theorem}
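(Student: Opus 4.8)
The plan is to treat existence and uniqueness separately; in both cases the engine is the elementary bound that a nonzero univariate polynomial over $\Fqm$ has at most as many roots in $\Fqm$ as its degree, applied to linearized polynomials whose ordinary degree is a power of $q$.

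First I would record the basic fact that a $q$-polynomial $P(X)=\sum_{i=0}^r p_iX^{q^i}$ with coefficients in $\Fqm$ acts as an $\ff{q}$-linear endomorphism of $\Fqm$ — this is immediate from $(x+y)^q=x^q+y^q$ and $c^q=c$ for $c\in\ff{q}$, and it is the same linearity that underlies the ring structure recalled above. Consequently, if $p_r\neq 0$ then $P$ has ordinary degree $q^r$, hence at most $q^r$ roots in $\Fqm$, and its root set is automatically an $\ff{q}$-subspace of $\Fqm$.

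For existence, given $V$ of dimension $r$ I would fix an $\ff{q}$-basis $v_1,\dots,v_r$ of $V$ and look for $(p_0,\dots,p_r)\in\Fqm^{r+1}$, not all zero, solving the homogeneous $\Fqm$-linear system $\sum_{i=0}^r p_i v_j^{q^i}=0$ for $j=1,\dots,r$; such a solution exists because there are $r+1$ unknowns and only $r$ equations. The associated linearized polynomial $P$ then vanishes on all of $V$ by $\ff{q}$-linearity, i.e. on $q^r$ distinct points. The key point — the one place where something could go wrong — is to argue that $p_r\neq 0$: otherwise $P$ would be a nonzero polynomial of ordinary degree at most $q^{r-1}<q^r$ having at least $q^r$ roots. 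Granting $p_r\neq 0$, after dividing by $p_r$ the polynomial $P$ is monic with $\deg_q(P)=r$, and since its ordinary degree $q^r$ equals $\#V$ and it already vanishes on $V$, its root set is exactly $V$.

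For uniqueness, if $P_1,P_2$ are two monic $q$-polynomials of $q$-degree $r$ with root set $V$, then $P_1-P_2$ is again a linearized polynomial in which the degree-$q^r$ terms cancel; it is $\ff{q}$-linear and vanishes on the $q^r$ elements of $V$, so by the same counting bound it must be identically zero, whence $P_1=P_2$. I expect the only subtle step to be the degree-versus-root-count argument in the existence part (ensuring the constructed polynomial really has $q$-degree $r$ and root set exactly $V$, and not larger); everything else is routine linear algebra. If one prefers an explicit construction over the dimension count, existence can instead be obtained by induction on $r$: picking $W\subset V$ with $\dim W=r-1$ and its associated polynomial $P_W$, one checks that $P_W(V)$ is an $\ff{q}$-line $\ff{q}\alpha$ with $\alpha\neq 0$ and sets $P_V:=(X^q-\alpha^{q-1}X)\circ P_W$.
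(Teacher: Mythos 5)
The paper does not prove this statement: it is quoted as a classical result of Ore and cited to \cite{ore1933special}, so there is no in-paper argument to compare against. Your proof is correct and complete: the existence part (nontrivial solution of an $r \times (r+1)$ homogeneous Moore-type system, then $p_r \neq 0$ forced by comparing the $q^r$ distinct roots in $V$ against the ordinary degree bound) and the uniqueness part (the difference of two monic solutions drops to ordinary degree $\le q^{r-1}$ yet vanishes on $q^r$ points) are exactly the standard argument going back to Ore, and the inductive construction $P_V = (X^q - \alpha^{q-1}X)\circ P_W$ you sketch as an alternative is the usual explicit recursion for annihilator polynomials of subspaces. No gaps.
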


    \begin{corollary}
    Let $\mathbf{x}=\left(x_1,x_2,\dots,x_n\right) \in \mathbb{F}_{q^m}^n$ and $V$ be the monic $q$-polynomial of smallest $q$-degree such that $V(x_i) = 0$ for $1 \leq i \leq n$, then $\norme{\mathbf{x}} = r$ if and only if $\deg_q(V) = r$.
    \end{corollary}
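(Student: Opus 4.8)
The plan is to reduce the statement to Ore's theorem (the Theorem of \cite{ore1933special} recalled just above) by exploiting the fact that a $q$-polynomial is an $\ff{q}$-linear map, so that its set of roots is automatically an $\ff{q}$-subspace, and so that the hypothesis ``$V(x_i)=0$ for $1\le i\le n$'' is equivalent to ``$V$ vanishes on $\Supp(\mathbf{x})$''.

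First I would record the elementary but crucial linearity fact: for any $q$-polynomial $P(X)=\sum_{i=0}^{d}p_iX^{q^i}$, any $a,b\in\ff{q^m}$ and any $\lambda\in\ff{q}$, one has $P(a+b)=P(a)+P(b)$ and $P(\lambda a)=\lambda P(a)$, since each monomial $X\mapsto X^{q^i}$ is the $i$-th iterate of the Frobenius, which is additive and fixes $\ff{q}$ pointwise (this is exactly the content of the ``Ring structure'' definition, where $\circ$ is composition of $\ff{q}$-linear endomorphisms). Consequently, if $P$ vanishes on $x_1,\dots,x_n$ then it vanishes on the whole $\ff{q}$-span $\langle x_1,\dots,x_n\rangle_{\ff{q}}=\Supp(\mathbf{x})$; the converse is trivial. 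Hence the $q$-polynomials annihilating $\{x_1,\dots,x_n\}$ are exactly those annihilating the subspace $U:=\Supp(\mathbf{x})$, and in particular $V$ is the monic $q$-polynomial of least $q$-degree vanishing on $U$.

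Next, set $r:=\norme{\mathbf{x}}=\dim_{\ff{q}}U$. By Ore's theorem there is a unique monic $q$-polynomial $W$ with $\deg_q(W)=r$ whose root set is exactly $U$; in particular $W$ vanishes on $U$, which gives $\deg_q(V)\le r$. For the reverse inequality I would argue that a nonzero $q$-polynomial $P$ of $q$-degree $d$ has ordinary degree $q^d$, hence at most $q^d$ roots in $\ff{q^m}$, and by the linearity noted above these roots form an $\ff{q}$-subspace; if such a $P$ vanishes on $U$ then $q^r=|U|\le q^d$, so $d\ge r$. Applying this with $P=V$ yields $\deg_q(V)\ge r$, hence $\deg_q(V)=r=\norme{\mathbf{x}}$. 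Since this equality holds for the given $\mathbf{x}$, the claimed equivalence $\norme{\mathbf{x}}=r\iff\deg_q(V)=r$ follows at once (and incidentally $V=W$).

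The only mildly delicate point is the lower bound $\deg_q(V)\ge\norme{\mathbf{x}}$: it rests on the fact that the kernel of a linearized polynomial cannot be ``too large'', which is precisely the root-counting argument above; alternatively one can invoke the uniqueness clause of Ore's theorem directly, observing that a monic $q$-polynomial of $q$-degree $<r$ vanishing on $U$ would contradict the minimality/uniqueness of $W$. Everything else is routine bookkeeping with $\ff{q}$-linearity of $q$-polynomials.
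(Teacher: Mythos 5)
Your proof is correct. The paper states this corollary without proof, treating it as an immediate consequence of Ore's theorem, and your argument --- reducing to $\Supp(\mathbf{x})$ via $\ff{q}$-linearity of $q$-polynomials, then getting $\deg_q(V)\le\norme{\mathbf{x}}$ from Ore's theorem and $\deg_q(V)\ge\norme{\mathbf{x}}$ from the root count $q^{\deg_q(V)}\ge|\Supp(\mathbf{x})|$ --- is exactly the standard justification the paper leaves implicit. (One minor remark: your parenthetical suggestion that the lower bound could instead be obtained from the uniqueness clause of Ore's theorem alone is not quite self-contained, since ruling out a lower-degree annihilator still requires the root-counting bound; but your primary argument does not rely on this.)
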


    Finally, Gabidulin codes can be seen as the evaluation of $q$-polynomials of
    bounded degree on the coordinates of a fixed vector over $\Fqm$.  

    \begin{definition}[Gabidulin codes]\label{def:code_gabidulin}
    Let $k,n,m \in \N$ such that $k \leqslant n \leqslant m$ and let
    $\gv = (g_1,\dots,g_n)$ be an-$\Fq$ linearly independent family of
    elements of $\Fqm$. The {\em Gabidulin code} $\mathcal{G}_{\gv}(n,k,m)$ is
    the code of parameters $[n,k]_{q^m}$ defined by
    \[ 
      \mathcal{G}_{\gv}(n,k,m) := \left\lbrace P(\gv),~\deg_q (P) < k \right\rbrace,
      \text{where $P(\gv) := (P(g_1), \ldots, P(g_v))$.}
    \]
    
    \end{definition}

  \subsection{Hard problems in rank-based cryptography}
 
    As in the Hamming metric, the main source of computational hardness for rank-based cryptosystems is a decoding problem. More precisely, it is the decoding problem in the rank metric setting restricted to $\Fqm$-linear codes which is called the Rank Syndrome Decoding Problem ($\RSD$).

    \begin{probe}[\RSD Problem, Search]\label{problema:rsd}
      Given $\Hm \in \ff{q^m}^{(n-k) \times n}$, 
      a full rank parity-check matrix for a random $\ff{q^m}$-linear code $\mathcal{C}$, 
      an integer $w \in \mathbb{N}$ and a syndrome $\sv \in \Fqm^{n-k}$, 
      the Rank Syndrome Decoding problem $\RSD(m,n,k,w)$ asks to find 
      $\ev \in \ff{q^m}^n$ such that $\norme{\ev} = w$ and $\Hm \trsp{\ev} = \trsp{\sv}$.
    \end{probe}

    The decision version is denoted by \DRSD. Even if \RSD is not known to be NP-complete, there exists a randomized reduction from \RSD to an NP-complete problem, namely to decoding in the Hamming metric \cite{GZ14}. Also, the average number of solutions for a fixed weight $w$ is given by the following Gilbert-Varshamov bound for the rank metric:
    \begin{definition}[Rank Gilbert-Varshamov bound]
    The Gilbert-Varshamov bound $w_{GV}(q,m,n,k)$ for $\ff{q^m}$-linear codes 
    of length $n$ and dimension $k$ in the rank metric is defined as the smallest 
    positive integer $t$ such that 
    $q^{m(n-k)} \leq B_t$, 
    where 
    $B_t := \sum_{j=0}^{t}\left(\prod_{\ell =0}^{j-1}(q^n - q^{\ell}) \right) \binom{m}{j}_q$ 
    is the size of the ball of radius $t$ in the rank metric.
    \end{definition}

    In other words, it means that, with overwhelming probability, as long as $w \leq w_{GV}(q,m,n,k)$, 
    a random \RSD instance will have at most a unique solution. In this paper, we also focus on a slightly less standard assumption which is the \NHRSD problem. This \RSD variant was proposed in the Second Round update of RQC \cite{AABBBBCDGHZ20} in order to mitigate the impact
    of the recent algebraic \RSD attacks \cite{BBBGNRT20,BBCGPSTV20} on the choice of the parameters. In \NHRSD, the error $\ev$ is no longer a random low weight vector but instead a vector with a \emph{non-homogeneous} weight:

    \begin{probe}[NHRSD Problem, Search]\label{problema:nhrsd}
      Given $\Hm \in \ff{q^m}^{(n+n_1) \times (2n+n_1)}$, a full rank 
      parity-check matrix of a random $\ff{q^m}$-linear code $\mathcal{C}$ 
      of parameters $[2n+n_1,n]_{q^m}$, integers $(w_1,w_2) \in \mathbb{N}^2$, 
      and a \ syndrome \ $\sv \in \Fqm^{n+n_1}$, \ the Non-Homogeneous Rank Syndrome 
      Decoding problem $\NHRSD(m,n,n_1,w_1,w_2)$ asks to find a vector  
      $\ev = (\ev_1,\ev_2,\ev_3) \in \ff{q^m}^{2n+n_1}$ 
      such that 
      $\norme{(\ev_1,\ev_3)} = w_1,~\norme{\ev_2} = w_1+w_2$, 
      $\Supp(\ev_1,\ev_3) \subset \Supp(\ev_2)$, 
      and such that $\Hm \trsp{\ev} = \trsp{\sv}$.
    \end{probe}
    We denote by $\DNHRSD$ the corresponding decisional version. Note that Definition \ref{problema:nhrsd} is slightly more general than the one of \cite{AABBBBCDGHZ20} where it is assumed that $n=n_1$. 
    Finally, recall that one of our improvements on the RQC scheme uses multiple syndromes 
    which are correlated since they correspond to errors which share the same support. 
    This formulation exactly corresponds to the definition of the Rank Support Learning problem 
    ($\RSL$ and $\DRSL$ for the decision version). This problem can be seen as the rank metric analogue of the Support-Learning problem in the Hamming metric \cite{KKS97,OT11}.
    \begin{probe}[$\RSL$ Problem, Search]\label{problema:rsl}	
      Given $(\mat{H},\mat{H\trsp{E}})$, where $\mat{H} \in \ff{q^m}^{(n-k) \times n}$ 
      is of full-rank, and $\mat{E} \in \ff{q^m}^{N \times n}$ has all its entries lying 
      in a subspace $\mathcal{V} \subset \ff{q^m}$ of dimension $w \in \mathbb{N}$, 
      the Rank Support Learning problem $\RSL(m,n,k,w,N)$ 
      asks to find the secret subspace $\mathcal{V}$.
    \end{probe}

    \RSL may enable the construction of more advanced cryptographic 
    primitives in the rank metric. It was introduced in \cite{GHPT17a}, and 
    is at the core of the Durandal signature scheme \cite{ABGHZ19}, 
    and the recent Multi-LRPC proposal \cite{LRPC2022}. 
    Naturally, it is possible to somehow combine the error distributions 
    from Problems \ref{problema:nhrsd} and \ref{problema:rsl}.

    \begin{probe}[$\NHRSL$ Problem, Search]\label{problema:nhrsl}
    	Given $(\mat{H},\mat{H\trsp{E}})$, 
      where $\mat{H}$ is a $(n+n_1) \times (2n+n_1)$ matrix of full rank, 
      and $\mat{E} \in \ff{q^m}^{N \times (2n+n_1)}$ such that 
      $\ev_i = \mat{E}_{i,*} = (\ev_{i,1},\ev_{i,2}, \allowbreak\ev_{i,3}) \in \ff{q^m}^{(2n+n_1)}$, 
      $\norme{(\ev_{i,1},\ev_{i,3})} = w_1,~\norme{\ev_{i,2}} = w_1+w_2$, 
      and such that the supports 
      $\mathcal{V} := \Supp(\ev_{i,1},\ev_{i,3}) \subset \mathcal{W} := \Supp(\ev_{i,2})$ 
      are independent of $i$; 
      the Non-Homogeneous Rank Support Learning problem 
      $\NHRSL(m, n, n_1, w_1, w_2, N)$ asks to find the secret subspaces $\mathcal{V}$ and $\mathcal{W}$.
    \end{probe}

    Finally, both classic RQC and our \schemeone proposal involve ideal codes, 
    so that we have to consider the ideal versions of Problems 
    \ref{problema:rsd}, \ref{problema:nhrsd}, \ref{problema:rsl}, 
    and \ref{problema:nhrsl} (denoted by, 
    respectively,  $\IRSD$, $\NHIRSD$, $\IRSL$, and $\NHIRSL$). 
    For the sake of conciseness, we do not give a formal definition 
    of these ideal variants. 

  \subsection{RQC scheme}

    On Figure \ref{fig:scheme-pke} we briefly recall 
    the classical RQC scheme \cite{AABBBBCDGHZ20}, 
    for which one needs the following notation:
     \begin{align*}
      \mathcal{S}_{w,1}^n(\Fqm) = & \ \{ \xv \in \Fqm^n : \norme{\xv} = w, 1 \in  \Supp(\xv)\},\\
      \mathcal{S}_{(w_1,w_2)}^{3n}(\Fqm) = & \ \{ \xv = (\xv_1,\xv_2,\xv_3) \in \Fqm^{3n} : \norme{(\xv_1,\xv_3)} = w_1, \norme{\xv_2} = w_1+w_2, \\
        & \quad \Supp(\xv_1,\xv_3) \subset \Supp(\xv_2)\}.
      \end{align*}

      \vspace{-2\baselineskip}

      \begin{figure}[!ht]
        \centering
        \fbox{
        \begin{minipage}{.965\textwidth}

          \underline{\Setup$(1^\lambda)$}: 
          Generates and outputs \param{} = $(n, k, \delta, w, w_1,w_2,P)$ 
          where $P \in \Fq[X]$ is an irreducible polynomial of degree $n$.

          \vspace{0.5\baselineskip}
          \underline{\KeyGen$(\param)$}: Samples $\mathbf{h} \lar \Fqm^n$, 
          $\mathbf{g} \lar \mathcal{S}_{n}^{n}(\Fqm)$ and 
          $(\mathbf{x}, \mathbf{y}) \lar \mathcal{S}_{w,1}^{2n}(\Fqm)$, 
          computes the generator matrix $\mathbf{G}\in \mathbb{F}_{q^m}^{k\times n}$ 
          of a code 
          $\mathcal{C}$, sets $\pk = \left(\mathbf{g}, \mathbf{h}, 
          \mathbf{s} = \mathbf{x+h\cdot y} \mod P \right)$ and 
          $\sk= \left(\mathbf{x}, \mathbf{y}\right)$, returns $(\pk, \sk)$.

          \vspace{0.5\baselineskip}
          \underline{\Encrypt$(\pk, \mathbf{m}, \theta)$}: 
          Uses randomness $\theta$ to generate $(\mathbf{r}_1, \ev, \mathbf{r}_2) 
          \lar \mathcal{S}_{(w_1,w_2)}^{3n}(\Fqm)$, sets $\mathbf{u} = 
          \mathbf{r}_1+\mathbf{h}\cdot\mathbf{r}_2 \mod P$ and $\mathbf{v} = 
          \mathbf{m}\mathbf{G} + \mathbf{s\cdot r}_2 + \mathbf{e} \mod P$, 
          returns $\mathbf{c} = \left(\mathbf{u}, \mathbf{v}\right)$.

          \vspace{0.5\baselineskip}
          \underline{\Decrypt$(\sk, \mathbf{c})$}: 
          Returns $\mathcal{C}$.$\mathsf{Decode}(\mathbf{v}-\mathbf{u\cdot y} \mod P)$.
        \end{minipage}
        }
        \caption{\label{fig:scheme-pke}Description the RQC PKE scheme.}
      \end{figure}

\section{Augmented Gabidulin code: a new family of efficiently decodable codes for cryptography}

  In what follows, we introduce a new family of efficiently decodable 
  codes, namely Augmented Gabidulin codes. 
  The main idea behind these codes is to add a sequence of zeros at the end 
  of the Gabidulin codes; by doing this, one directly gets elements of the support 
  of the error, which correspond to \emph{support erasure} in a rank metric context. 
  The decoding of this code corresponds to the decoding of a classical Gabidulin code 
  to which support erasures are added. In practice, this approach permits to decrease the 
  size of $m$ at the cost of having a probabilistic decoding. 
  The probability of decoding failure can then be easily controlled at
  the cost of sacrifying only a few support erasures; indeed, for these codes, 
  the decoding failure probability decreases exponentially fast, with a quadratic exponent, 
  see Equation (\ref{proba_augmented_gabi}).

  This approach is then especially suitable in the case where many errors have to 
  be corrected which exactly corresponds to our case where the code we want to decode has a very low rate. 
  
  In what follows, we give a definition of augmented Gabidulin codes, and 
  for didactic purpose, in Proposition \ref{prop:decoding_capacity}, 
  we recall a simple and natural way 
  to decode Gabidulin code with support erasures and we give their decoding 
  failure rate. 
  For other or more efficient approaches, the reader may refer to 
  \cite{augotLoidreau,couvreurBombar,gabidulin2008}.

  Notice that this type of approach (adding zeros) is not relevant in Hamming metric since 
  the errors are independent in a classical noisy canal, whereas in rank metric, errors 
  located on different coordinates are linked since they share the same support. 
  
  \begin{definition}[Augmented Gabidulin codes]\label{def:aug_code_gabidulin}
  Let $(k,n,n',m) \in \N^4$ such that $k \le n' < m < n$. Let
  $\gv = (g_1,\dots,g_{n'})$ be an $\Fq$- linearly independent family of $n'$ 
  elements of $\Fqm$ and let $\overline{\gv}$ be the vector of length $n$ which 
  is equal to $\gv$ padded with $n-n'$ extra zeros on the right. The {\em Augmented Gabidulin code} 
  $\mathcal{G}^+_{\overline{\gv}}(n,n',k,m)$ is
  the code of parameters $[n,k]_{q^m}$ defined by
  \[ \mathcal{G}^+_{\overline{\gv}}(n,n',k,m) := \left\lbrace P(\overline{\gv}),~\deg_q (P) < k \right\rbrace,\]
  where $P(\overline{\gv}) :=
  (P(g_1), \ldots, P(g_{n'}),0,\dots,0)$.

  \end{definition}

  \begin{proposition}[Decoding capacity of Augmented Gabidulin codes]\label{prop:decoding_capacity}
    Let $\mathcal{G}^+_{\overline{\gv}}(n,n',k,m)$ be an augmented 
    Gabidulin code, and let $$\varepsilon \in \{1,2,\hdots,\operatorname{min}(n-n',n'-k)\}$$ 
    be the dimension of the vector space 
    generated by the support erasures. 
    
    Then, $\mathcal{G}^+_{\overline{\gv}}(n,n',k,m)$ can uniquely decode an error of 
    rank weight up to 
    $$t:=\left\lfloor \frac{n'-k+\varepsilon}{2} \right\rfloor.$$ 
  \end{proposition}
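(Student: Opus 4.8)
The plan is to reduce the decoding of an augmented Gabidulin code to a classical Gabidulin decoding problem of adjusted parameters, where the ``adjustment'' is precisely the dimension $\varepsilon$ of the support-erasure space. Concretely, suppose we receive $\yv = \cv + \ev$ where $\cv = P(\overline{\gv})$ with $\deg_q(P) < k$ and $\norme{\ev} = t$. Write $\ev = (\ev', \ev'')$ according to the split of $\overline{\gv}$ into its first $n'$ nonzero positions and its last $n-n'$ zero positions. On the last $n-n'$ coordinates, the codeword contributes $0$, so $\yv'' = \ev''$ is observed \emph{directly}; hence we know exactly the $\Fq$-space $E'' := \Supp(\ev'')$, whose dimension I will call $\varepsilon$. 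Let $E := \Supp(\ev) \supseteq E''$, so $\dim_{\Fq} E = t$. The first step is therefore to make precise the statement ``$\varepsilon$ is the dimension of the space generated by the support erasures'' as $\varepsilon = \dim_{\Fq}\Supp(\ev'')$, and to note $\varepsilon \le \min(n-n', n'-k)$ is what makes the argument go through (the $n-n'$ bound is trivial; the $n'-k$ bound is what we need for a non-degenerate interpolation below).

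**The core reduction.** Given $E''$ of dimension $\varepsilon$, I would pick an $\Fq$-basis $u_1,\dots,u_\varepsilon$ of $E''$ and form the monic $q$-polynomial $L$ of $q$-degree $\varepsilon$ vanishing exactly on $E''$ (existence and uniqueness by the theorem of Ore \cite{ore1933special} quoted above). Now apply $L$ coordinatewise to the received word on the first $n'$ positions: since $L$ is $\Fq$-linear, $L(\yv') = L(P(\gv)) + L(\ev') = (L\circ P)(\gv) + L(\ev')$. The key observations are: (i) $L\circ P$ is a $q$-polynomial of $q$-degree $< k + \varepsilon$, so $(L\circ P)(\gv)$ is a codeword of the \emph{classical} Gabidulin code $\mathcal{G}_{\gv}(n', k+\varepsilon, m)$ — which is well-defined precisely because $k + \varepsilon \le n' \le m$; and (ii) the new error $L(\ev')$ has support contained in $L(E)$. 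Because $L$ kills the $\varepsilon$-dimensional subspace $E''$ of the $t$-dimensional space $E$, we get $\dim_{\Fq} L(E) \le t - \varepsilon$, hence $\norme{L(\ev')} \le t - \varepsilon$. So $L(\yv')$ is a corrupted codeword of $\mathcal{G}_{\gv}(n', k+\varepsilon, m)$ with error weight at most $t - \varepsilon$.

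**Finishing.** A classical Gabidulin code of length $n'$ and dimension $k+\varepsilon$ corrects up to $\lfloor (n' - (k+\varepsilon))/2 \rfloor$ rank errors, so unique decoding of $L(\yv')$ succeeds as long as
\[
t - \varepsilon \le \left\lfloor \frac{n' - k - \varepsilon}{2} \right\rfloor,
\]
which rearranges to $t \le \lfloor (n' - k + \varepsilon)/2 \rfloor$, the claimed bound. Decoding recovers the polynomial $Q := L\circ P$ of $q$-degree $< k+\varepsilon$; since $L$ is known and the ring of $q$-polynomials has no zero divisors, one performs a left (Euclidean) division of $Q$ by $L$ in the $q$-polynomial ring to recover $P$ exactly — the remainder must be zero because $Q$ factors through $L$ by construction. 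This yields $\cv = P(\overline{\gv})$ and hence $\ev$, establishing that the error is uniquely determined. For \emph{uniqueness} of the decoded word (as opposed to mere existence of a decoding procedure), I would additionally remark that two distinct codewords within rank-distance $t$ of $\yv$ would, after applying $L$ and using that the minimum distance of $\mathcal{G}_{\gv}(n',k+\varepsilon,m)$ is $n'-k-\varepsilon+1$, force $2(t-\varepsilon) < n' - k - \varepsilon + 1$, i.e. $2t \le n' - k + \varepsilon$, which is exactly our hypothesis — but one must be slightly careful here since applying $L$ is not injective on codewords a priori; the cleaner route is to argue uniqueness directly from the original code, whose minimum rank distance one checks is at least $n' - k + 1 \ge n' - k - \varepsilon + 1 \ge 2(t-\varepsilon)+1$...

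**Main obstacle.** The subtle point — and the part I expect to need the most care — is the interplay between the two notions attached to $\varepsilon$: on one hand it is a hypothesis on the received word (the erasures we actually observe span an $\varepsilon$-dimensional space), and on the other hand the bound must hold for the \emph{worst case} over errors with that erasure dimension. One must verify that $\dim L(E) = t-\varepsilon$ exactly (not just $\le$) generically, or else phrase the proposition so that $\varepsilon$ is genuinely $\dim\Supp(\ev'')$ rather than an a-priori design parameter; and one must make sure the composed code $\mathcal{G}_{\gv}(n',k+\varepsilon,m)$ does not degenerate, which is exactly why the hypothesis $\varepsilon \le n'-k$ (equivalently $k+\varepsilon \le n'$) appears. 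The rest is bookkeeping with $q$-polynomial arithmetic, which behaves formally like ordinary polynomial arithmetic for left-division purposes.
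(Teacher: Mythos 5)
Your argument is correct in substance but takes a genuinely different route from the paper. The paper proves this proposition by a pure minimum-distance argument: if $\yv$ admitted two decodings $\cv_1+\ev_1=\cv_2+\ev_2$ with $\norme{\ev_i}\le t$, then, because the last $n-n'$ coordinates of $\yv$ are literally equal to the corresponding coordinates of \emph{both} errors, the two error supports share the $\varepsilon$-dimensional erasure space, whence $\norme{\ev_1-\ev_2}\le 2t-\varepsilon\le n'-k$, contradicting the minimum distance $n'-k+1$ of the code. The constructive content of your write-up --- annihilate the erasure space and reduce to a shorter-dimension problem --- is what the paper defers to its Proposition 2 (the decoding algorithm), where it is carried out by solving a joint linear system in the coefficients of $W$ and $R=V\circ P$ rather than by invoking a decoder for $\mathcal{G}_{\gv}(n',k+\varepsilon,m)$ as a black box. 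Your reduction buys an explicit algorithm and a transparent explanation of why the constraint $\varepsilon\le n'-k$ must appear; the paper's argument is shorter and cleanly separates the combinatorial statement (unique decodability) from the algorithmic one.

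Two loose ends remain in your version. First, the injectivity concern you flag is real but closes in one line: if $(L\circ P_1)(\gv)=(L\circ P_2)(\gv)$, then $L\circ(P_1-P_2)$ is a $q$-polynomial of $q$-degree $<k+\varepsilon\le n'$ vanishing on the $n'$-dimensional space $\langle g_1,\dots,g_{n'}\rangle_{\Fq}$, hence it is the zero polynomial, and since the ring of $q$-polynomials has no zero divisors, $P_1=P_2$. With this, unique decoding in $\mathcal{G}_{\gv}(n',k+\varepsilon,m)$ (guaranteed by $2(t-\varepsilon)\le n'-k-\varepsilon$, i.e.\ by the hypothesis on $t$) does pull back to uniqueness for the original word, so your main route is sound. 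Second, your trailing attempt to argue uniqueness ``directly from the original code'' compares the minimum distance $n'-k+1$ with $2(t-\varepsilon)+1$; that is the threshold for the \emph{reduced} code, not the original one. In the original code the two competing errors each have weight up to $t$ (not $t-\varepsilon$) and merely share an $\varepsilon$-dimensional subspace of their supports, so the correct bound is $\norme{\ev_1-\ev_2}\le 2t-\varepsilon$, to be compared with $n'-k+1$ --- which is exactly the paper's computation and again reduces to the same hypothesis $2t\le n'-k+\varepsilon$.
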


  \begin{proof}
    The minimal distance of $\mathcal{G}^+_{\overline{\gv}}(n,n',k,m)$ is clearly 
    $d=n'-k+1$ since it is made of a Gabiudlin code augmented with zeros. 

    Let $x=c_1+e_1$ be a noisy codeword where $c_1 \in \mathcal{G}^+_{\overline{\gv}}$, 
    and $\norme{e_1}\le t$.

    Let us assume that $x$ is not uniquely decodable to find a contradiction. 
    
    If $x$ is not uniquely decodable, it means that there exists $c_2$ in 
    $\mathcal{G}^+_{\overline{\gv}}$ such that $c_2 \ne c_1$ and $x=c_2+e_2$ 
    where $\norme{e_2}\le t$. 

    Recall that we assume that one knows support erasures which span a vector space 
    of dimension $\varepsilon$. These support erasures come from the $n-n'$ last 
    coordinates of the code $\mathcal{G}^+_{\overline{\gv}}$, thus these support elements 
    are common to $e_1$ and $e_2$. 
    Since $\operatorname{Supp}(e_1)$ and $\operatorname{Supp}(e_2)$ share $\varepsilon$ 
    elements, one has that 
    $$\operatorname{d}(e_1,e_2) \le 2(t-\varepsilon)+\varepsilon = 2t-\varepsilon \le n'-k.$$
    
    Since $x=c_1+e_1=c_2+e_2$, one clearly has that 
    $\operatorname{d}(c_1,c_2)=\operatorname{d}(e_1,e_2)$, thus $\operatorname{d}(c_1,c_2)\le n'-k$, 
    which is a contradiction. 

    Thus, $\mathcal{G}^+_{\overline{\gv}}$ can uniquely decode errors of rank weight up to 
    $t:=\left\lfloor \frac{n'-k+\varepsilon}{2} \right\rfloor$.

    Finally, the condition $1 \le \varepsilon \le \operatorname{min}(n-n',n'-k)$ 
    comes from the fact that the dimension 
    of the vector space spanned by support erasures can not exceed the maximum 
    rank weight of the error nor the number of zero coordinates of the 
    augmented Gabidulin code; in other words, on one hand $\varepsilon$ is clearly smaller than 
    $n-n'$, and on the other hand 
    \begin{align*}
      \varepsilon   \le \left\lfloor \frac{n'-k+\varepsilon}{2} \right\rfloor 
        & \Longrightarrow 2\varepsilon   \le n'-k+\varepsilon \\ 
        & \Longrightarrow \varepsilon   \le n'-k.
    \end{align*}
    \qed
  \end{proof}

  \begin{proposition}[Decoding Algorithm for Augmented Gabidulin codes]\label{prop:decoding_algo}
    Let $\mathcal{G}^+_{\overline{\gv}}(n,n',k,m)$ be an augmented 
    Gabidulin code, and let $$\varepsilon \in \{1,2,\hdots,\operatorname{min}(n-n',n'-k)\}$$ 
    be the dimension of the vector space 
    generated by the support erasures. 

    This code benefits from an 
    efficient decoding algorithm correcting errors of rank weight up to 
    $\delta := \left\lfloor \frac{n'-k+\varepsilon}{2} \right\rfloor$ 
    with a decryption failure rate (DFR) of 
    \begin{equation}\label{proba_augmented_gabi}    
    1-\frac{1}{\delta(n-n')}\sum_{i=\varepsilon}^{\delta}
    \prod_{j=0}^{\varepsilon-1}
    \frac{(q^\delta-q^j)(q^{n-n'}-q^j)}{q^\varepsilon-q^j}. 
    \end{equation}
  \end{proposition}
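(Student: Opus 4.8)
The plan is to treat the statement in two parts: first exhibit a polynomial-time decoder and show it corrects every error of rank weight at most $\delta$ as soon as the realised support erasure has dimension at least $\varepsilon$, then evaluate the probability of the complementary failure event. \emph{The decoder.} Let $x=c+e$ with $c=P(\overline{\gv})\in\mathcal{G}^+_{\overline{\gv}}(n,n',k,m)$ and $\norme{e}\le\delta$. Since every codeword of $\mathcal{G}^+_{\overline{\gv}}(n,n',k,m)$ vanishes on its last $n-n'$ coordinates, the decoder first reads off $(x_{n'+1},\dots,x_n)=(e_{n'+1},\dots,e_n)$ and forms the $\Fq$-space $E':=\langle x_{n'+1},\dots,x_n\rangle_{\Fq}\subseteq\Supp(e)$, of dimension $\varepsilon':=\dim_{\Fq}E'$; note that truncating $\mathcal{G}^+_{\overline{\gv}}(n,n',k,m)$ to its first $n'$ positions gives back exactly the Gabidulin code $\mathcal{G}_{\gv}(n',k,m)$ (recall $k\le n'$). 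It then runs a standard Gabidulin error-and-erasure decoder on $(x_1,\dots,x_{n'})$, treating $E'$ as $\varepsilon'$ known support erasures (see \cite{augotLoidreau,couvreurBombar,gabidulin2008}). Both steps run in polynomial time, which settles the efficiency claim.

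\emph{Correctness and the failure event.} Reading $E'$ as $\varepsilon'$ support erasures, Proposition~\ref{prop:decoding_capacity} (with $\varepsilon$ replaced by $\varepsilon'$) guarantees that any error of rank weight at most $\lfloor(n'-k+\varepsilon')/2\rfloor$ is uniquely decodable, and the error-and-erasure decoders of \cite{augotLoidreau,couvreurBombar,gabidulin2008} recover it in polynomial time. Since $t\mapsto\lfloor(n'-k+t)/2\rfloor$ is nondecreasing, the event $\{\varepsilon'\ge\varepsilon\}$ forces $\lfloor(n'-k+\varepsilon')/2\rfloor\ge\delta\ge\norme{e}$, so decoding succeeds whenever $\varepsilon'\ge\varepsilon$. (When $n'-k+\varepsilon$ is odd, $\varepsilon'=\varepsilon-1$ already suffices; discarding this only makes what follows a conservative upper bound, which is what one wants from a DFR.) Hence the decoder fails only on the event $\{\varepsilon'<\varepsilon\}$, and it remains to bound $\Pr[\varepsilon'<\varepsilon]$ in the worst case $\norme{e}=\delta$ (for smaller weights the failure probability is only smaller).

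\emph{Failure probability.} The modelling step, exactly as in the DFR analyses for LRPC-type codes \cite{GMRZ13,LRPC2022}, is that $E:=\Supp(e)$ is a uniformly random $\delta$-dimensional $\Fq$-subspace of $\Fqm$ and that, conditioned on $E$, the block $(e_{n'+1},\dots,e_n)$ is uniform in $E^{n-n'}$. Fixing an $\Fq$-basis of $E$ identifies this block with a uniformly random matrix in $\Fq^{\delta\times(n-n')}$, whose rank is exactly $\varepsilon'$. Combining the classical count of rank-$i$ matrices over $\Fq$ with the Gaussian-binomial identity $\binom{\delta}{i}_q=\prod_{j=0}^{i-1}\frac{q^\delta-q^j}{q^i-q^j}$ gives
\[
\Pr[\varepsilon'=i]=\frac{1}{q^{\delta(n-n')}}\prod_{j=0}^{i-1}\frac{(q^\delta-q^j)(q^{n-n'}-q^j)}{q^i-q^j},
\]
and summing over $i=\varepsilon,\dots,\delta$ (the terms with $i>n-n'$ vanish on their own, as the factor $q^{n-n'}-q^{n-n'}=0$ then appears in the product) yields $\Pr[\varepsilon'\ge\varepsilon]$; hence the DFR, equal to $1-\Pr[\varepsilon'\ge\varepsilon]$, is exactly Equation~(\ref{proba_augmented_gabi}).

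\emph{Main obstacle.} The Gaussian-binomial bookkeeping is routine; the genuinely delicate points are (i) justifying the probabilistic model for the last $n-n'$ coordinates of the word actually handed to the decoder --- in the encryption application that word is a structured combination of the scheme's secret and ephemeral vectors rather than a uniform weight-$\delta$ word, so this is a heuristic and should be flagged as such --- and (ii) checking that nothing is hidden in the ``efficient decoding algorithm'' clause, i.e. that the standard Gabidulin error-and-erasure decoder provably attains the radius $\lfloor(n'-k+\varepsilon')/2\rfloor$ and that supplying it with $\min(\varepsilon',n'-k)$ erased support directions is sufficient.
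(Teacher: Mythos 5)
Your proof follows the same route as the paper's: read the support erasures off the last $n-n'$ coordinates, reduce to error-and-erasure decoding of the underlying Gabidulin code (the paper spells this out via the linear system $V(y_i)=R(g_i)$ with $V=W\circ V_2$, where you instead cite standard decoders), and identify the failure event with a random $\delta\times(n-n')$ matrix over $\Fq$ having rank below $\varepsilon$. One point worth noting: your explicit rank-distribution formula $\Pr[\varepsilon'=i]=q^{-\delta(n-n')}\prod_{j=0}^{i-1}\frac{(q^\delta-q^j)(q^{n-n'}-q^j)}{q^i-q^j}$ is in fact the corrected form of Equation~\eqref{proba_augmented_gabi}, whose printed version has a summand that does not depend on the summation index $i$ and a normalization $\delta(n-n')$ where $q^{\delta(n-n')}$ is clearly intended, so your version is the one to trust.
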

  \begin{proof}
    The proof gives the decoding algorithm. One is given a noisy encoded word 
  	$\overline{\yv}=\overline{\cv}+\overline{\ev} \in \fqm^n$ 
  	where $\overline{\cv}:=\xv\Gm$ belongs to 
  	$\mathcal{G}^+_{\overline{\gv}}(n,n',k,m)$ and 
  	$\norme{\ev} \le \delta$.

  	\paragraph{Step 1: recovering a part of the error support.} 
    By construction we have $\overline{\cv}=(*|0\hdots0)$, so that
  	the last $n-n'$ coordinates of $\yv$ are exactly the last  
  	coefficients of $\overline{\ev}$. 
  	Thus, one may use these coefficients to 
  	recover $\varepsilon$ elements in  $E := \text{Supp}(\overline{\ev})$. This will be doable as long as these 
  	$n-n'$ coefficients contain at least $\varepsilon$ linearly independent ones. The converse probability is the probability that  
  	a random $\delta \times (n-n')$ matrix with coefficients in $\fq$ has rank less than 
  	$\varepsilon$. This yields to the probability given by Equation (\ref{proba_augmented_gabi}).

  	\paragraph{Step 2: recovering $\mathbf{\overline{c}}$.} Assume now that $\varepsilon$ elements in the support of $\overline{\ev}$ are known and let $E_2$ be the vector space spanned by these elements. In what follows, we focus on the first $n'$ coordinates of 
  	$\overline{\yv},\overline{\cv},$ and $\overline{\ev}$ which are denoted by
  	$\yv, \cv$ and $\ev$ respectively. 
    By definition of $\mathcal{G}^+_{\overline{\gv}}(n,n',k,m)$, there exists a $q$-polynomial $P$ of $q$-degree at most $k-1$ such that for $1 \leq i \leq n'$:
  	\begin{equation}\label{eq:decodage_1}
  		y_i=P(g_i)+e_i.
  	\end{equation} 
  	Let also $V$ and $V_2$ be the unique monic $q$-polynomials of $q$-degree $\delta$ and $\varepsilon$ which vanish on the vector spaces $E$ and $E_2$ respectively. The ring of $q$-polynomials being left Euclidean, there exists a unique monic $q$-polynomial $W$ of degree $\delta-\varepsilon$ such that 
  	$	V=W \circ V_2$. As $E_2$ is known, one can easily build the q-polynomial $V_2$, for instance using the iterative process described in \cite{ore1933special,L06}. Evaluating $V$ at both sides of Equation (\ref{eq:decodage_1}), one gets 
  	$V(y_i)=(V \circ P)(g_i)+V(e_i) = V \circ P(g_i)$. This secret polynomial can be written symbolically using $\delta-\varepsilon$ unknowns in $\fqm$, and similarly we view $R:=V \circ P$ as a $q$-polynomial of $q$-degree $k-1+\delta$ with unknown coefficients. Thus, we can derive a linear equation containing $k+2\delta-\varepsilon$ unknowns in $\fqm$ from 
  	\begin{equation}\label{eq:decodage_2}
  		V(y_i)=R(g_i),
  	\end{equation}
  	and the same goes for any $i \in \{1,2,\hdots n'\}$. Overall, this gives a linear system with 
  	$n'$ equations in $k+2\delta-\varepsilon$ variables. 
    This linear system has more equations than unknowns as long as 
    $
      \delta \le \left\lfloor \frac{n'-k+\varepsilon}{2} \right\rfloor,
    $ 
    which is the case by assumption. Moreover, this system has a unique 
    solution by Proposition \ref{prop:decoding_capacity}. 
    This means that exactly $k+2\delta-\varepsilon$ equations are linearly 
    independent, thus one can solve the system to 
    recover $V$ and $R$, so one finally gets $P$.
    \qed
  \end{proof}

\section{New Rank-based Encryption Schemes}

\subsection{\schemeone scheme} \label{sec:scheme-multi}

Our new encryption scheme denoted \schemeone stands for RQC with multiple syndromes.
Indeed, it uses several syndromes $\mathbf{U}$ and $\mathbf{V}$ which differ from the original RQC proposal that relies on unique syndromes $\mathbf{u}$ and $\mathbf{v}$.
As a consequence, our new scheme is based on the $\IRSL$ problem which can be seen as a generalization of the $\IRSD$ problem used by RQC.

\vspace{\baselineskip}
\noindent \textbf{Notations.} We start by introducing several sets and operators required to define the \schemeone scheme.
Let $\mathcal{S}_{w,1}^{2n}(\Fqm)$ and $\mathcal{S}_{(w_1,w_2)}^{n_2 \times 3 n_1}(\Fqm)$ be defined as:
\begin{align*}
  \mathcal{S}_{w,1}^{2n}(\Fqm) = & \ \{ \xv = (\xv_1, \xv_2) \in \Fqm^{2n} \,\big|\, \norme{\xv} = w, \ 1 \in  \Supp(\xv)\}, \\
  \mathcal{S}_{(w_1,w_2)}^{n_2 \times 3 n_1}(\Fqm) = & \ \{ 
      \mathbf{X} = (\mathbf{X}_1,\mathbf{X}_2,\mathbf{X}_3) \in \Fqm^{n_2 \times 3 n_1} \,\big|\,
      \norme{(\mathbf{X}_1,\mathbf{X}_3)} = w_1, \\
      & \norme{\mathbf{X}_2} = w_1+w_2, ~ \Supp(\mathbf{X}_1,\mathbf{X}_3) \subset \Supp(\mathbf{X}_2)\}.
\end{align*}

\noindent Let $n_1, n_2$ be positive integers such that $n=n_1 \times n_2$,  
for a vector $\mathbf{v} \in \Fqm^{n_2}$ and a matrix 
$\mathbf{M} \in \Fqm^{n_2 \times n_1}$ whose columns are 
labelled $\mathbf{M}_1, \hdots, \mathbf{M}_{n_1}$, 
we extend the aforementioned 
dot product such that:
\vspace{-0.25\baselineskip}
\begin{multline*}
\mathbf{v}\cdot\mathbf{M} =
\left(
  (\mathbf{v}\cdot\mathbf{M}^{\intercal}_1 \mod P)^\intercal, \ 
\hdots, \ 
  (\mathbf{v}\cdot\mathbf{M}^{\intercal}_{n_1} \mod P)^\intercal
\right) \in \Fqm^{n_2 \times n_1},
\end{multline*}

\noindent Let $\mathbf{v} = (\mathbf{v}_1,\hdots,\mathbf{v}_{n_1}) \in \Fqm^n$ 
with $\mathbf{v}_i \in \Fqm^{n_2} ~ \forall i \in \{1,\hdots,n_1\}$,
the $\fold()$ procedure turns the vector $\vv$ into a $n_2 \times n_1$ matrix 
$\fold(\mathbf{v})= (\mathbf{v}_1^\intercal,\hdots,
\mathbf{v}_{n_1}^\intercal) \in \Fqm^{n_2 \times n_1}.$
The procedure $\unfold()$ is naturally defined as the converse 
of $\fold()$.

\vspace{\baselineskip}
\noindent \textbf{Protocol.}
The \schemeone is described on Figure~\ref{fig:scheme-one}. 
It relies on two codes namely an augmented Gabidulin code 
$\mathcal{G}^+_{\overline{\gv}}(n, n', k, m)$ that can correct up to 
$\delta := \left\lfloor \frac{n'-k+\varepsilon}{2} \right\rfloor$
errors using the efficient decoding algorithm 
$\mathcal{G}^+_{\overline{\gv}}.\mathsf{Decode}(.)$ as well as
a random ideal $[2n_2,n_2]_{\fqm}$-code with 
parity check matrix $(\mathbf{I} ~~ \mathcal{IM}(\mathbf{h}))$.
The correctness of the protocol follows from:
\begin{align*}
\mathbf{V}-\mathbf{y}\cdot \mathbf{U} &= 
\fold(\mathbf{m}\mathbf{G}) + 
(\mathbf{x} + \mathbf{h} \cdot \mathbf{y}) \cdot \mathbf{R}_2 + 
\mathbf{E} - 
\mathbf{y} \cdot (\mathbf{R}_1 + \mathbf{h} \cdot \mathbf{R}_2) \\
&= 
\fold(\mathbf{m}\mathbf{G})
+ \mathbf{x} \cdot \mathbf{R}_2
- \mathbf{y} \cdot \mathbf{R}_1
+ \mathbf{E}.
\end{align*}

\noindent As a consequence, $\unfold\left(\mathbf{V}-\mathbf{y}\cdot\mathbf{U}\right)=
\mathbf{m}\mathbf{G}+
\unfold\left(
\mathbf{x} \cdot \mathbf{R}_2
- \mathbf{y} \cdot \mathbf{R}_1
+ \mathbf{E} \right) \in \Fqm^n
$
which means that $
\mathcal{G}_{\gv}.\mathsf{Decode}\left(
\unfold\left(
\mathbf{V}-
\mathbf{y}\cdot\mathbf{U}
\right)\right)=\mathbf{m}
$
as long as: \[\norme{ \,
\unfold\left(
\mathbf{x} \cdot \mathbf{R}_2
- \mathbf{y} \cdot \mathbf{R}_1
+ \mathbf{E} \right)
} \le \delta.\]

\vspace{-1.5\baselineskip}

\begin{figure}[!ht]
\centering
\fbox{
\begin{minipage}{.965\textwidth}

\underline{\Setup$(1^\lambda)$} \\[0.25\baselineskip]
Generate and output the parameters \param{} = 
$(n', n_1, n_2, k, \epsilon, \delta, w, w_1,w_2,P)$ where 
$P~\in\Fq[X]$ is an irreducible polynomial of degree $n_2$.

\vspace{\baselineskip}
\underline{\KeyGen$(\param)$}: \\[0.25\baselineskip]
Sample $\mathbf{g} \lar \mathcal{S}^{n'}_{n'}(\Fqm)$, $\mathbf{h} \lar \Fqm^{n_2}$ and 
$(\mathbf{x}, \mathbf{y}) \lar \mathcal{S}^{2n_2}_{w, 1}(\Fqm)$ \\[0.25\baselineskip]
Compute $\mathbf{s} = \mathbf{x + h \cdot y} \mod P$ \\[0.25\baselineskip]
Output $\pk = \left( \mathbf{g}, \mathbf{h}, \mathbf{s} \right)$ 
and $\sk = \left( \mathbf{x}, \mathbf{y} \right)$

\vspace{\baselineskip}
\underline{\Encrypt$(\pk, \mathbf{m}, \theta)$}: \\[0.25\baselineskip]
Compute $\overline{\mathbf{g}}=(\mathbf{g} \, | \, 0 \hdots 0) \in \Fqm^{n_1 n_2}$ \\[0.25\baselineskip]
Compute the generator matrix $\mathbf{G} \in \mathbb{F}_{q^m}^{k\times (n_1 n_2)}$ of $\mathcal{G}^+_{\overline{\gv}}(n_1 n_2, n', k, m)$ \\[0.25\baselineskip]
Sample $(\mathbf{R}_1, \mathbf{E}, \mathbf{R}_2) \lar \mathcal{S}^{n_2 \times 3 n_1}_{w_1, w_2}(\Fqm)$ using randomness $\theta$ \\[0.25\baselineskip]
Compute $\mathbf{U} = \mathbf{R}_1 + \mathbf{h} \cdot \mathbf{R}_2$ and 
$\mathbf{V} = \fold(\mathbf{m}\mathbf{G}) + \mathbf{s} \cdot \mathbf{R}_2 + \mathbf{E}$ \\[0.25\baselineskip]
Output $\mathbf{C} = \left( \mathbf{U}, \mathbf{V} \right)$

\vspace{\baselineskip}
\underline{\Decrypt$(\pk, \sk, \mathbf{C})$}: \\[0.25\baselineskip]
Output $\mathbf{m} = \mathcal{G}^+_{\overline{\gv}}.\mathsf{Decode}\left( \unfold \left( \mathbf{V} - \mathbf{y}\cdot\mathbf{U} \right) \right)$

\end{minipage}
}
\caption{\label{fig:scheme-one} \schemeone encryption scheme}
\end{figure}

\vspace{-1.5\baselineskip}

\begin{theorem}\label{thm:scheme-one}
  The \schemeone scheme depicted in Figure~\ref{fig:scheme-one} is $\INDCPA$ under the $\DIRSD$ and 
  the $\DNHIRSL$ assumptions. 
\end{theorem}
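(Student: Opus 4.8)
The plan is to follow the classical game-hopping structure used for HQC and RQC, rewriting the public key and the ciphertext as instances of the decision problems in the statement. I would set up a sequence of games, starting from the real $\INDCPA$ experiment, and argue that consecutive games are indistinguishable under $\DIRSD$ and $\DNHIRSL$ respectively; in the final game the adversary's view is independent of the encrypted message $\mathbf{m}$, so its advantage is $0$.

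\textbf{Step 1 (replace the public key).} In the real scheme, $\mathbf{s} = \mathbf{x} + \mathbf{h}\cdot\mathbf{y} \bmod P$ with $(\mathbf{x},\mathbf{y}) \lar \mathcal{S}^{2n_2}_{w,1}(\Fqm)$, which is exactly a (homogeneous, ideal) $\IRSD$-type syndrome: $(\mathbf{h},\mathbf{s})$ has the distribution of the parity-check data of a random ideal $[2n_2,n_2]_{\fqm}$ code together with a low-weight syndrome. The first game hop replaces $\mathbf{s}$ by a uniformly random element of $\Fqm^{n_2}$. Distinguishing this hop from the real game is precisely a $\DIRSD$ distinguisher, so the advantage gap is bounded by $\adv^{\DIRSD}$. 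Care must be taken with the side condition $1 \in \Supp(\mathbf{x},\mathbf{y})$ and the ideal structure, but these are standard: the condition $1 \in \Supp$ only costs a negligible/constant factor (it is the same normalization used throughout RQC), and $\DIRSD$ is stated for exactly this ideal setting.

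\textbf{Step 2 (replace the ciphertext).} Once $\mathbf{s}$ is uniform, look at the ciphertext $\mathbf{C} = (\mathbf{U},\mathbf{V})$ with $\mathbf{U} = \mathbf{R}_1 + \mathbf{h}\cdot\mathbf{R}_2$ and $\mathbf{V} = \fold(\mathbf{m}\mathbf{G}) + \mathbf{s}\cdot\mathbf{R}_2 + \mathbf{E}$, where $(\mathbf{R}_1,\mathbf{E},\mathbf{R}_2) \lar \mathcal{S}^{n_2\times 3n_1}_{w_1,w_2}(\Fqm)$. Arranging the $n_1$ columns side by side, the pair $(\mathbf{U},\mathbf{V} - \fold(\mathbf{m}\mathbf{G}))$ is exactly $N = n_1$ syndromes of the non-homogeneous error $(\mathbf{R}_1,\mathbf{E},\mathbf{R}_2)$ under the (now uniformly random) ideal parity-check matrix built from $(\mathbf{I},\,\mathcal{IM}(\mathbf{h}),\,\mathcal{IM}(\mathbf{s}))$ — i.e. an $\NHIRSL$ instance with parameters $(m, n_2, n_2, w_1, w_2, n_1)$ (here $n_1 = n_1$ both as the "$n_1$" of the non-homogeneous split and as the number of syndromes; one checks the dimensions match the problem definition). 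Hence the second game hop replaces $(\mathbf{U},\mathbf{V} - \fold(\mathbf{m}\mathbf{G}))$ by a uniformly random pair, and distinguishing it is a $\DNHIRSL$ distinguisher, giving a gap bounded by $\adv^{\DNHIRSL}$.

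\textbf{Step 3 (conclude).} In the last game $\mathbf{U}$ is uniform and $\mathbf{V} = \fold(\mathbf{m}\mathbf{G}) + (\text{uniform})$, so $\mathbf{V}$ is uniform and independent of $\mathbf{m}$; the adversary's view carries no information about which of $\mathbf{m}_0,\mathbf{m}_1$ was encrypted, and its advantage is $0$. Summing the two hops gives
\[
\adv^{\INDCPA}_{\schemeone}(\Ae) \;\le\; \adv^{\DIRSD}(\B_1) + \adv^{\DNHIRSL}(\B_2),
\]
for PPT reductions $\B_1,\B_2$. \textbf{The main obstacle} I anticipate is bookkeeping the reduction in Step 2: one must verify that the tuple $(\mathbf{I},\mathcal{IM}(\mathbf{h}),\mathcal{IM}(\mathbf{s}))$ with $\mathbf{s}$ uniform really does present as a uniformly random full-rank ideal parity-check matrix of the shape required by $\NHIRSL$ (using the irreducibility of $P$ and primality of $n_2,m$ from \cite[Lemma 1]{AABBBBCDGHZ20}), and that the error distribution $\mathcal{S}^{n_2\times 3n_1}_{w_1,w_2}$ matches the matrix-error distribution of $\mathbf{E}$ in Problem \ref{problema:nhrsl} with $N = n_1$ columns — in particular that folding the three blocks $(\mathbf{R}_1,\mathbf{E},\mathbf{R}_2)$ into the $(2n+n_1)$-column layout of the $\NHRSL$ definition is a distribution-preserving reindexing. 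Once these identifications are pinned down, the rest is the routine HQC/RQC template.
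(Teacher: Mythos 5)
Your proposal is correct and follows essentially the same route as the paper: the paper's own proof is a two-line sketch that defers to the RQC proof of \cite{AABBBBCDGHZ20} and simply exhibits the same two instances you identify, namely the $\IRSD(m,2n_2,n_2,w)$ instance $(\mathbf{I}_{n_2}\ \ \mathcal{IM}(\mathbf{h}))\trsp{(\mathbf{x},\mathbf{y})}=\trsp{\mathbf{s}}$ for the public key and the $\NHIRSL(m,n_2,n_2,w_1,w_2,n_1)$ instance on the $[3n_2,n_2]$ code for the ciphertext. Your game-hopping write-up, including the final bound $\adv^{\INDCPA}\le\adv^{\DIRSD}+\adv^{\DNHIRSL}$ and the distributional bookkeeping you flag in Step 2, is just the expanded form of what the paper leaves implicit.
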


\begin{proof}\label{proof-scheme-one}
  The proof of the \schemeone scheme is similar to the proof from \cite{AABBBBCDGHZ20} with an $\IRSD(m, 2n_2, n_2, \omega)$ instance defined from a $[2n_2, n_2]$ code and an $\NHIRSL(m, n_2, n_2, \omega_1, \omega_2, n_1)$ instance defined from a $[3n_2, n_2]$ code. These instances are defined by the following products:
\end{proof}

\vspace{-1\baselineskip}

$$
\left(\begin{matrix}
  \mathbf{I}_{n_2} & ~ \IM{\mathbf{h}} \\
\end{matrix}\right)
\times
\left(\mathbf{x}, \mathbf{y}\right)^\intercal
=
\mathbf{s}^\intercal,
$$
$$
\left(\begin{matrix}
  \mathbf{I}_{n_2} & ~ \mathbf{0}       & ~ \IM{\mathbf{h}} \\
  \mathbf{0}       & ~ \mathbf{I}_{n_2} & ~ \IM{\mathbf{s}}
\end{matrix}\right)
\times
\left(\mathbf{R}_1, \mathbf{E}, \mathbf{R}_2\right)^\intercal
=
\left(\mathbf{U},\mathbf{V} - \operatorname{Fold}(\mathbf{m} \mathbf{G})\right).
$$

\subsection{\schemetwo scheme}

Our new encryption scheme denoted \schemetwo stands for 
Multiple syndromes Unstructured Rank with Augmented Gabidulin codes encryption scheme.
It is particularly interesting security wise as it does not use structured codes 
contrarily to existing constructions such as ROLLO, RQC or our new proposal \schemeone. 
Indeed, it only relies on the security of the $\RSL$ problem.
\schemetwo leverages multiple syndromes and augmented Gabidulin codes.
In addition, it features two variants as it can be instantiated with either 
homogeneous or non-homogeneous errors.

\vspace{\baselineskip}
\noindent \textbf{Notations.}
Hereafter, $\fold$ and $\unfold$ refer to the procedure introduced in Section~\ref{sec:scheme-multi}. 
Let $\mathcal{S}_{w,1}^{n \times 2n_1}(\Fqm)$ and $\mathcal{S}_{(w_1,w_2)}^{n_2 \times (n + n_1 + n)}(\Fqm)$ be defined as:
\begin{align*}
  \mathcal{S}_{w,1}^{n \times 2n_1}(\Fqm) = & \ \{ \mathbf{X} = (\mathbf{X}_1, \mathbf{X}_2) \in \Fqm^{n \times 2n_1} \,\big|\, \norme{\mathbf{X}} = w, \ 1 \in  \Supp(\mathbf{X})\}, \\
  \mathcal{S}_{(w_1,w_2)}^{n_2 \times (n + n_1 + n)}(\Fqm) = & \ \{ 
      \mathbf{X} = (\mathbf{X}_1,\mathbf{X}_2,\mathbf{X}_3) \in \Fqm^{n_2 \times (n + n_1 + n)} \,\big|\,
      \norme{(\mathbf{X}_1,\mathbf{X}_3)} = w_1, \\
      & \norme{\mathbf{X}_2} = w_1+w_2, \quad \Supp(\mathbf{X}_1,\mathbf{X}_3) \subset \Supp(\mathbf{X}_2)\}.
\end{align*}

\vspace{\baselineskip}
\noindent \textbf{Protocol.}
The \schemetwo is described on Figure~\ref{fig:scheme-two}. 
It relies on two codes namely an augmented Gabidulin code 
$\mathcal{G}^+_{\overline{\gv}}(n, n', k, m)$ that can can correct up to 
$\delta := \left\lfloor \frac{n'-k+\varepsilon}{2} \right\rfloor$
errors using the efficient decoding algorithm 
$\mathcal{G}^+_{\overline{\gv}}.\mathsf{Decode}(.)$ as well as
a random $[2n, n]_{\fqm}$-code with 
parity check matrix $(\mathbf{I} ~ \mathbf{H})$.
The correctness of the protocol follows from: 
\begin{align*}
\mathbf{V} - \mathbf{U}\mathbf{Y} 
&= \fold(\mathbf{m}\mathbf{G}) + \mathbf{R}_2 (\mathbf{X} + \mathbf{H} \mathbf{Y}) + \mathbf{E} - (\mathbf{R}_1 + \mathbf{R}_2 \mathbf{H})\mathbf{Y} \\
&= \fold(\mathbf{m}\mathbf{G}) + \mathbf{R}_2 \mathbf{X} - \mathbf{R}_1 \mathbf{Y} + \mathbf{E}.
\end{align*}

\noindent As a consequence, $\unfold\left(\mathbf{V}-\mathbf{Y}\mathbf{U}\right)=
\mathbf{m}\mathbf{G}+
\unfold\left(
\mathbf{X} \mathbf{R}_2
- \mathbf{Y} \mathbf{R}_1
+ \mathbf{E} \right) \in \Fqm^n
$
which means that $
\mathcal{G}_{\gv}.\mathsf{Decode}\left(
\unfold\left(
\mathbf{V}-
\mathbf{Y}\mathbf{U}
\right)\right)=\mathbf{m}
$
as long as: \[\norme{ \,
\unfold\left(
\mathbf{X}\mathbf{R}_2
- \mathbf{Y}\mathbf{R}_1
+ \mathbf{E} \right)
} \le \delta.\]

\begin{figure}[!ht]
\centering
\fbox{
\begin{minipage}{.965\textwidth}

\underline{\Setup$(1^\lambda)$} \\[0.25\baselineskip]
Generate and output \param{} = $(n, n', n_1, n_2, k, \epsilon, \delta, w, w_1,w_2)$ where $n = n_1 n_2$.

\vspace{\baselineskip}
\underline{\KeyGen$(\param)$}: \\[0.25\baselineskip]
Sample $\mathbf{g} \lar \mathcal{S}^{n'}_{n'}(\Fqm)$, $\mathbf{H} \lar \Fqm^{n \times n}$ and 
  $(\mathbf{X}, \mathbf{Y}) \lar \mathcal{S}^{n \times 2n_1}_{w, 1}(\Fqm)$ \\[0.25\baselineskip]
Compute $\mathbf{S} = \mathbf{X + HY}$ \\[0.25\baselineskip]
Output $\pk = \left( \mathbf{g}, \mathbf{H}, \mathbf{S} \right)$ 
and $\sk = \left( \mathbf{X}, \mathbf{Y} \right)$

\vspace{\baselineskip}
\underline{\Encrypt$(\pk, \mathbf{m}, \theta)$}: \\[0.25\baselineskip]
Compute $\overline{\mathbf{g}}=(\mathbf{g} \, | \, 0 \hdots 0) \in \Fqm^{n}$ \\[0.25\baselineskip]
Compute the generator matrix $\mathbf{G} \in \mathbb{F}_{q^m}^{k\times n}$ of $\mathcal{G}^+_{\overline{\gv}}(n, n', k, m)$ \\[0.25\baselineskip]
Sample $(\mathbf{R}_1, \mathbf{E}, \mathbf{R}_2) \lar \mathcal{S}^{n_2 \times (n + n_1 + n)}_{w_1, w_2}(\Fqm)$ using randomness $\theta$ \\[0.25\baselineskip]
Compute $\mathbf{U} = \mathbf{R}_1 + \mathbf{R}_2 \mathbf{H}$ and $\mathbf{V} = \fold(\mathbf{m}\mathbf{G}) + \mathbf{R}_2 \mathbf{S} + \mathbf{E}$ \\[0.25\baselineskip]
Output $\mathbf{C} = \left( \mathbf{U}, \mathbf{V} \right)$

\vspace{\baselineskip}
\underline{\Decrypt$(\pk, \sk, \mathbf{C})$}: \\[0.25\baselineskip]
Output $\mathbf{m} = \mathcal{G}^+_{\overline{\gv}}.\mathsf{Decode}\left( \unfold \left( \mathbf{V} - \mathbf{UY} \right) \right)$

\end{minipage}
}
  \caption{\label{fig:scheme-two} \schemetwo (with non-homogeneous errors) encryption scheme}
\end{figure}

\begin{theorem}\label{thm:scheme-two}
  The \schemetwo scheme is $\INDCPA$ 
  under the $\DRSL$ and the $\DNHRSL$ assumptions. 
\end{theorem}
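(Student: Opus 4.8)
The plan is to adapt the game-based \INDCPA proof of RQC from \cite{AABBBBCDGHZ20} to the unstructured, multiple-syndrome setting, simply replacing the two indistinguishability steps (which in RQC use \DRSD/\DIRSD) by one step based on \DRSL and one based on \DNHRSL. I would organize the argument as a sequence of three games. In $\Games_0$ the adversary $\mathcal{A}$ plays the genuine \INDCPA experiment: it receives $\pk = (\mathbf{g},\mathbf{H},\mathbf{S})$ with $\mathbf{S} = \mathbf{X}+\mathbf{H}\mathbf{Y}$, outputs a pair $(\mathbf{m}_0,\mathbf{m}_1)$, and is given $\mathbf{C}^\star = \Encrypt(\pk,\mathbf{m}_b,\theta)$ for a uniform bit $b$. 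In $\Games_1$ I replace $\mathbf{S}$ by a uniformly random matrix $\mathbf{S}' \lar \Fqm^{n \times n_1}$. In $\Games_2$ I additionally replace the challenge ciphertext $(\mathbf{U}^\star,\mathbf{V}^\star)$ by uniformly random matrices of the correct sizes; then $\mathbf{C}^\star$ is independent of $b$ and $\Pr[\Games_2\text{ wins}] = 1/2$.

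For the hop $\Games_0 \to \Games_1$, I would observe that, writing $\mathbf{H}' := (\mathbf{I}_n \mid \mathbf{H})$, each column $\mathbf{s}_j$ of $\mathbf{S}$ equals $\mathbf{H}'\,(\mathbf{x}_j^\intercal,\mathbf{y}_j^\intercal)^\intercal$, where $\mathbf{x}_j,\mathbf{y}_j$ denote the $j$-th columns of $\mathbf{X},\mathbf{Y}$, and the $n_1$ vectors $(\mathbf{x}_j,\mathbf{y}_j)$ all have their coordinates in one common subspace of dimension $w$. Hence $(\mathbf{H}',\mathbf{S})$ is precisely an $\RSL(m,2n,n,w,n_1)$ instance (up to the standard $\mathcal{S}_{w,1}$ normalization "$1 \in \Supp$", which is treated exactly as in \cite{AABBBBCDGHZ20} and costs only a statistically negligible term). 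A reduction $\mathcal{B}$ takes its \DRSL challenge $(\mathbf{H}',\mathbf{T})$, reads off $\mathbf{H}$ from $\mathbf{H}'$, sets $\mathbf{S} := \mathbf{T}$, samples $\mathbf{g}$ itself, runs $\mathcal{A}$ on $\pk = (\mathbf{g},\mathbf{H},\mathbf{S})$, answers the challenge query by running $\Encrypt$ honestly (which needs only $\pk$, $\mathbf{m}_b$ and fresh randomness), and outputs $1$ iff $\mathcal{A}$ wins. A real $\mathbf{T}$ reproduces $\Games_0$, a uniform $\mathbf{T}$ reproduces $\Games_1$, so the gap is at most $\adv^{\DRSL}(\mathcal{B})$.

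For the hop $\Games_1 \to \Games_2$, note that in $\Games_1$ we have $\mathbf{U}^\star = \mathbf{R}_1 + \mathbf{R}_2\mathbf{H}$ and $\mathbf{V}^\star - \fold(\mathbf{m}_b\mathbf{G}) = \mathbf{R}_2\mathbf{S}' + \mathbf{E}$; after transposing to match Problem~\ref{problema:nhrsl}, this says that $(\mathbf{R}_1,\mathbf{E},\mathbf{R}_2)$ is multiplied by the systematic parity-check matrix of a $[2n+n_1,n]_{q^m}$ code whose non-identity block is $\bigl(\begin{smallmatrix}\mathbf{H}\\ \mathbf{S}'\end{smallmatrix}\bigr)$. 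Since $\mathbf{H}$ comes from \KeyGen{} and $\mathbf{S}'$ is now uniform, this block is uniform and of full rank except with negligible probability, while $(\mathbf{R}_1,\mathbf{E},\mathbf{R}_2)$ carries exactly the non-homogeneous support structure of $\NHRSL(m,n,n_1,w_1,w_2,n_2)$ with $N=n_2$ syndromes. A reduction $\mathcal{B}'$ embeds its \DNHRSL challenge by reading $\mathbf{H}$ and $\mathbf{S}'$ off the challenge matrix, picks $\mathbf{g}$ itself to form $\pk$, and upon receiving $(\mathbf{m}_0,\mathbf{m}_1)$ builds $\mathbf{C}^\star = (\mathbf{U}^\star,\mathbf{V}^\star)$ by adding the fixed offset $\fold(\mathbf{m}_b\mathbf{G})$ to the challenge syndrome. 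A real \DNHRSL syndrome gives $\Games_1$, a uniform one gives $\Games_2$, so the gap is at most $\adv^{\DNHRSL}(\mathcal{B}')$. Combining the three games yields $\adv^{\INDCPA}(\mathcal{A}) \le \adv^{\DRSL}(\mathcal{B}) + \adv^{\DNHRSL}(\mathcal{B}')$ plus a negligible term.

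I expect the only real obstacle to be bookkeeping rather than anything conceptual: one must verify that the transposes and the block sizes line up so that the key relation $\mathbf{S} = \mathbf{X}+\mathbf{H}\mathbf{Y}$ and the ciphertext relations really produce $\RSL$ and $\NHRSL$ instances with exactly the claimed parameters; that the $\mathcal{S}_{w,1}(\cdot)$ normalization is absorbed into a negligible statistical term as in the RQC proof; and that the randomly generated block $(\mathbf{H}\mid\mathbf{S}')$ is full rank with overwhelming probability so that the embedded \DNHRSL instance is well formed. None of these steps requires a new idea beyond the RQC template, which is why the theorem's proof in the paper can legitimately be given by reference to \cite{AABBBBCDGHZ20} together with the two product identities displayed there.
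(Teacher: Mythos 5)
Your game-hopping argument is correct and is exactly the proof the paper intends: the paper's two displayed products are precisely your $\RSL(m,2n,n,w,n_1)$ instance for the public-key hop and your $\NHRSL(m,n,n_1,w_1,w_2,n_2)$ instance for the ciphertext hop, with the rest deferred to the RQC template of \cite{AABBBBCDGHZ20}. You have simply written out in full the sequence of games that the paper leaves implicit, with matching parameters throughout.
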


\begin{proof}\label{proof-scheme-two}
  The proof of the \schemetwo scheme is similar to the proof from \cite{AABBBBCDGHZ20} with an $\RSL(m, 2n, n, \omega, n_1)$ instance defined from a $[2n, n]$ code and an $\NHRSL(m, n, \allowbreak n_1, \omega_1, \omega_2, n_2)$ instance defined from a $[2n + n_1, n]$ code. These instances are defined by the following products:
\end{proof}

\vspace{-0.5\baselineskip}
$$
\left(\begin{matrix}
  \mathbf{I}_{n} & ~ \mathbf{H} \\
\end{matrix}\right)
\times
\left(\mathbf{X}, \mathbf{Y}\right)^\intercal
=
\mathbf{S},
$$


$$
\left(\mathbf{R}_1, \mathbf{E}, \mathbf{R}_2\right)
\times
\left(\begin{matrix}
  \mathbf{I}_{n} & ~ \mathbf{0}       & ~ \mathbf{H} \\
  \mathbf{0}       & ~ \mathbf{I}_{n_1} & ~ \mathbf{S}
\end{matrix}\right)^\intercal
=
\left(\mathbf{U},\mathbf{V} - \operatorname{Fold}(\mathbf{m} \mathbf{G})\right).
$$

\section{Security analysis}\label{sec:security_analysis}

  In this section, we provide the complexity to solve some hard problems 
  in rank-based cryptography.

  \subsection{Attacks on the \RSD problem \cite{GRS16,AGHT18,BBCGPSTV20}}\label{sec:attaques_rsd}

    There are two general classes of attacks to \RSD, based on combinatorial or algebraic techniques. On the one hand, combinatorial attacks can be seen as the equivalent of ISD-type attacks in the rank metric setting. Relying on \cite{GRS16,AGHT18}, we estimate that the complexity of the best combinatorial attack is in

    \begin{equation}\label{eq:cout_combi}
    \text{min}\left( 2^{(w-1)\left\lfloor \frac{(k+1)m}{n} \right\rfloor}, 
    2^{w\left\lceil \frac{(k+1)m}{n} \right\rceil -m} \right)
    \end{equation} 
    $\ff{q}$-operations. On the other hand, algebraic attacks on the \RSD problem are by modeling the decoding instance into a system of polynomial equations, and the overall cost is reduced to the one of solving this system. To design our parameters, we take into account the most recent algebraic attack, namely the MaxMinors attack \cite{BBCGPSTV20}. Its complexity in $\ff{q}$ operations is estimated to be
    \begin{equation}\label{eq:cout_mm}
    \mathcal{O}\left( q^{aw}
    \textstyle m\binom{n-k-1}{w}\binom{n-a}{w}^{\omega - 1}\right),
    \end{equation}
    where $a \geq 0$ the smallest integer such that $m\binom{n-k-1}{w} \geq \textstyle\binom{n-a}{w}-1$ and where $\omega$ is a linear algebra constant. 
    
  \subsection{Attacks on the \NHRSD problem}\label{sec:attaques_nhrsd}
    
    This section is dedicated to the first cryptanalysis of the \NHRSD problem by proposing two attacks which exploit the inhomogeneous structure of the error.
    
    \subsubsection{A new combinatorial attack.}
        In this section, we may assume for clarity that the $n_1$ leftmost coordinates of $\ev$ correspond to the part of weight $w_1 + w_2$, namely $\ev = (\ev_2,\ev_1, \ev_3)$, and we also adopt a systematic form for the parity-check matrix $\Hm_{\ev} = \begin{pmatrix} \Idm_{n+n_1-1} & * \end{pmatrix}$ of the public code $\mathcal{C}_{\ev} := \mathcal{C} \oplus \langle \ev \rangle$. The parity-check equations for this code which are traditionally used in this type of attack are as follows:
        
        \begin{enumerate}
        	\item those associated to the $n$ first rows of $\Hm_{\ev}$ provide $n$ linear relations over $\ff{q^m}$ which can be mapped into $nm$ relations over $\ff{q}$ between unknowns coming from $\ev_2$, $\ev_1$ and $\ev_3$.
        	\item those associated to the $n_1-1$ last rows of $\Hm_{\ev}$ give $(n_1-1)m$ equations over $\ff{q}$ in unknowns coming from the components of $\ev_1$ and $\ev_3$ only.
        \end{enumerate}
        
        Before describing our attack, let us recall how \cite{GRS16,AGHT18} would solve a non-structured $\RSD(m,2n,n,w_1)$ instance to recover $(\ev_1,\ev_3)$. The most enhanced version of \cite{AGHT18} consists in guessing a subspace $V$ of dimension $r_1 \geq w_1$ such that $\alpha S_1 \subset V$ for some element $\alpha \in \ff{q^m}^{*}$ instead of simply $S_1 \subset V$ as it provides a better success probability. Then, it aims at solving the linear system given by the parity-check equations from $2.$ The largest value of $r_1$ for which one may expect a unique solution is given by 
        \begin{equation}\label{eq:r1}
        r_1 := \textstyle\left\lfloor \frac{m(n-1)}{2n} \right\rfloor = m - \left\lceil \frac{m(n+1)}{2n} \right\rceil.
        \end{equation}
        The classical cost given in \cite{AGHT18} is then roughly
        \begin{equation}\label{eq:cout_2n}
        \textstyle \widetilde{\mathcal{O}}\left( q^{w_1(m-r_1) -m} \right) = \widetilde{\mathcal{O}}\left( q^{w_1\left\lceil \frac{m(n+1)}{2n} \right\rceil - m} \right),
        \end{equation}
        where $\widetilde{\mathcal{O}}$ hides a polynomial 
        factor corresponding to solving this linear system. To benefit from the inhomogeneous structure of $\ev$ from \NHRSD, our approach follows the natural path of making a guess on a random subspace $V$ of dimension $r \geq w_1$ such that $S_1 \subset V$ \emph{and} a random subspace $Z \subset \ff{q^m}/V$ 
        of dimension $\rho \in \{w_2..m-r\}$ such that $S_2 \subset V \oplus Z$.
        
        \begin{theorem}\label{theo:complex_combi}
        	Our proposed combinatorial algorithm runs in time
        	\begin{align}\label{eq:bigO_K}
        	\widetilde{\mathcal{O}}{\left(q^{(w_1+w_2)(m-r) - w_2\rho - m}\right)}.
        	\end{align}
        \end{theorem}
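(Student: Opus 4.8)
The plan is to follow the standard template for analysing combinatorial attacks on rank syndrome decoding, as in \cite{GRS16,AGHT18}, but tracking two nested subspaces rather than one. First I would set up the bookkeeping: let $S_1 := \Supp(\ev_1,\ev_3)$ and $S_2 := \Supp(\ev_2)$, so that $\dim S_1 = w_1$, $\dim S_2 = w_1+w_2$, and $S_1 \subset S_2$. The algorithm guesses a subspace $V \subset \ff{q^m}$ of dimension $r \geq w_1$ with $S_1 \subset V$, together with a subspace $Z$ of $\ff{q^m}/V$ of dimension $\rho \in \{w_2..m-r\}$ such that the image of $S_2$ in $\ff{q^m}/V$ lies in $Z$; equivalently $S_2 \subset V \oplus \widetilde{Z}$ where $\widetilde{Z}$ is any lift of $Z$. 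Conditioned on a correct guess, the unknown coordinates of $\ev_1,\ev_3$ live in the $r$-dimensional $\ff{q}$-space $V$ and those of $\ev_2$ in the $(r+\rho)$-dimensional space $V\oplus\widetilde Z$; writing each $\ff{q^m}$-entry in a fixed $\ff{q}$-basis adapted to these spaces turns the parity-check relations of items $1.$ and $2.$ into an $\ff{q}$-linear system in these unknowns.

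The second step is the cost-per-iteration and success-probability count. For the success probability, one uses that a uniformly random $r$-dimensional subspace contains a fixed $w_1$-dimensional subspace with probability $\approx q^{-w_1(m-r)}$, and, conditioned on that, a uniformly random $\rho$-dimensional subspace of the $(m-r)$-dimensional quotient contains the fixed $w_2$-dimensional image of $S_2$ with probability $\approx q^{-w_2(m-r-\rho)}$; multiplying gives an expected number of iterations $\widetilde{\mathcal O}\big(q^{w_1(m-r) + w_2(m-r-\rho)}\big)$. For the per-iteration cost: the parity-check equations from item $2.$ (the $(n_1-1)m$ relations over $\ff{q}$ in the $\ev_1,\ev_3$ unknowns) plus those from item $1.$ yield a linear system over $\ff{q}$ whose resolution is polynomial, so each iteration costs $\widetilde{\mathcal O}(1)$ up to polynomial factors; one must check that for the ranges of $r$ and $\rho$ considered the system has enough equations to single out the solution, which is exactly the regime in which the attack is meaningful. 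Multiplying iteration count by per-iteration cost gives $\widetilde{\mathcal O}\big(q^{w_1(m-r)+w_2(m-r-\rho)}\big)$, and expanding $w_1(m-r)+w_2(m-r-\rho) = (w_1+w_2)(m-r) - w_2\rho$ recovers the exponent of \eqref{eq:bigO_K} — with the additional $-m$ term coming, as in the classical analysis of \eqref{eq:cout_2n}, from the standard ``one extra degree of freedom'' trick: one guesses $V$ only up to a scalar $\alpha \in \ff{q^m}^*$, i.e. one only needs $\alpha S_1 \subset V$ for some $\alpha$, which improves the probability by a multiplicative $q^{m}$ (there are $\approx q^m$ choices of $\alpha$ and a correct guess of $\alpha V$ suffices to linearise).

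I would therefore structure the proof as: (i) describe the randomised guessing procedure and the resulting linear system precisely; (ii) argue that a correct guess yields a uniquely solvable system in polynomial time, hence per-iteration cost $\widetilde{\mathcal O}(1)$; (iii) compute the probability that a random guess is correct, incorporating the scalar trick for the extra $q^m$; (iv) conclude by multiplying. The main obstacle I expect is step (iii): making the probability estimate rigorous requires care about the independence (or near-independence) of the two guesses and about the Gaussian-binomial ratios $\binom{m-w_1}{r-w_1}_q/\binom{m}{r}_q \approx q^{-w_1(m-r)}$ and the analogous quotient bound, and one must also verify that for $\rho$ in the claimed range $\{w_2..m-r\}$ the linearised system from the parity-check equations really does isolate $(\ev_1,\ev_2,\ev_3)$ rather than leaving a positive-dimensional affine solution space — i.e. that the counting $nm + (n_1-1)m$ equations against the $r$-dimensional and $(r+\rho)$-dimensional unknown spaces is favourable, which is what pins down the admissible $(r,\rho)$ and hence the optimisation underlying the stated complexity.
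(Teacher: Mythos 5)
Your proposal follows essentially the same route as the paper: guess a subspace $V \supset S_1$ of dimension $r$ and a subspace $Z$ of the quotient with $S_2 \subset V \oplus Z$, factor the success probability as $\probab{S_1 \subset V}{V}$ times a conditional probability $\approx q^{-w_2(m-r-\rho)}$, fold in the $q^{-m}$ scalar trick of \cite{AGHT18}, and observe that the linear algebra per iteration is polynomial. The point you flag as delicate is exactly what the paper's appendix supplies — it decomposes the conditional event over $\ell := \dim_{\ff{q}}(S_2/S_1 \cap V/S_1)$, computes each term via Gaussian binomials, and shows the $\ell = 0$ term dominates, which justifies your generic-case estimate.
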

        
        The complexity given by Equation \eqref{eq:bigO_K} is of the same shape as Equation \eqref{eq:cout_2n} since the rest
        of our attack is totally similar to \cite{GRS16,AGHT18}: expressing the coordinates of $(\ev_1,\ev_3)$ in a fixed basis of $V$ yields $2nr$ variables over $\ff{q}$, while we get $n_1(r+\rho)$ variables over $\ff{q}$ by writing the coordinates of $\ev_2$ in a fixed basis of $V \oplus Z$. For the linear algebra step, $n_1(r+\rho)$ random equations from 1. are used in order to express all the variables from $\ev_2$ in terms of the other variables, and we are left with a linear system of $(n+n_1-1)m - n_1(r+\rho)$ equations over $\ff{q}$ in only $2nr$ variables. This leads to the condition
        \begin{equation*}\label{eq:condi_linear}
        2nr \leq m(n+n_1-1) - n_1(r+\rho)
        \end{equation*}
        in order to expect at most one solution. 
        Overall, the main task to prove Theorem \ref{theo:complex_combi} is to compute the success probability $\Pi :=  
        \probab{S_1 \subset V,~S_2 \subset V \oplus Z}{V,Z}$, see Appendix \ref{sec:appendix_Pi}. Using \cite{AGHT18}, recall also that one may take advantage of $\ff{q^m}$-linearity by considering a greater probability of the form 
        \begin{equation}
        \probab{\exists \alpha \in \ff{q^m}^{*},~\alpha S_1 \subset V ,~\alpha S_2 \subset V \oplus Z}{V,Z} \approx \frac{q^{m}-1}{q-1}\Pi,
        \end{equation}
        and in case of success decoding the word $\alpha\ev$ instead of $\ev$. For clarity Appendix \ref{sec:appendix_combi} presents the plain version of the attack, but as this trick is compatible with our analysis the corresponding $q^{-m}$ factor appears in Equation \eqref{eq:bigO_K}. Finally, one has to consider the couple $(r,\rho)$ which leads to the best exponent in Equation \eqref{eq:bigO_K}. In other words, the goal will be to maximize the quantity $(w_1+w_2)r + w_2\rho$ under the constraints
        $(2n + n_1)r + n_1\rho \leq m(n+n_1-1)$,
        $w_1 \leq r$,
        $w_2 \leq \rho$,
        $r + \rho \leq m-1$,
        where $r,\rho \in \mathbb{N}$.
        
        This is an example of integer linear program (ILP), 
        and to solve this instance we have used dedicated tools.

    \subsubsection{Adaptation of the algebraic attack of \cite{BBCGPSTV20} against \NHRSD.}

      A first approach of this attack was 
      proposed in \cite{AABBBBCDGHZ20}, we build upon 
      this work and give a thorough analysis of the 
      complexity of this attack. 
      
      \begin{theorem}\label{theo:cost_mm}
      	Let $a \geq 0$ the smallest integer such that $$\mathcal{N}_{\ff{q}} \geq \textstyle\binom{2n+n_1 -a}{w_1+w_2} - M_a - \mathcal{\nu}_{\ff{q}}-1,$$ where $\mathcal{N}_{\ff{q}} =   m\textstyle\sum_{i=w_2}^{w_1+w_2} \binom{n_1-1}{i}\binom{n}{w_1 + w_2 - i}$, $\mathcal{\nu}_{\ff{q}} =  \textstyle m\binom{n_1-1}{w_2 - 1}\binom{n-1}{w_1}$ and $\textstyle M_a := \sum_{i=0}^{\omega_2 - 1} \binom{n_1}{i}\binom{2n-a}{\omega_1 + \omega_2 - i}$. The hybrid MaxMinors attack adapted to \NHRSD costs
      	\begin{equation*}
      	\mathcal{O}\left( q^{aw_1}\mathcal{N}_{\ff{q}}
      	\textstyle\left( \binom{2n+n_1-a}{w_1+w_2} - M_a - \mathcal{\nu}_{\ff{q}} \right)^{\omega - 1}\right)
      	\end{equation*}
      	operations in $\ff{q}$, where $\omega$ is a linear algebra constant.
      \end{theorem}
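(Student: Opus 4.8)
The plan is to instantiate the MaxMinors modeling of \cite{BBCGPSTV20} on the \NHRSD instance, to count precisely the equations and monomials of the resulting $\ff{q}$-linear system once the non-homogeneous support structure has been exploited, and to conclude, exactly as in the homogeneous case \eqref{eq:cout_mm}, that the overall cost is that of solving this linear system by Gaussian elimination times the number of guesses in the hybrid step.

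First I would recall the modeling. Writing $\ev = \mathbf{x}\mathbf{C}$ with $\mathbf{x}\in\ff{q^m}^{w_1+w_2}$ and $\mathbf{C}\in\ff{q}^{(w_1+w_2)\times(2n+n_1)}$ a support matrix of $\ev$, the syndrome equation $\Hm\trsp{\ev}=\trsp{\sv}$ rewrites as $\Hm\trsp{\mathbf{C}}\,\trsp{\mathbf{x}}=\trsp{\sv}$, so the $(n+n_1)\times(w_1+w_2+1)$ matrix $\bigl(\Hm\trsp{\mathbf{C}}\mid\trsp{\sv}\bigr)$ has a non-trivial right kernel and all its maximal minors vanish; by the Cauchy--Binet formula these are $\ff{q^m}$-linear forms in the maximal minors $c_T:=\det(\mathbf{C}_{*,T})$, $\lvert T\rvert=w_1+w_2$, with coefficients computed from $\Hm$ and $\sv$. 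As in \cite{BBCGPSTV20}, descending to $\ff{q}$ and passing to the extended code $\mathcal{C}\oplus\langle\ev\rangle$ with a systematic parity-check matrix extracts a non-redundant subfamily which, if one treats $\ev$ merely as a weight-$(w_1+w_2)$ word of a $[2n+n_1,n]_{q^m}$ code, consists of $m\binom{n+n_1-1}{w_1+w_2}$ equations in $\binom{2n+n_1}{w_1+w_2}-1$ monomials over $\ff{q}$.

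Second comes the structural reduction, which is the heart of the argument. Reordering the coordinates as $\ev=(\ev_2,\ev_1,\ev_3)$ as in Section~\ref{sec:attaques_nhrsd} and choosing a basis of $\Supp(\ev_2)$ whose first $w_1$ vectors span $\Supp(\ev_1,\ev_3)$, the bottom $w_2$ rows of $\mathbf{C}$ vanish on every column outside the $n_1$-block; hence $c_T\equiv 0$ as soon as $\lvert T\cap(n_1\text{-block})\rvert<w_2$, which kills $\sum_{i=0}^{w_2-1}\binom{n_1}{i}\binom{2n}{w_1+w_2-i}$ monomials, and every equation of the reduced family that involves only such minors becomes trivial. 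Keeping track of which equations survive (splitting the $n+n_1-1$ pivot columns of the systematic parity-check matrix into an $n_1-1$ part and an $n$ part, and discarding the indices meeting the first in fewer than $w_2$ positions) leaves exactly $\mathcal{N}_{\ff{q}}=m\sum_{i=w_2}^{w_1+w_2}\binom{n_1-1}{i}\binom{n}{w_1+w_2-i}$ useful equations, the Vandermonde identity $\binom{n+n_1-1}{w_1+w_2}=\sum_i\binom{n_1-1}{i}\binom{n}{w_1+w_2-i}$ confirming the bookkeeping. One must moreover verify that a further $\nu_{\ff{q}}=m\binom{n_1-1}{w_2-1}\binom{n-1}{w_1}$ of the surviving minors are linearly dependent on the others once the inclusion $\Supp(\ev_1,\ev_3)\subset\Supp(\ev_2)$ is enforced; this is the refinement over the first analysis of \cite{AABBBBCDGHZ20}, and together with the usual projective normalization it is what turns the monomial count into $\binom{2n+n_1-a}{w_1+w_2}-M_a-\nu_{\ff{q}}-1$ once the hybrid step below is applied.

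Third, the hybrid step: since every column of $\mathbf{C}$ in the $2n$-block lies in the $w_1$-dimensional space $\Supp(\ev_1,\ev_3)$, specializing $a$ of them costs only $q^{aw_1}$ guesses, and for each guess these $a$ columns are deleted, so the monomial counts become $\binom{2n+n_1-a}{w_1+w_2}$ and $M_a=\sum_{i=0}^{w_2-1}\binom{n_1}{i}\binom{2n-a}{w_1+w_2-i}$. Taking $a$ minimal such that $\mathcal{N}_{\ff{q}}\geq\binom{2n+n_1-a}{w_1+w_2}-M_a-\nu_{\ff{q}}-1$ makes the linearized system generically of full column rank, so a single Gaussian elimination on an $\mathcal{N}_{\ff{q}}\times\bigl(\binom{2n+n_1-a}{w_1+w_2}-M_a-\nu_{\ff{q}}\bigr)$ matrix over $\ff{q}$, of cost $\mathcal{O}\bigl(\mathcal{N}_{\ff{q}}(\binom{2n+n_1-a}{w_1+w_2}-M_a-\nu_{\ff{q}})^{\omega-1}\bigr)$, recovers the $c_T$, hence $\mathbf{C}$ up to column operations and then $\ev$, where $\omega$ is the linear algebra exponent; multiplying by the $q^{aw_1}$ guesses yields the announced complexity. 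The main obstacle is the second step: rigorously establishing that exactly $\mathcal{N}_{\ff{q}}$ equations remain independent and pinning down the redundancy count $\nu_{\ff{q}}$, i.e.\ controlling precisely which maximal minors are structurally zero or linearly dependent once both the syndrome equations and the support inclusion are imposed; the field-descent and linear-algebra accounting around it are routine.
\qed
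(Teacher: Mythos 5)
Your proposal follows essentially the same route as the paper: the MaxMinors modeling, zeroing out the $M$ minor variables $c_T$ with $\#(T\cap\{n+1..n+n_1\})\le w_2-1$ coming from the block structure of $\Cm$, counting the $\mathcal{N}_{\ff{q}}$ surviving independent equations, subtracting a further $\nu_{\ff{q}}$ from the variable count, and the $q^{aw_1}$ hybrid specialization of $a$ columns supported on the $w_1$-dimensional space, followed by one Gaussian elimination per guess. One point of precision: in the paper $\nu_{\ff{q}}$ is not a dependency among minors ``enforced by the support inclusion'' but the rank of the leftover subsystem $\mathcal{P}_{\text{rest}}$ --- the equations $P_J$ with exactly $w_2-1$ indices of $J$ in $\{1..n_1-1\}$, whose leading terms are destroyed by the zeroing yet which survive as nonzero relations linearly independent of $\mathcal{P}_{\text{indep}}$ --- and these are echelonized and substituted into $\mathcal{P}_{\text{indep},\ff{q}}$ to eliminate $\nu_{\ff{q}}$ variables. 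The rank bookkeeping you flag as the ``main obstacle'' is precisely where the paper's work lies (the three-way partition $\mathcal{P}_{\text{lost}}\sqcup\mathcal{P}_{\text{rest}}\sqcup\mathcal{P}_{\text{indep}}$, a leading-term analysis, a direct-sum decomposition of $\langle\mathcal{P}_{\text{rest}}\rangle$, and an explicit assumption that the $\ff{q^m}$-ranks are multiplied by $m$ after projection to $\ff{q}$), so your outline is faithful to the paper's argument but leaves that central counting step open.
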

      
      \paragraph{\textbf{MaxMinors linear system \cite{BBCGPSTV20}.}}
        The MaxMinors system is a system of equations over $\ff{q^m}$ which vanish on the solutions to the \RSD instance. 
        Let $\yv = \cv + \ev \in \ff{q^m}^{(2n+n_1)}$ be the noisy codeword to be decoded in a random $\ff{q^m}$-linear code $\mathcal C$ of length $2n+n_1$ and dimension $n$ with generator matrix $\Gm \in \ff{q^m}^{n \times (2n+n_1)}$. The
        extended code
        $\mathcal C_{\ev} = \mathcal C_{\yv}$ is generated by the matrix $\textstyle \Gm_{\yv} :=
        \begin{pmatrix}
        \Gm\\\yv
        \end{pmatrix}$, and we also consider $\Hm_{\yv}\in\mathbb F_{q^m}^{(n+n_1-1)\times (2n+n_1)}$ a full-rank parity-check matrix for this code. We clearly have
        \begin{equation*}\label{eq:start_mm_sm}
        0 = \ev\trsp{\Hm_{\yv}} = \mat \beta \Mat(\ev)\trsp{\Hm_{\yv}} = \mat\beta\Sm\Cm\trsp{\Hm_{\yv}},
        \end{equation*}
        so that the matrix $\Cm\trsp{\Hm_{\yv}}$ contains a non-zero vector $\mat\beta\Sm$ in its left kernel and cannot be full-rank. In particular, the MaxMinors system is the system of maximal minors $\mathcal{P} := \left\lbrace P_J \right\rbrace_{J}$ such that $P_J := \left\vert \Cm \trsp{\Hm_{\yv}} \right\vert_{*,J}$ for each subset $J \subset \{1..n+n_1-1\}$, $\# J =w_1+ w_2$. The crux is that these equations are actually \emph{linear} in the minor variables $c_T : =\left\vert\Cm \right\vert_{*,T} \in \ff{q}$ by using the Cauchy-Binet formula for the determinant of a product of rectangular matrices, see \cite{BBBGNRT20,BBCGPSTV20}. In this section, the $c_T$'s will be sorted with respect to the following ordering on the $T$'s: we consider that
        $T=\lbrace t_1<\dots<t_r\rbrace < T'=\lbrace t'_1<\dots<t'_r\rbrace$
        if $t_j=t'_j$ for $j<j_0$ and $t_{j_0}<t'_{j_0}$ assuming that
        $1<2<\dots<n$. We will further assume that $\Hm_{\yv} :=
        \begin{pmatrix}
        * & \Idm_{n+n_1-1}
        \end{pmatrix}$ and from that assumption \cite{BBCGPSTV20} derive the fundamental Lemma \ref{lem:P_Jlem} on the shape of the MaxMinors equations:

        \begin{lemma}[Prop. 2, \cite{BBCGPSTV20}]\label{lem:P_Jlem}
        	\begin{eqnarray}\label{eq:P_J}
        	P_J      = c_{J+n+1} + \sum_{\substack{T^-\subset\lbrace 1..n+1\rbrace, T^+\subset (J+n+1)\\T = T^-\cup T^+,~\#T=w_1 + w_2,~T^-\ne\emptyset}} c_T \vert \Hm_{\yv} \vert_{J,T}.
        	\end{eqnarray}
        \end{lemma}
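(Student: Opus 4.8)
The plan is to derive \eqref{eq:P_J} as a purely \emph{formal} identity in the minor variables $c_T := \left| \Cm \right|_{*,T} \in \ff{q}$, obtained by expanding the maximal minor $P_J = \left| \Cm \trsp{\Hm_{\yv}} \right|_{*,J}$ with the Cauchy--Binet formula and then collapsing the minors of $\Hm_{\yv}$ using its systematic shape.

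First I would apply Cauchy--Binet to the product $\Cm \trsp{\Hm_{\yv}}$, with $\Cm \in \ff{q}^{(w_1+w_2) \times (2n+n_1)}$ (recall $\norme{\ev} = w_1+w_2$, since $\Supp(\ev_1,\ev_3) \subset \Supp(\ev_2)$ forces $\Supp(\ev) = \Supp(\ev_2)$) and $\trsp{\Hm_{\yv}} \in \ff{q^m}^{(2n+n_1)\times(n+n_1-1)}$. For $J \subset \{1..n+n_1-1\}$ with $\#J = w_1+w_2$ this gives
\begin{align*}
P_J \;=\; \left| \Cm\trsp{\Hm_{\yv}} \right|_{*,J} \;=\; \sum_{\substack{T\subset\{1..2n+n_1\}\\ \#T = w_1+w_2}} \left|\Cm\right|_{*,T}\,\left|\trsp{\Hm_{\yv}}\right|_{T,J} \;=\; \sum_T c_T\, \left|\Hm_{\yv}\right|_{J,T},
\end{align*}
the last equality holding because a square submatrix of $\trsp{\Hm_{\yv}}$ is the transpose of the corresponding submatrix of $\Hm_{\yv}$ and thus has the same determinant.

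Next I would evaluate $\left|\Hm_{\yv}\right|_{J,T}$ using $\Hm_{\yv} = \begin{pmatrix} * & \Idm_{n+n_1-1}\end{pmatrix}$: the identity block occupies the columns indexed by $\{n+2..2n+n_1\}$, column $n+1+i$ being the $i$-th unit vector, so that $J+n+1$ labels the columns of that block supported on the rows $J$. Splitting $T = T^- \sqcup T^+$ with $T^- = T\cap\{1..n+1\}$ and $T^+ = T\cap\{n+2..2n+n_1\}$, any column of $\left(\Hm_{\yv}\right)_{J,T}$ arising from an index $n+1+i \in T^+$ with $i\notin J$ is identically zero, so the minor vanishes unless $T^+ \subset J+n+1$. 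If moreover $T^- = \emptyset$, then $\#T = \#J$ forces $T = J+n+1$, the submatrix is $\Idm_{w_1+w_2}$ for the orderings fixed above, and its determinant equals $1$; this produces the isolated term $c_{J+n+1}$. The surviving $T$ are then exactly those with $T^-\neq\emptyset$, $T^-\subset\{1..n+1\}$, $T^+\subset J+n+1$, $\#T = w_1+w_2$, which is precisely the sum in \eqref{eq:P_J}.

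The one genuinely delicate point is the index-and-sign bookkeeping that isolates $c_{J+n+1}$ with coefficient exactly $+1$ and rules out any other $T$ with $T^- = \emptyset$; this is where the total ordering on index sets fixed just before the statement is used, ensuring the row ordering induced by $J$ matches the column ordering induced by $J+n+1$. I would also emphasise that \eqref{eq:P_J} is an identity of polynomials in the formal variables $c_T$: the complementary fact that every $P_J$ \emph{vanishes} at the genuine solution is separate, following from $\ev\trsp{\Hm_{\yv}} = 0$ (as $\ev \in \mathcal{C}_{\ev} = \mathcal{C}_{\yv}$), i.e. from $\mat\beta\Sm$ being a nonzero left-kernel vector of $\Cm\trsp{\Hm_{\yv}}$, which makes that matrix rank-deficient.
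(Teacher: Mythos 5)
Your derivation is correct and follows exactly the route the paper intends: the lemma is imported verbatim from \cite{BBCGPSTV20} (Prop.~2 there) and not reproved here, but the surrounding text already signals the two ingredients you use — Cauchy--Binet linearity in the $c_T$ and the systematic form $\Hm_{\yv} = \begin{pmatrix} * & \Idm_{n+n_1-1}\end{pmatrix}$ — and your bookkeeping (vanishing of $\vert\Hm_{\yv}\vert_{J,T}$ unless $T^+ \subset J+n+1$, the order-preserving bijection $j \mapsto j+n+1$ forcing the $T^-=\emptyset$ term to be exactly $+\,c_{J+n+1}$) is accurate. Your closing remark correctly separates the formal identity from the fact that the $P_J$ vanish on the solution, which is how the paper also presents it.
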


        A direct consequence of Lemma \ref{lem:P_Jlem} is that the equations of $\mathcal{P}$ are linearly independent over $\ff{q^m}$ as their leading terms are distinct. 

      \paragraph{\textbf{Removing variables corresponding to zero minors.}}

        The very same MaxMinors system can be employed to attack \NHRSD. A main difference in this case is that if one wants to decrease the number of minor variables by relying on the special structure of $\ev$ as shown in \cite{AABBBBCDGHZ20,BBCGPSTV20}, then linear relations between the equations after removing these variables also occur and must be taken into account in the analysis. 
        Recall from \cite[6.2.2]{AABBBBCDGHZ20} that the row 
        support of $\Mat(\ev)\in \ff{q}^{m \times (n+n_1+n)}$ can be written as
        \begin{equation}\label{eq:C_rqc}
        \Cm = \begin{pmatrix}
        \Cm_1 & \Cm_2 & \Cm_3 \\
        0 & \Cm'_2 & 0
        \end{pmatrix} \in \ff{q}^{(w_1 + w_2) \times (n+n_1+n)},
        \end{equation}
        where $\Cm_1,~\Cm_3 \in \ff{q}^{w_1 \times n}$, $\Cm_2 \in \ff{q}^{w_1 \times n_1}$ and $\Cm'_2\in \ff{q}^{w_2 \times n_1}$. From Equation \eqref{eq:C_rqc}, it has been noted that the minors $\left\vert \Cm\right\vert_{*,T}$ such that $T \cap \{n+1..n+n_1\} \leq w_2 - 1$ are always zero. This means that the  
        \begin{equation}\label{eq:removed_M}
        M := \sum_{i=0}^{w_2 - 1} \binom{n_1}{i}\binom{2n}{w_1 + w_2 - i}
        \end{equation}
        variables from the set
        \begin{equation*}
        \zeta := \left\lbrace c_T,~T \subset \{1..(2n+n_1)\},~\# T = w_1 + w_2,~T \cap \{n+1..n+n_1\} \leq w_2 - 1 \right\rbrace
        \end{equation*}
        can be set to zero in the MaxMinors system. It is then relevant to separate the initial $P_J$ equations into several subsets in function of the presence or the absence of these $c_T$ variables. We consider the partition $\mathcal P := \mathcal P_{\text{lost}} \sqcup \mathcal P_{\text{rest}} \sqcup \mathcal P_{\text{indep}}$, where 
        \begin{align*}
        \mathcal P_{\text{lost}} & :=   \left\lbrace P_J : ~\#J=w_1+w_2,~ \#(J\cap\lbrace 1..(n_1-1)\rbrace)\le w_2 - 2 \right\rbrace \\
        \mathcal P_{\text{rest}} & :=   \left\lbrace P_J : ~\#J=w_1+w_2,~ \#(J\cap\lbrace 1..(n_1-1)\rbrace) = w_2-1 \right\rbrace \\
        \mathcal P_{\text{indep}} & :=   \left\lbrace P_J : ~\#J=w_1+w_2,~ \#(J\cap\lbrace 1..(n_1-1)\rbrace)\ge w_2 \right\rbrace.
        \end{align*}

        Using Lemma \ref{lem:P_Jlem}, it is easy to grasp the shape of the equations from $\mathcal P_{\text{lost}}$ and $\mathcal P_{\text{indep}}$ after removing the minor variables belonging to $\zeta$:

        \begin{proposition}\label{prop:MM_easy}
        	After setting the minor variables from $\zeta$ to zero in the MaxMinors system $\mathcal{P}$, we have the following properties:
        	\begin{enumerate}
        		\item \label{case:Q} The equations in 
        		$\mathcal P_{\text{lost}}$ all become zero. 
        		\item \label{case:Qk+1} The equations in $\mathcal P_{\text{indep}}$ keep the same leading terms and therefore they are still linearly independent. We have
        		\begin{equation*}
        		\dim_{\ff{q^m}}\left \langle \mathcal P_{\text{indep}} \right\rangle = \#\mathcal P_{\text{indep}} = \textstyle\sum_{i=w_2}^{w_1+w_2} \binom{n_1-1}{j}\binom{n}{w_1 + w_2 - j}.
        		\end{equation*} 
        		Finally, the system $\mathcal P_{\text{indep}}$ contains at most $\textstyle\binom{2n+n_1}{w_1+w_2} - M$ variables.
        	\end{enumerate}
        \end{proposition}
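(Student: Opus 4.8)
The plan is to reduce both statements to elementary bookkeeping on the column index sets appearing in Equation~\eqref{eq:P_J}. The key fact I would establish first is a bound on how many columns of the ``middle block'' $\{n+1..n+n_1\}$ — the weight-$(w_1+w_2)$ part, cf.\ Equation~\eqref{eq:C_rqc} — a term $c_T$ of $P_J$ can involve. If $c_T$ occurs in $P_J$, then $T=T^-\cup T^+$ (a disjoint union) with $T^-\subset\{1..n+1\}$ and $T^+\subset J+n+1\subset\{n+2..2n+n_1\}$. Since $\{1..n+1\}\cap\{n+1..n+n_1\}=\{n+1\}$, the part $T^-$ contributes at most one index to $\{n+1..n+n_1\}$; and since $(J+n+1)\cap\{n+1..n+n_1\}=(J\cap\{1..n_1-1\})+n+1$, the part $T^+$ contributes at most $\#(J\cap\{1..n_1-1\})$ of them. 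Hence
$$\#(T\cap\{n+1..n+n_1\})\le\#(J\cap\{1..n_1-1\})+1,$$
while the leading term $c_{J+n+1}$ (the case $T^-=\emptyset$) attains the exact value $\#((J+n+1)\cap\{n+1..n+n_1\})=\#(J\cap\{1..n_1-1\})$.

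From this, the first statement is immediate: if $P_J\in\mathcal P_{\text{lost}}$ then $\#(J\cap\{1..n_1-1\})\le w_2-2$, so every term $c_T$ of $P_J$ — the leading one included — satisfies $\#(T\cap\{n+1..n+n_1\})\le w_2-1$, that is $c_T\in\zeta$; once the variables of $\zeta$ are set to $0$ the whole equation vanishes. For the second statement I would argue as follows. If $P_J\in\mathcal P_{\text{indep}}$ then $\#(J\cap\{1..n_1-1\})\ge w_2>w_2-1$, so the leading term $c_{J+n+1}$ is \emph{not} in $\zeta$; being the $<$-maximal term of $P_J$ and not removed, it remains the leading term of the reduced equation. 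Linear independence over $\ff{q^m}$ then follows exactly as for the full system $\mathcal P$: in any $P_{J'}$ the variable $c_{J+n+1}$ can occur \emph{only} as the leading term, since a non-leading term $c_T$ has $T^-\ne\emptyset$ hence $T\cap\{1..n+1\}\ne\emptyset$, whereas $(J+n+1)\cap\{1..n+1\}=\emptyset$; thus $c_{J+n+1}$ appears (with coefficient $1$) in exactly one reduced equation, namely $P_J$, and in any $\ff{q^m}$-linear relation $\sum_J\lambda_JP_J=0$ the coefficient of $c_{J+n+1}$ forces $\lambda_J=0$. This gives $\dim_{\ff{q^m}}\langle\mathcal P_{\text{indep}}\rangle=\#\mathcal P_{\text{indep}}$. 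Counting the admissible $J\subset\{1..n+n_1-1\}$ according to the value $j:=\#(J\cap\{1..n_1-1\})$, which ranges over $\{w_2..w_1+w_2\}$, yields $\#\mathcal P_{\text{indep}}=\sum_{j=w_2}^{w_1+w_2}\binom{n_1-1}{j}\binom{n}{w_1+w_2-j}$; and since the reduced MaxMinors system has an initial pool of $\binom{2n+n_1}{w_1+w_2}$ minor variables from which the $M=\#\zeta$ variables of $\zeta$ are discarded (Equation~\eqref{eq:removed_M}), it — and a fortiori its subsystem $\mathcal P_{\text{indep}}$ — involves at most $\binom{2n+n_1}{w_1+w_2}-M$ variables.

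I do not expect a genuine obstacle; the only delicate point is keeping the index shifts consistent, and in particular tracking the special role of the column $n+1$, which simultaneously belongs to the ``free'' block $\{1..n+1\}$ of $\Hm_{\yv}$ and to the weight-$(w_1+w_2)$ block $\{n+1..n+n_1\}$ of $\Cm$. This single shared column is exactly what produces the ``$+1$'' in the bound above, and hence explains why the thresholds separating $\mathcal P_{\text{lost}}$, $\mathcal P_{\text{rest}}$ and $\mathcal P_{\text{indep}}$ sit at $w_2-1$ rather than at $w_2$.
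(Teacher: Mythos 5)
Your proof is correct and follows essentially the same route as the paper's: the same bound $\#(T\cap\{n+1..n+n_1\})\le\#(J\cap\{1..n_1-1\})+1$ (with the ``$+1$'' coming from the shared column $n+1$ in $T^-$) gives item 1, and item 2 follows from the survival of the distinct leading terms $c_{J+n+1}$ together with the enumeration of the admissible $J$ by the size of $J\cap\{1..n_1-1\}$. Your write-up is merely a bit more explicit than the paper's on why $c_{J+n+1}$ cannot reappear as a non-leading term of another equation, but the argument is the same.
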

        \begin{proof}
        	See Appendix \ref{sec:proof_prop1}. \qed
        \end{proof}

        Contrary to $\mathcal{P}_{\text{indep}}$, the equations in $\mathcal P_{\text{rest}}$ have their leading terms in $\zeta$ so that these monomials are destroyed after setting the $M$ minor variables to zero. More precisely, by Lemma \ref{lem:P_Jlem}, an equation $P_J \in \mathcal P_{\text{rest}}$ becomes
        \begin{eqnarray}\label{eq:P_J_tilde}
        \widetilde{P_J}      & = \sum_{\substack{T^-\subset\lbrace 1..n+1\rbrace,~T^+\subset (J+n+1)\\T = T^-\cup T^+,~n+1 \in T^{-},~\#(T^{+} \cap \{n+2..n+n_1\}) = w_2 - 1}} c_T \vert \Hm_{\yv} \vert_{J,T} \notag \\
        & = \sum_{\substack{T^-\subset\lbrace 1..n+1\rbrace,~T^+\subset (J+n+1)\\T = T^-\cup T^+,~n+1 \in T^{-},~T^{+} \cap \{n+2..n+n_1\} = (J \cap \{1..(n_1-1)\}) + n+1}} c_T \vert \Hm_{\yv} \vert_{J,T}.
        \end{eqnarray}	

        For clarity, we still denote the resulting system by $\mathcal P_{\text{rest}}$. We analyze it in the following Proposition \ref{prop:MM}:

        \begin{proposition}\label{prop:MM}
        	After setting the minor variables from $\zeta$ to zero in $\mathcal P_{\text{rest}}$, one obtains a system of rank
        	$\textstyle \binom{n_1-1}{w_2 - 1}\binom{n-1}{w_1}$ and whose equations are also independent from $\mathcal P_{\text{indep}}$. Finally, these equations contain at most $\textstyle\binom{n_1-1}{w_2-1}\binom{2n}{w_1}$ variables.
        \end{proposition}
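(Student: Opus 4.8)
The plan is to exploit the systematic form of $\Hm_{\yv}$ in order to split $\mathcal P_{\text{rest}}$ into small, mutually independent blocks, each of which reduces, after two Laplace expansions, to an ordinary MaxMinors system twisted by a Koszul differential. First I would write every $P_J \in \mathcal P_{\text{rest}}$ as $J = A \sqcup B$, where $A := J \cap \{1..n_1-1\}$ has size $w_2-1$ and $B := J \cap \{n_1..n+n_1-1\}$ has size $w_1+1$. By Equation~\eqref{eq:P_J_tilde}, the minor variables occurring in $\widetilde{P_J}$ are exactly the $c_T$ with $T \cap \{n+1..n+n_1\} = \{n+1\} \cup (A+n+1)$, hence depend only on $A$; consequently $\mathcal P_{\text{rest}} = \bigsqcup_A \mathcal P_{\text{rest}}^A$, where $A$ ranges over the $\binom{n_1-1}{w_2-1}$ subsets of $\{1..n_1-1\}$ of cardinality $w_2-1$, and distinct blocks involve pairwise disjoint variable sets. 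This settles the variable count at once: a block $\mathcal P_{\text{rest}}^A$ involves at most the $\binom{2n}{w_1}$ variables $c_T$ of that fixed pattern (the $w_1$ remaining indices of $T$ lying among the $2n$ coordinates outside $\{n+1..n+n_1\}$), so $\mathcal P_{\text{rest}}$ involves at most $\binom{n_1-1}{w_2-1}\binom{2n}{w_1}$ variables, and it remains only to prove that each block has rank $\binom{n-1}{w_1}$ over $\ff{q^m}$.

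Fix $A$. Since $\Hm_{\yv} = (\, * \mid \Idm_{n+n_1-1}\,)$, the columns indexed by $A+n+1$ are standard basis vectors supported on the rows $A$, and these rows are disjoint from $B$; expanding each coefficient $\vert \Hm_{\yv}\vert_{J,T}$ of $\widetilde{P_J}$ by Laplace along these $w_2-1$ columns therefore deletes the rows $A$ and the columns $A+n+1$. After relabelling the $n$ rows $\{n_1..n+n_1-1\}$ as $\{1..n\}$, the $2n$ coordinates outside $\{n+1..n+n_1\}$ as $\{1..2n\}$, and denoting by $0$ the surviving boundary column $n+1$, the equation $\widetilde{P_{A\sqcup B}}$ becomes, up to signs, a linear form $R_B := \sum_S \pm\, d_S \,\vert \Hm'\vert_{B,\{0\}\cup S}$ in fresh variables $d_S$ indexed by the $w_1$-subsets $S$ of $\{1..2n\}$, where $\Hm' = (\,\mathbf a \mid \mathbf A_0 \mid \Idm_n\,) \in \ff{q^m}^{n \times (2n+1)}$ is the corresponding submatrix of $\Hm_{\yv}$ and $\mathbf a$ is its column $0$. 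A second Laplace expansion, now along $\mathbf a$, gives $R_B = \sum_{b \in B} \pm\, a_b\, Q_{B \setminus \{b\}}$, where $Q_{B'} := \sum_S \pm\, d_S \,\vert (\,\mathbf A_0 \mid \Idm_n\,)\vert_{B',S}$ and $B'$ ranges over the $w_1$-subsets of $\{1..n\}$. The family $\{Q_{B'}\}$ is precisely the MaxMinors system attached to the $[2n,n]_{q^m}$ code with systematic parity-check matrix $(\,\mathbf A_0 \mid \Idm_n\,)$ and target weight $w_1$; by the leading-monomial argument of \cite{BBCGPSTV20} (see Lemma~\ref{lem:P_Jlem}), its $\binom{n}{w_1}$ members have pairwise distinct leading monomials $d_{\{n+b\,:\,b \in B'\}}$ with coefficient $\pm 1$, hence are $\ff{q^m}$-linearly independent.

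It then suffices to locate the $\ff{q^m}$-span $\langle R_B \rangle$ inside the $\binom{n}{w_1}$-dimensional space $\langle Q_{B'} \rangle$. Through the isomorphism $\langle Q_{B'}\rangle \xrightarrow{\ \sim\ } \bigwedge^{w_1} \ff{q^m}^{n}$ sending $Q_{B'}$ to the basis element $e_{B'}$, the assignment $B \mapsto \sum_{b \in B} \pm\, a_b\, Q_{B \setminus \{b\}}$ becomes the Koszul differential $\iota_{\mathbf a}\colon \bigwedge^{w_1+1}\ff{q^m}^{n} \to \bigwedge^{w_1}\ff{q^m}^{n}$ applied to the basis, so that $\langle R_B\rangle$ is the image of $\iota_{\mathbf a}$ in degree $w_1+1$. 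For a random code $\mathbf a \neq 0$; extending $\mathbf a$ to a basis exhibits the Koszul complex of $\mathbf a$ as exact and shows that the kernel of $\iota_{\mathbf a}$ in degree $w_1+1$ has dimension $\binom{n-1}{w_1+1}$, whence $\dim \langle R_B\rangle = \binom{n}{w_1+1} - \binom{n-1}{w_1+1} = \binom{n-1}{w_1}$. Summing over the $\binom{n_1-1}{w_2-1}$ blocks yields $\dim\langle \mathcal P_{\text{rest}}\rangle = \binom{n_1-1}{w_2-1}\binom{n-1}{w_1}$, as claimed.

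Finally, independence from $\mathcal P_{\text{indep}}$ is a triangularity argument: by Proposition~\ref{prop:MM_easy} every equation of $\mathcal P_{\text{indep}}$ retains a leading monomial $c_{J'+n+1}$, these monomials are pairwise distinct, and $n+1 \notin J'+n+1$, whereas every surviving equation of $\widetilde{\mathcal P}_{\text{rest}}$ involves only variables $c_T$ with $n+1 \in T$; hence $\langle \mathcal P_{\text{indep}}\rangle \cap \langle \widetilde{\mathcal P}_{\text{rest}}\rangle = \{0\}$ and the two ranks add. I expect the main obstacle to be making the genericity precise: one has to verify that for a uniformly random code $\mathcal C$ the data extracted above behave as in the generic case — essentially that the boundary column $\mathbf a$ is nonzero, the remaining genericity being automatic because the $\Idm_n$-block of $\Hm'$ is inherited from the systematic form of $\Hm_{\yv}$ and already forces the $Q_{B'}$ to be independent — so that the announced rank is attained with overwhelming probability; tracking the signs through the two Laplace expansions so that the identification with $\iota_{\mathbf a}$ is exact is routine but must be done with care.
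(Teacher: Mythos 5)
Your proof is correct, and while the block decomposition by $A$, the variable count via the disjoint monomial sets, and the triangularity argument against $\mathcal P_{\text{indep}}$ coincide with the paper's Lemma \ref{lem:P_resta} and the discussion following the proposition, your computation of $\dim_{\ff{q^m}}\langle \mathcal P_{\text{rest},A}\rangle$ is a genuinely different and in fact tighter argument than the paper's Lemma \ref{lem:dim_P_resta}. The paper gets the lower bound $\binom{n-1}{w_1}$ from the $\binom{n-1}{w_1}$ distinct leading monomials $\tau_A$, and the matching upper bound from an induction on how many monomials of $\tau_A$ a linear combination $P_J+\sum\lambda_{J'}P_{J'}$ can cancel, which requires an explicit (and somewhat informal) assumption that the coefficients of the MaxMinors equations behave like random elements of $\ff{q^m}$. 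Your two Laplace expansions replace this by exact linear algebra: once the block is identified with the image of the contraction $\iota_{\mathbf a}\colon \bigwedge^{w_1+1}\ff{q^m}^n\to\bigwedge^{w_1}\ff{q^m}^n$ sitting inside the auxiliary MaxMinors system $\{Q_{B'}\}$ of a $[2n,n]_{q^m}$ code (whose freeness is forced by the identity block, exactly as in Lemma \ref{lem:P_Jlem}), the rank is $\binom{n}{w_1+1}-\binom{n-1}{w_1+1}=\binom{n-1}{w_1}$ under the single condition $\mathbf a\neq 0$, which fails with probability $q^{-mn}$; as a bonus you identify the $\binom{n}{w_1+1}-\binom{n-1}{w_1}$ syzygies of each block as the Koszul relations. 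The one point to nail down is the one you flag yourself: the signs produced by the two expansions depend on $T$ only through $\#(T\cap\{1..n+1\})$, hence only on $S$ and not on $B$, so they can be absorbed into the relabelled variables $d_S$ consistently across a block --- without this check the identification of $B\mapsto R_B$ with $\iota_{\mathbf a}$ on the basis $\{e_B\}$ would not be legitimate. With that verification written out, your argument proves the statement with a cleaner and more quantitative genericity condition than the paper's.
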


        The first part of Proposition \ref{prop:MM} is obvious. Using Equation \eqref{eq:P_J_tilde}, the leading term of $\widetilde{P_J} \in \mathcal{P}_{\text{rest}}$ is a $c_T$ variable such that $n+1 \in T$, whereas the leading term of any $P_{J'} \in \mathcal{P}_{\text{indep}}$ is $c_{J'+n+1}$ and $c_{J'+n+1} > c_T$ for any such $T$. Thus, what is left to prove in Proposition \ref{prop:MM} is that 
        $\dim_{\ff{q^m}}\left \langle \mathcal P_{\text{rest}} \right\rangle = \textstyle \binom{n_1-1}{w_2 - 1}\binom{n-1}{w_1}$ and that the number of variables is $\textstyle\binom{n_1-1}{w_2-1}\binom{2n}{w_1}$. For this we rely on the following lemma, whose proofs can be found in Appendix \ref{sec:proof_prop2}:

        \begin{lemma}\label{lem:P_resta}
        	For $A \subset \{n+2..n+n_1\},~\#A = w_2-1$, let
        	\begin{equation*}\label{eq:P_resta}
        	\mathcal P_{\text{rest},A} :=   \left\lbrace P_J \in \mathcal P_{\text{rest}}  : J\cap\lbrace 1..n_1-1\rbrace = A - (n+1) \right\rbrace,
        	\end{equation*}
        	so that $\left\lbrace \mathcal{P}_{\text{rest},A}\right\rbrace_{A}$ is a partition of $\mathcal{P}_{\text{rest}}$. We have $\left \langle \mathcal P_{\text{rest}} \right\rangle = \displaystyle \oplus_{A} \left \langle \mathcal P_{\text{rest},A} \right\rangle$. 
        \end{lemma}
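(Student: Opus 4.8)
The plan is to establish the two assertions in order: first that $\{\mathcal{P}_{\text{rest},A}\}_A$ genuinely partitions $\mathcal{P}_{\text{rest}}$, and then the substantive point, that the sum $\sum_A \langle \mathcal P_{\text{rest},A}\rangle$ is direct. The partition claim is immediate from the definition of $\mathcal P_{\text{rest}}$: any $P_J$ in it satisfies $\#(J\cap\{1..(n_1-1)\}) = w_2-1$, so $A := (J\cap\{1..(n_1-1)\})+(n+1)$ is a well-defined subset of $\{n+2..n+n_1\}$ of size $w_2-1$, it is the unique label with $P_J \in \mathcal P_{\text{rest},A}$, and every such $A$ indexes one (possibly empty) block. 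Since the blocks partition the generating set, one automatically gets $\langle \mathcal P_{\text{rest}}\rangle = \sum_A \langle \mathcal P_{\text{rest},A}\rangle$, and only directness remains to be proved.

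For directness I would argue via monomial supports. Fix a label $A$ and an equation $P_J \in \mathcal P_{\text{rest},A}$. By Equation \eqref{eq:P_J_tilde}, $\widetilde{P_J}$ is an $\ff{q^m}$-linear combination of minor variables $c_T$ with $T = T^- \cup T^+$, $T^-\subset\{1..n+1\}$ and $n+1\in T^-$, $T^+\subset J+n+1$, and $T^+\cap\{n+2..n+n_1\} = (J\cap\{1..(n_1-1)\})+(n+1) = A$. The observation I want to extract is that every such $T$ satisfies $T\cap\{n+2..n+n_1\} = A$: indeed $T^-\subset\{1..n+1\}$ contributes nothing to this intersection, while $T^+$ contributes exactly $A$. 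Hence the set of minor variables occurring in any equation of $\mathcal P_{\text{rest},A}$ is contained in $\mathcal Z_A := \{ c_T : \#T = w_1+w_2,~T\cap\{n+2..n+n_1\} = A\}$, and the families $\{\mathcal Z_A\}_A$ are pairwise disjoint, since $c_T$ determines $T\cap\{n+2..n+n_1\}$ and hence the label $A$. Consequently, if $\sum_A v_A = 0$ with $v_A \in \langle \mathcal P_{\text{rest},A}\rangle$, then each $v_A$ is supported on $\mathcal Z_A$; comparing, for every subset $T$, the coefficient of $c_T$ — which can be nonzero in at most one of the $v_A$'s — forces each $v_A = 0$. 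This yields $\langle \mathcal P_{\text{rest}}\rangle = \oplus_A \langle \mathcal P_{\text{rest},A}\rangle$.

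The only real obstacle is the index bookkeeping in the middle step: reading off from Lemma \ref{lem:P_Jlem} and Equation \eqref{eq:P_J_tilde} that, for $P_J \in \mathcal P_{\text{rest}}$, the surviving monomials $c_T$ all have their ``middle block'' $T\cap\{n+2..n+n_1\}$ equal to the fixed set $A = (J\cap\{1..(n_1-1)\})+(n+1)$, and that no variable of $\zeta$ reappears. This is a routine chase — $T^-$ lies entirely in the non-pivot columns $\{1..n+1\}$, the constraints $n+1\in T^-$ and $\#(T^+\cap\{n+2..n+n_1\}) = w_2-1$ give $\#(T\cap\{n+1..n+n_1\}) = w_2 \ge w_2$, so $c_T\notin\zeta$, and the equality $T^+\cap\{n+2..n+n_1\} = (J\cap\{1..(n_1-1)\})+(n+1)$ is precisely what the second line of \eqref{eq:P_J_tilde} records — but it has to be written out carefully. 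Once it is in hand, the direct-sum conclusion is just the one-line coefficient comparison above.
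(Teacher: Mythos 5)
Your proposal is correct and follows essentially the same route as the paper: the paper's proof likewise observes, via Equation \eqref{eq:P_J_tilde}, that all monomials of an equation in $\mathcal P_{\text{rest},A}$ lie in a set of minor variables $c_T$ determined by $T \cap \{n+2..n+n_1\} = A$ (the paper's $\mu_A$, which additionally records $n+1 \in T$), and that these sets are pairwise disjoint for distinct $A$, which forces the sum to be direct. Your index-chasing verification that $T^-\subset\{1..n+1\}$ contributes nothing to the middle block while $T^+$ contributes exactly $A$ is the same bookkeeping the paper leaves implicit.
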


        \begin{lemma}\label{lem:dim_P_resta}
        	For $A \subset \{n+2..n+n_1\},~\#A = w_2-1$, let $\mathcal P_{\text{rest},A}$ as defined in Lemma \ref{lem:P_resta}. With very high probability, we have $\dim_{\ff{q^m}}\left \langle \mathcal P_{\text{rest},A} \right\rangle = \textstyle \binom{n-1}{w_1}$.
        \end{lemma}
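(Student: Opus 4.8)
The plan is to reduce the computation of $\dim_{\ff{q^m}}\langle\mathcal P_{\text{rest},A}\rangle$ to the rank of an ordinary MaxMinors system and then to invoke the analysis of \cite{BBCGPSTV20}. Since Lemma~\ref{lem:P_resta} already writes $\mathcal P_{\text{rest}}$ as a direct sum over the $\binom{n_1-1}{w_2-1}$ admissible sets $A$, I would fix one such $A\subset\{n+2..n+n_1\}$, $\#A=w_2-1$, set $A':=A-(n+1)\subset\{1..n_1-1\}$, and recall that the equations of $\mathcal P_{\text{rest},A}$ are exactly the $\widetilde{P_J}$ of Equation~\eqref{eq:P_J_tilde} for $J=A'\cup B$ with $B\subset\{n_1..n+n_1-1\}$, $\#B=w_1+1$; there are $\binom{n}{w_1+1}$ of them.

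The first step exploits the block shape of $\Cm$ in Equation~\eqref{eq:C_rqc}. Reading off Equation~\eqref{eq:P_J_tilde}, every minor variable $c_T$ surviving in such a $\widetilde{P_J}$ has $T\cap\{n+1..n+n_1\}=\{n+1\}\cup A$, and if $T''$ denotes the remaining $w_1$-subset of the $2n$ outer positions $\{1..n\}\sqcup\{n+n_1+1..2n+n_1\}$, then the block decomposition gives $|\Cm|_{*,T}=\pm\,|(\Cm_1\mid\Cm_3)|_{*,T''}\cdot|\Cm'_2|_{*,\{n+1\}\cup A}$ with a nonzero second factor (as $\Cm'_2$ has rank $w_2$, this holds for all $A$ at once with overwhelming probability). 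Relabelling $c_T$ by its outer part $T''$ therefore turns $\langle\mathcal P_{\text{rest},A}\rangle$ into the span of $\binom{n}{w_1+1}$ linear forms $Q_J$, $J=A'\cup B$, in the $\binom{2n}{w_1}$ variables indexed by the maximal-minor positions of the full-rank $w_1\times 2n$ matrix $(\Cm_1\mid\Cm_3)$, and it suffices to prove that $\dim\langle\{Q_J\}_J\rangle=\binom{n-1}{w_1}$.

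For the lower bound I would run the leading-term argument of Lemma~\ref{lem:P_Jlem} on the $Q_J$: combining Equation~\eqref{eq:P_J_tilde} with the systematic form $\Hm_{\yv}=(*\mid\Idm_{n+n_1-1})$ shows that the leading monomial of $Q_{A'\cup B}$ is the variable attached to $T''=(B\setminus\{\min B\})+n+1$, with leading coefficient the single entry $\pm(\Hm_{\yv})_{\min B,\,n+1}$, nonzero for a generic code. On the subfamily $\{B:\min B=n_1\}$, of size $\binom{n-1}{w_1}$, the corresponding sets $(B\setminus\{n_1\})+n+1$ are pairwise distinct, so these $\binom{n-1}{w_1}$ forms have pairwise distinct leading monomials and are thus $\ff{q^m}$-linearly independent, which gives $\dim\langle\mathcal P_{\text{rest},A}\rangle\ge\binom{n-1}{w_1}$. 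For the matching upper bound I would recognise $\{Q_J\}_J$, via the Cauchy--Binet expansion of the $\widetilde{P_J}$ and the identification of the $c_{T''}$ with the maximal minors of $(\Cm_1\mid\Cm_3)$, as (scalar multiples of) the MaxMinors equations of the auxiliary rank syndrome decoding instance recovering the weight-$w_1$ vector $(\ev_1,\ev_3)$ in the $[2n,n]_{q^m}$ code obtained by puncturing $\mathcal C$ at its $n_1$ middle positions; by the analogue of Lemma~\ref{lem:P_Jlem} for that instance, the span of those equations has dimension $\binom{2n-n-1}{w_1}=\binom{n-1}{w_1}$, and together with the lower bound this forces equality.

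I expect the main obstacle to be this last identification and the associated genericity bookkeeping: one must follow how the coefficient minors $|\Hm_{\yv}|_{J,T}$ specialise under the block decomposition of $\Cm$, produce explicitly the parity-check matrix of the punctured $[2n,n]_{q^m}$ code for which the $Q_J$ become the $P_{J'}$ up to sign, and verify that for a random $\mathcal C$ this code has full dimension $n$ and its MaxMinors system attains the generic rank, so that the bound of \cite{BBCGPSTV20} applies verbatim. All the probabilistic conditions used here (non-vanishing of the factors $|\Cm'_2|_{*,\{n+1\}\cup A}$ for every $A$, $(\Cm_1\mid\Cm_3)$ of full rank $w_1$, and the punctured code having dimension $n$ and maximal MaxMinors rank) hold simultaneously with very high probability by a union bound, which produces the qualifier in the statement.
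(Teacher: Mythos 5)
Your lower bound is sound and coincides with the paper's: you exhibit the $\binom{n-1}{w_1}$ equations indexed by $B\ni n_1$ and observe that their leading terms $c_{\{n+1\}\cup A\cup U}$, $U=(B\setminus\{n_1\})+n+1$, are pairwise distinct. The block factorization $\vert\Cm\vert_{*,T}=\pm\vert(\Cm_1\,\vert\,\Cm_3)\vert_{*,T''}\cdot\vert\Cm'_2\vert_{*,\cdot}$ is also correct (block-triangular determinant), but note that it is irrelevant to the rank computation: $\dim_{\ff{q^m}}\langle\mathcal P_{\text{rest},A}\rangle$ depends only on the coefficient vectors of the linear forms in the \emph{formal} variables $c_T$, not on the values those variables take at the solution, so the relabelling $c_T\mapsto c_{T''}$ is a harmless bijection and nothing more (and for $q=2$ the claim that all minors $\vert\Cm'_2\vert_{*,\{1\}\cup(A-n)}$ are simultaneously nonzero with overwhelming probability is in any case not true).

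The genuine gap is in the upper bound. You assert that each $\widetilde{P_J}$ is a scalar multiple of one of the $\binom{n-1}{w_1}$ MaxMinors equations of the punctured $[2n,n]_{q^m}$ instance. Since $\mathcal P_{\text{rest},A}$ contains $\binom{n}{w_1+1}$ equations, this would force many pairs $\widetilde{P_J},\widetilde{P_{J'}}$ sharing a leading term to be proportional; but each such equation contains $w_1+1$ monomials from $\tau_A$ and two equations with the same leading term share only \emph{one} of them (this is precisely the observation the paper's proof is built on), so for $w_1\ge 1$ they cannot be proportional and the claim is false as stated. The weaker claim that $\langle\mathcal P_{\text{rest},A}\rangle$ is merely \emph{contained} in the span of the punctured MaxMinors system would suffice, but you do not establish it --- you yourself flag the identification as the main obstacle --- and it is not a routine verification: after eliminating the identity columns, the coefficient of $c_{T''}$ in $\widetilde{P_J}$ is a minor $\vert\Hm_{\yv}\vert_{B\setminus S,\,T^-}$ whose \emph{row} index set varies with $T''$, so it does not match a Cauchy--Binet expansion $\sum_{T''}c_{T''}\vert\Hm'''\vert_{J''',T''}$ over a fixed row set without substantial extra work. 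The paper closes the upper bound by a different, direct route: a genericity/counting argument showing that no nonzero linear combination of the $\widetilde{P_J}$ can cancel all of its monomials from $\tau_A$, whence the projection onto the $\tau_A$-coordinates is injective on $\langle\mathcal P_{\text{rest},A}\rangle$ and the dimension is at most $\#\tau_A=\binom{n-1}{w_1}$. As it stands, your argument proves only the inequality $\dim_{\ff{q^m}}\langle\mathcal P_{\text{rest},A}\rangle\ge\binom{n-1}{w_1}$.
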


      \paragraph{\textbf{Finishing the attack by projecting over $\ff{q}$.}}
        The last step of the initial MaxMinors attack on \RSD is by solving the ``projected" linear system $\mathcal{P}_{\ff{q}} := \left\lbrace P_{j,J} \right\rbrace_{j,J}$ obtained by expressing the coefficients of the $P_J$'s in a fixed basis of $\ff{q^m}$ over $\ff{q}$ and taking each component, yielding $m$ times more equations. We proceed in a very similar way as in \cite{BBCGPSTV20} and due to space constraints we do not recall all the details of this step. Our final complexity estimate relies on 
    
        \begin{ass}\label{cor:coro_PJ}
        	Let $\mathcal{P}_{\text{indep},\ff{q}}$ (resp. $\mathcal{P}_{\text{rest}, \ff{q}}$) be the system over $\ff{q}$ obtained by projecting $\mathcal{P}_{\text{indep}}$ (resp. $\mathcal{P}_{\text{rest}}$) where the variables in $\zeta$ had already been removed, let $\mathcal{N}_{\ff{q}} := \dim_{\ff{q}}\left \langle \mathcal{P}_{\text{indep},\ff{q}} \right\rangle$, let $\mathcal{\nu}_{\ff{q}} := \dim_{\ff{q}}\left \langle \mathcal{P}_{\text{rest},\ff{q}}\right\rangle$ and let $M$ as defined in Equation \eqref{eq:removed_M}. We assume that 
        	\begin{align}\label{eq:N_fq}
        	\mathcal{N}_{\ff{q}} & =  m\dim_{\ff{q^m}}\left \langle \mathcal{P}_{\text{indep}} \right\rangle = m\textstyle\sum_{i=w_2}^{w_1+w_2} \binom{n_1-1}{i}\binom{n}{w_1 + w_2 - i}
        	\end{align}
        	when this value is $\leq \textstyle\binom{2n+n_1}{w_1+w_2}-M$ and $\mathcal{N}_{\ff{q}} = \textstyle\binom{2n+n_1}{w_1+w_2} - M-1$ otherwise, and 
        	\begin{align}\label{eq:nu_fq}
        	\mathcal{\nu}_{\ff{q}} & = m\dim_{\ff{q^m}}\left \langle  \mathcal{P}_{\text{rest}} \right\rangle =  \textstyle m\binom{n_1-1}{w_2 - 1}\binom{n-1}{w_1},
        	\end{align}
        	provided that this value is $\leq \textstyle\binom{n_1-1}{w_2-1}\binom{2n}{w_1}$.
        \end{ass}

          To solve the final system, one can start by performing linear algebra on $\mathcal{P}_{\text{rest},\ff{q}}$ and then substitute $\mathcal{\nu}_{\ff{q}}$ variables corresponding to an echelonized basis of 
            $\left \langle \mathcal{P}_{\text{rest},\ff{q}} \right \rangle$ in the system $\mathcal{P}_{\text{indep},\ff{q}}$ to get a new system $\mathcal{P}'_{\text{indep},\ff{q}}$. The final step is then to solve the linear system $\mathcal{P}'_{\text{indep},\ff{q}}$ in $\textstyle\binom{2n+n_1}{w_1+w_2} - M - \mathcal{\nu}_{\ff{q}}$ variables.
            
            \begin{coro}[Same notations as in Assumption \ref{cor:coro_PJ}]\label{theo:final_cost}
            	\ Let $\mathcal{P}_{\text{indep},\ff{q}}$ and let \ $\mathcal{P}_{\text{rest},\ff{q}}$ denote the projected systems from Assumption \ref{cor:coro_PJ}. We consider $\mathcal{P}'_{\text{indep},\ff{q}}$ the linear system obtained from $\mathcal{P}_{\text{indep},\ff{q}}$ after plugging $\mathcal{\nu}_{\ff{q}}$ equations from the echelon form of $\mathcal{P}_{\text{rest},\ff{q}}$ to substitute variables. Assuming that the system $\mathcal{P}'_{\text{indep},\ff{q}}$ can be solved, namely $\mathcal{N}_{\ff{q}} \geq \textstyle\binom{2n+n_1}{w_1+w_2} - M - \mathcal{\nu}_{\ff{q}}-1$, the complexity of solving the system is 
            	\begin{equation*}\label{eq:final_complex}
            	\mathcal{O}\left( \mathcal{N}_{\ff{q}}
            	\textstyle\left( \binom{2n+n_1}{w_1+w_2} - M - \mathcal{\nu}_{\ff{q}} \right)^{\omega - 1}\right)
            	\end{equation*}
            	operations in $\ff{q}$, where $\omega$ is a linear algebra constant.
            \end{coro}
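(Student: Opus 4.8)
\textbf{Proof plan for Corollary~\ref{theo:final_cost}.}
The statement is at heart a linear-algebra complexity count for the final ``projection over $\ff{q}$'' step, so the plan is to (i) pin down the number of variables of $\mathcal{P}'_{\text{indep},\ff{q}}$, (ii) pin down its number of linearly independent equations, and then (iii) invoke the cost of Gaussian elimination, checking that the preliminary echelonization of $\mathcal{P}_{\text{rest},\ff{q}}$ and the substitutions are dominated by it. Throughout, one works with the system $\mathcal{P}$ after the $M$ variables of $\zeta$ have been set to zero; by Proposition~\ref{prop:MM_easy}, case~\ref{case:Q}, the equations of $\mathcal{P}_{\text{lost}}$ vanish identically, so only $\mathcal{P}_{\text{indep}}$ and $\mathcal{P}_{\text{rest}}$ (equivalently their $\ff{q}$-projections $\mathcal{P}_{\text{indep},\ff{q}}$ and $\mathcal{P}_{\text{rest},\ff{q}}$) carry information.

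First I would establish the variable count. By Proposition~\ref{prop:MM_easy}, case~\ref{case:Qk+1}, after removing $\zeta$ the system $\mathcal{P}_{\text{indep}}$, hence $\mathcal{P}_{\text{indep},\ff{q}}$, involves at most $\binom{2n+n_1}{w_1+w_2} - M$ minor variables $c_T \in \ff{q}$. Substituting the $\mathcal{\nu}_{\ff{q}} = \dim_{\ff{q}}\langle \mathcal{P}_{\text{rest},\ff{q}}\rangle$ variables attached to the pivots of a row-echelon form of $\mathcal{P}_{\text{rest},\ff{q}}$ removes exactly $\mathcal{\nu}_{\ff{q}}$ of them; here Assumption~\ref{cor:coro_PJ} (together with Proposition~\ref{prop:MM}) guarantees that $\mathcal{\nu}_{\ff{q}} \le \binom{n_1-1}{w_2-1}\binom{2n}{w_1}$ does not exceed the number of variables occurring in $\mathcal{P}_{\text{rest},\ff{q}}$, so the substitution is well defined. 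Hence $\mathcal{P}'_{\text{indep},\ff{q}}$ is a linear system in at most $N_{\mathrm{var}} := \binom{2n+n_1}{w_1+w_2} - M - \mathcal{\nu}_{\ff{q}}$ variables over $\ff{q}$.

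Next I would argue that the $\ff{q}$-rank of $\mathcal{P}'_{\text{indep},\ff{q}}$ equals $\mathcal{N}_{\ff{q}}$. The key input is the independence half of Proposition~\ref{prop:MM}: after removing $\zeta$, the spans of $\mathcal{P}_{\text{indep}}$ and $\mathcal{P}_{\text{rest}}$ are in direct sum over $\ff{q^m}$, so, projecting and invoking the values of $\mathcal{N}_{\ff{q}}$ and $\mathcal{\nu}_{\ff{q}}$ from Assumption~\ref{cor:coro_PJ}, $\dim_{\ff{q}}\langle \mathcal{P}_{\text{indep},\ff{q}} \cup \mathcal{P}_{\text{rest},\ff{q}}\rangle = \mathcal{N}_{\ff{q}} + \mathcal{\nu}_{\ff{q}}$. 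Echelonizing $\mathcal{P}_{\text{rest},\ff{q}}$ and using its $\mathcal{\nu}_{\ff{q}}$ pivot rows to eliminate the corresponding variables in $\mathcal{P}_{\text{indep},\ff{q}}$ is an invertible rewriting of this span as $\langle \mathcal{P}'_{\text{indep},\ff{q}}\rangle \oplus \langle \text{pivot rows}\rangle$; since the pivot rows contribute dimension $\mathcal{\nu}_{\ff{q}}$, we get $\dim_{\ff{q}}\langle \mathcal{P}'_{\text{indep},\ff{q}}\rangle = \mathcal{N}_{\ff{q}}$. Thus $\mathcal{P}'_{\text{indep},\ff{q}}$ has $\mathcal{N}_{\ff{q}}$ linearly independent equations in at most $N_{\mathrm{var}}$ variables. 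Under the solvability hypothesis $\mathcal{N}_{\ff{q}} \ge N_{\mathrm{var}} - 1$, its number of independent equations is at least its number of variables minus one, the ``$-1$'' being the usual normalization of the one-dimensional solution line in the MaxMinors modelling, exactly as in Equation~\eqref{eq:cout_mm}; hence the system determines the sought minors. Solving it amounts to echelonizing an $\mathcal{N}_{\ff{q}} \times N_{\mathrm{var}}$ matrix with $\mathcal{N}_{\ff{q}} \ge N_{\mathrm{var}} - 1$, which costs $\mathcal{O}\bigl(\mathcal{N}_{\ff{q}}\, N_{\mathrm{var}}^{\omega-1}\bigr)$ operations in $\ff{q}$; this is the announced bound, and it dominates the cost of the preliminary echelonization of $\mathcal{P}_{\text{rest},\ff{q}}$ and of the substitutions since $\mathcal{\nu}_{\ff{q}} \le \mathcal{N}_{\ff{q}}$ and $\mathcal{P}_{\text{rest},\ff{q}}$ has at most $N_{\mathrm{var}} + \mathcal{\nu}_{\ff{q}}$ variables.

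The step I expect to be the most delicate is the rank bookkeeping of $\mathcal{P}'_{\text{indep},\ff{q}}$: one must check that substituting the $\mathcal{\nu}_{\ff{q}}$ pivot variables of $\mathcal{P}_{\text{rest},\ff{q}}$ into $\mathcal{P}_{\text{indep},\ff{q}}$ neither lowers the rank below $\mathcal{N}_{\ff{q}}$ — this is precisely where the independence statement of Proposition~\ref{prop:MM} is used — nor requires any heuristic beyond what Assumption~\ref{cor:coro_PJ} already grants. One also has to be careful that the various counts are ``at most'' or ``at least'' in the right direction and that they remain tight enough for the estimate to be meaningful in the parameter ranges of interest; the ``$-1$'' normalization and the genericity behind Assumption~\ref{cor:coro_PJ} are the only non-syntactic ingredients.
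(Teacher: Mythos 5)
Your proposal is correct and follows essentially the same route as the paper, which proves this corollary implicitly via the paragraph preceding it: echelonize $\mathcal{P}_{\text{rest},\ff{q}}$, substitute its $\mathcal{\nu}_{\ff{q}}$ pivot variables into $\mathcal{P}_{\text{indep},\ff{q}}$, and charge the standard $\mathcal{O}\bigl(\mathcal{N}_{\ff{q}}\, N_{\mathrm{var}}^{\omega-1}\bigr)$ cost for the resulting linear system in $\binom{2n+n_1}{w_1+w_2}-M-\mathcal{\nu}_{\ff{q}}$ variables, with the rank and variable counts supplied by Propositions \ref{prop:MM_easy} and \ref{prop:MM} and Assumption \ref{cor:coro_PJ}. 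Your extra care about the $\ff{q}$-rank of $\mathcal{P}'_{\text{indep},\ff{q}}$ after substitution is a reasonable elaboration of what the paper leaves to the solvability hypothesis and the genericity behind Assumption \ref{cor:coro_PJ}, not a departure from its argument.
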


            However, the projected linear system cannot be solved directly when there are not enough equations compared to the number of minor variables, \emph{i.e.} $\mathcal{N}_{\ff{q}} < \textstyle\binom{2n+n_1}{w_1+w_2} - M - \mathcal{\nu}_{\ff{q}}-1$. In this case, a method suggested in \cite{BBCGPSTV20} is an hybrid approach by adding linear constraints on these minor variables which are obtained by fixing the entries of $a \geq 0$ columns in the matrix $\Cm$. Here, like it was done in \cite{AABBBBCDGHZ20}, 
            it is possible to take advantage of the particular 
            structure of $\Cm$ given in Equation \eqref{eq:C_rqc} by fixing columns containing 
            only $w_1$ non-zero coordinates, which leads to a smaller exponential factor of $q^{aw_1}$ in the final cost instead of the naive $q^{a(w_1+w_2)}$. The cost claimed in Theorem \ref{theo:cost_mm} follows.  
        
  \subsection{Attacks on the \RSL problem}\label{sec:attaques_rsl}

      In this section, we consider an \RSL$(m,n,k,r,N)$-instance, say $N$ distinct 
      \RSD instances whose errors share the same support of dimension $r$. 
      This number $N$ is a crucial parameter to estimate the hardness of \RSL 
      and in particular to compare it to \RSD. For instance, this problem can be solved in 
      polynomial time when $N \geq nr$ due to \cite{GHPT17a}. 
      A more powerful attack was later found in \cite{DT18b} and it suggests that 
      secure \RSL instances must satisfy a stronger condition: $N < kr$. 

      In what follows, we give a new combinatorial attack against \RSL, it is more 
      efficient than the previous combinatorial attacks, plus it enables us to decrease 
      the threshold where the \RSL problem starts to be solvable in polynomial time. 
      In addition to this, we give more explicit formulas 
      to clarify the recent algebraic attack of \cite{BB21}. 

    \subsubsection{New combinatorial attack on \RSL.}

      \begin{theorem}[Combinatorial attack on \RSL]\label{thm:RSL_combi}
        There exists a combinatorial attack on \RSL$(m,n,k,r,N)$ with complexity 
        \[
        \widetilde{\mathcal{O}}\left(q^{r\left(m-\left\lfloor \frac{m(n-k)-N}{n-a} \right\rfloor\right)}\right)
        \]
        operations in $\fq$, where $a:=\left\lfloor \frac{N}{r} \right\rfloor$.
      \end{theorem}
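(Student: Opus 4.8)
\noindent The statement is an existence claim about an attack, so the proof will (i) exhibit an algorithm and (ii) bound its running time. I would follow the template of the combinatorial (ISD-type) attacks on $\RSD$ of \cite{GRS16,AGHT18}, the new ingredient being that the $N$ instances of an $\RSL(m,n,k,r,N)$ tuple share the \emph{same} secret support $\mathcal{V}$, with $\dim_{\Fq}\mathcal{V}=r$. The algorithm I would describe: put $\mathbf{H}$ into systematic form, draw a uniformly random $\Fq$-subspace $V\subseteq\Fqm$ of dimension $r' := \left\lfloor\frac{m(n-k)-N}{n-a}\right\rfloor$ with $a := \left\lfloor \frac{N}{r}\right\rfloor$, and bet that $\mathcal{V}\subseteq V$ (following \cite{AGHT18} one may instead bet that $\alpha\mathcal{V}\subseteq V$ for some $\alpha\in\Fqm^{*}$, which multiplies the success probability by $\approx q^{m}$; since this is compatible with the rest of the analysis I state the simpler bound). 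Conditioned on a correct guess, every coordinate of every error vector lies in $V$; writing these coordinates in a fixed $\Fq$-basis of $V$ makes every syndrome relation $\Fq$-linear in the new unknowns. One then fixes $a$ suitably chosen coordinate positions — exactly as in the hybrid strategy already used for $\RSD$ and $\NHRSD$ in \cite{BBCGPSTV20,AABBBBCDGHZ20} — and a bookkeeping of $\Fq$-equations against $\Fq$-unknowns, in which the $N$ dependencies forced by the common support are taken into account, singles out $r'=\left\lfloor\frac{m(n-k)-N}{n-a}\right\rfloor$ as the largest guessed dimension for which the resulting linear system is (generically) of full rank; the value $a=\lfloor N/r\rfloor$ is the one maximising this $r'$, i.e.\ minimising the final exponent. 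Having solved the system for a correct guess, a basis of $\mathcal{V}$ is read off directly from the recovered error coordinates.

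The cost analysis is then short. The probability that a uniformly random $r'$-dimensional subspace of $\Fqm$ contains the fixed $r$-dimensional space $\mathcal{V}$ is $\binom{m-r}{r'-r}_q\big/\binom{m}{r'}_q = q^{-r(m-r')}\bigl(1+o(1)\bigr)$, so the expected number of random draws of $V$ before a correct one is $q^{r(m-r')}$; each draw triggers a single Gaussian elimination on a system of size polynomial in $m,n,N$, whose cost is hidden in the $\widetilde{\mathcal{O}}$ exactly as the polynomial factor is hidden in \eqref{eq:cout_2n}. Multiplying gives total complexity $\widetilde{\mathcal{O}}\!\bigl(q^{r(m-r')}\bigr)=\widetilde{\mathcal{O}}\!\bigl(q^{r(m-\lfloor\frac{m(n-k)-N}{n-a}\rfloor)}\bigr)$, as claimed (using the $\Fqm$-linearity trick only improves this by a factor $\approx q^{m}$).

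The main obstacle — the only non-routine point — is to justify that, after a correct guess, the $\Fq$-linear system really does have the rank predicted by the naive count, so that both the value $r'=\left\lfloor\frac{m(n-k)-N}{n-a}\right\rfloor$ is admissible and a correct guess recovers $\mathcal{V}$ in polynomial time. I would do this by writing the system out explicitly in bases of $V$ and of $\Fqm/V$, identifying precisely the $N$ linear relations induced by $\Supp(\mathbf{E})\subseteq V$ together with the effect of fixing the $a$ positions, and then invoking a genericity assumption on $\mathbf{H}$ and on the secret — of the same flavour as the heuristics behind \cite{GRS16,AGHT18} and the algebraic analysis of \cite{BB21} — for the residual full-rankness. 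A useful consistency check for the whole argument (and for the value of $a$) is that forcing the exponent $r(m-r')\le 0$ returns exactly the polynomial-time threshold $N>kr\,\frac{m}{m-r}$ announced in the introduction.
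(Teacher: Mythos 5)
Your overall template (Prange-style support guessing, success probability $q^{-r(m-r_1)}$ for a random $r_1$-dimensional $V$ to contain the secret support, polynomial linear algebra per trial) matches the paper's, and your final cost computation is the right one. But the step that actually produces $r_1=\left\lfloor\frac{m(n-k)-N}{n-a}\right\rfloor$ with $a=\lfloor N/r\rfloor$ is missing, and what you put in its place does not work. The paper's mechanism is the following: since all $N$ errors $\ev_1,\dots,\ev_N$ take their coordinates in a common $r$-dimensional space, the evaluation map $\fq^N\to\mathcal{V}^a$, $(\lambda_i)\mapsto\big(\sum_i\lambda_i e_{i,j}\big)_{j\le a}$, has a nontrivial kernel as soon as $N>ar$; hence there \emph{exists} a secret $\fq$-linear combination $\sum_i\lambda_i\ev_i=(\,\boldsymbol{0}_a\mid\widetilde{\ev}\,)$ vanishing on $a=\lfloor N/r\rfloor$ coordinates, whose syndrome is the corresponding (unknown) combination $\sum_i\lambda_i\sv_i$. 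One then decodes this single shortened error of length $n-a$: the system has $(n-k)m$ equations over $\fq$ in $(n-a)r_1$ support-coordinate unknowns plus the $N$ scalar unknowns $\lambda_i$, and requiring at least as many equations as unknowns gives exactly $r_1\le\frac{m(n-k)-N}{n-a}$. Here $a$ is not an optimization parameter: it is the largest number of positions a nontrivial combination can be forced to vanish on, which is what pins it to $\lfloor N/r\rfloor$.

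Your proposal replaces this with ``fix $a$ suitably chosen coordinate positions as in the hybrid strategy of \cite{BBCGPSTV20}'' plus an unspecified ``bookkeeping in which the $N$ dependencies forced by the common support are taken into account.'' Neither ingredient yields the claimed exponent. Fixing (i.e.\ guessing) entries at $a$ positions in the hybrid sense costs an extra exponential factor $q^{\Theta(ar)}$ per trial, which your cost analysis does not charge; and if instead you simply write all $N$ syndrome relations in a basis of $V$ without the combination trick, you get $N(n-k)m$ equations against $Nnr_1$ unknowns, i.e.\ $r_1\le\frac{m(n-k)}{n}$ — no improvement over a single \RSD instance, so the ``$N$ dependencies'' do not by themselves single out $\lfloor\frac{m(n-k)-N}{n-a}\rfloor$. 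The consistency check against the threshold $N>kr\frac{m}{m-r}$ is fine as a sanity check but cannot substitute for the construction of the vanishing linear combination, which is the one non-routine idea of the proof.
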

      \begin{proof}
     
      	Let $\sv_i \in \ff{q^m}^{n-k},~1 \leq i \leq N$ denote the $N$ syndromes from the \RSL instance. By definition there exist $\ev_i \in \ff{q^m}^n,~\norme{\ev_i}=r,~\Hm \trsp{\ev_i} =\trsp{\sv_i}$, where $\Hm\in \fqm^{(n-k) \times n}$ is a parity-check matrix and where $\text{Supp}(\ev_i)$ does not depend on $i$. Similarly to \cite{GRSZ14,BB21}, this last property enables us to use the fact that there exists an $\mathbb{F}_q$-linear combination $(\ \boldsymbol{0}_a\ |\ \widetilde{\ev} \ ) \in \ff{q^m}^n$ of the $\ev_i$'s which is all-zero on its first $a:=\left\lfloor \frac{N}{r} \right\rfloor$ coordinates. This error corresponds to a \emph{secret} linear combination of the syndromes, more precisely
      	\begin{equation*}
        \textstyle \exists \lambda_1,\lambda_2,\hdots,\lambda_N \in \mathbb{F}_q, \quad 
        \Hm \trsp{(\ \boldsymbol{0}_a\ |\ \widetilde{\ev} \ )} = \sum_{i=1}^{N} \lambda_i \trsp{\sv_i}. 
      	\end{equation*}
      	By setting $\widetilde{\Hm} := \Hm_{*,\{a+1\dots n\}}$, this is equivalent to
      	
      	\begin{equation}\label{eq:RSL_combi}
      	\widetilde{\Hm}\trsp{\widetilde{\ev}} = \sum_{i=1}^{N} \lambda_i \trsp{\sv_i}.
      	\end{equation}
        Equation \eqref{eq:RSL_combi} can be seen as $n-k$ parity-check equations which may be exploited by the classical combinatorial technique, see \cite{GRS16,AGHT18} or the discussion above Equation \eqref{eq:r1}. The main difference here is that the right hand this equation also contains $N$ unknowns $\lambda_i \in \fq$. Still, we can pick a vector space 
        $V$ of dimension $r_1 \ge r$ and hoping that $\text{Supp}\left(\widetilde{\ev} \right) \subset V$. If this is the case, one can derive from \eqref{eq:RSL_combi} a linear system of $(n-k)m$ equations over $\ff{q}$ in $N+(n-a)r_1$ variables, where the first $N$ variables merely correspond to the $\lambda_i$'s. The final cost is then obtained by looking at the optimal value of $r_1$ which allows to solve this linear system, namely 
        $r_1 := \left\lfloor \frac{m(n-k)-N}{n-a} \right\rfloor.$ \qed
      \end{proof}

      Thanks to Theorem \ref{thm:RSL_combi}, we are able to derive a 
      value of $N$ so that an \RSL instance is solvable in polynomial time, 
      this is the topic of Corollary \ref{cor:bound_polynomial}. 

      \begin{corollary}[New Bound for \RSL]\label{cor:bound_polynomial}
        \ An \RSL instance with parameters $\allowbreak (m,n,k,r,N)$ can be solved in polynomial time 
        using the attack of Theorem \ref{thm:RSL_combi} as long as 
        \[
          N>kr\frac{m}{m-r}.
        \]
      \end{corollary}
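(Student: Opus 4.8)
The plan is to pin down exactly when the exponential factor in the complexity of Theorem~\ref{thm:RSL_combi} disappears. Recall that the attack of that theorem picks an $\ff{q}$-subspace $V \subseteq \ff{q^m}$ of dimension $r_1 := \left\lfloor \frac{m(n-k)-N}{n-a} \right\rfloor$ (with $a := \lfloor N/r \rfloor$), hopes that $\Supp(\widetilde{\ev}) \subseteq V$, and in case of success solves the $\ff{q}$-linear system~\eqref{eq:RSL_combi}; its cost is $\widetilde{\mathcal{O}}(q^{r(m-r_1)})$. First I would observe that, since $V$ sits inside the $m$-dimensional $\ff{q}$-space $\ff{q^m}$, one always has $r_1 \le m$, and that as soon as the formula above evaluates to a number $\ge m$ one may simply take $r_1 = m$ and $V = \ff{q^m}$: then $\Supp(\widetilde{\ev}) \subseteq V$ holds with probability $1$, no enumeration of subspaces is needed, and the attack boils down to solving a single $\ff{q}$-linear system (in the $\lambda_i$'s together with the $(n-a)m$ coordinates of $\widetilde{\ev}$ over $\ff{q}$) by Gaussian elimination in polynomial time, after which the secret subspace is recovered as $\mathcal{V} = \Supp(\widetilde{\ev})$. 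So the statement reduces to determining when $\left\lfloor \frac{m(n-k)-N}{n-a} \right\rfloor \ge m$.

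Next I would translate this into a bound on $N$. As $m$ is a positive integer and $n-a>0$ in the relevant range, the inequality $\left\lfloor \frac{m(n-k)-N}{n-a} \right\rfloor \ge m$ is equivalent to $\frac{m(n-k)-N}{n-a} \ge m$, hence to $m(n-k)-N \ge m(n-a)$, hence to $a \ge k + \frac{N}{m}$. Substituting $a = \lfloor N/r \rfloor$ and treating it as $N/r$ up to the unavoidable $O(1)$ rounding, this reads $\frac{N}{r} - \frac{N}{m} \ge k$, i.e. $N\cdot\frac{m-r}{rm} \ge k$, i.e. $N \ge kr\frac{m}{m-r}$. Thus $N > kr\frac{m}{m-r}$ forces the exponent $r(m-r_1)$ to be non-positive, so the attack of Theorem~\ref{thm:RSL_combi} runs in polynomial time, which is the claim.

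The step I expect to be most delicate is the bookkeeping around the two floors: one must check that $n - a$ stays positive so that clearing the denominator preserves the direction of the inequality (this holds because $a = \lfloor N/r\rfloor$ is of order $km/(m-r)$, far below $n$), and one should be upfront that replacing $\lfloor N/r\rfloor$ by $N/r$ only costs an additive constant, so the exact threshold is the asymptotic one $N = kr\frac{m}{m-r}$ and it is the strict inequality in the statement that makes the argument go through cleanly. The remaining ingredients — polynomial-time Gaussian elimination, and the fact that a generic $\ff{q}$-linear combination $\widetilde{\ev}$ has full support $\mathcal{V}$ with overwhelming probability — are standard and I would only mention them in passing.
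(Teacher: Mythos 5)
Your proposal is correct and follows essentially the same route as the paper: both arguments observe that the exponential factor $q^{r(m-r_1)}$ of Theorem~\ref{thm:RSL_combi} vanishes precisely when $r_1=\left\lfloor \frac{m(n-k)-N}{n-a}\right\rfloor$ reaches $m$, and then solve that condition (with $a=\lfloor N/r\rfloor$) for $N$ to obtain the threshold $kr\frac{m}{m-r}$. You are somewhat more explicit than the paper about why $r_1\ge m$ yields a genuinely polynomial algorithm (take $V=\ff{q^m}$ so the guess succeeds with probability $1$) and about the floor/sign bookkeeping that the paper dismisses with a ``without loss of generality'' — a welcome but not substantively different addition.
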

      \begin{proof}
        This is a straightforward application of the complexity given by 
        Theorem \ref{thm:RSL_combi} 
        which states that the attack has a polynomial cost (hidden in the 
        $\widetilde{\mathcal{O}}$) and an exponential cost of 
        $q^{r(m-\delta)}$ where 
        $
          \delta = \left\lfloor \frac{m(n-k)-N}{n-a} \right\rfloor.
        $ 
        Since $r\ne 0$ by definition, the only way for the exponential component to vanish is 
        if $\delta=m$. Without loss of generality, we assume that $\frac{N}{r}$ and 
        $\frac{m(n-k)-N}{n-N/r}$ are integers. 
        By solving a simple equation, one gets that 
        $\delta=m \Longleftrightarrow N=kr\frac{m}{m-r}$, hence the result. \qed
      \end{proof}
      Note that, as long as $\frac{m}{m-r}$, which is often the case for cryptographic parameters, 
      our bound is lower than the previous one, given in \cite{GHPT17a}, 
      which was $N>nr$.

    \subsubsection{Algebraic attack of \cite{BB21}.}
      This attack consists in solving a bilinear system at some 
      bi-degree $(b,1)$ for $b \geq 1$ by using an XL approach 
      similar to \cite{BBCGPSTV20}. The two cases $``\delta = 0"$ 
      and $``\delta > 0"$ presented below correspond to two different 
      specializations of this bilinear system which lead to different 
      costs. Here, we provide explicit formulas  to compute these two 
      complexities (for the binary field $\ff{2}$). In particular, we 
      also include the values of $\alpha_R$ and $\alpha_{\lambda}$ which correspond 
      to the hybrid approach mentioned in \cite{BB21}. 
      Finally, note that these formulas are valid only when $N>n-k-r$.
      
      \paragraph{\textbf{First case: $\delta=0$.}} 
      Let $a$ be the unique integer such that $ar < N \le (a+1)r$, and let $N':=ar+1$.
      For $1 \le b \le r+1$, 
      the number of variables for linearization is 
      \begin{equation}\label{eq:mb}
      \mathcal{M}_{\le b}^{\mathbb{F}_2} := 
      \sum_{i=1}^{b} 
      \binom{n-a-\alpha_R}{r}
      \binom{N'-\alpha_\lambda}{i}, 
      \end{equation}
      where $0 \le \alpha_R <n-a-r$, and 
      $0 \le \alpha_\lambda<N'-b$, and the number of linearly independent equations at hand is equal to $m\mathcal{N}_{\le b}^{\mathbb{F}_2}$ where
      \begin{equation}\label{eq:nb}
      \mathcal{N}_{\le b}^{\mathbb{F}_2} := 
      \sum_{i=1}^{b} 
      \sum_{d=1}^{i}\sum_{j=1}^{n-k} 
      \binom{j-1}{d-1}\binom{n-k-j}{r-d+1}\binom{N'-\alpha_\lambda-j}{i-d}.
      \end{equation}
      The complexity is given by 
      \begin{multline}\label{complexite_RSL}
          \mathcal{O}
      \Bigg(\operatorname{min}\Bigg(
      2^{r\alpha_R+\alpha_\lambda}
      m\mathcal{N}_{\le b}^{\mathbb{F}_2}
      (\mathcal{M}_{\le b}^{\mathbb{F}_2})^{\omega-1}, \\ 
      2^{r\alpha_R+\alpha_\lambda}
      (N'-\alpha_\lambda)\binom{k-a+1+r}{r}(\mathcal{M}_{\le b}^{\mathbb{F}_2})^2
      \Bigg)\Bigg)
      \end{multline}
      provided that $m\mathcal{N}_{\le b}^{\mathbb{F}_2} \ge \mathcal{M}_{\le b}^{\mathbb{F}_2} -1$, and 
      where the values of $b,\alpha_R$, and $\alpha_\lambda$ are chosen to minimize the complexity.

      \paragraph{$\mathbf{\boldsymbol{\delta}>0}$ \textbf{case}.}
      Let $\delta$ be a positive integer such that $N \ge \delta(n-r+\delta)$, 
      let $a$ be the greatest integer such that 
      $N > \delta(n-r+\delta)+a(r-\delta)$ and let $N':=\delta(n-r+\delta)+a(r-\delta)$.
      To find the complexity of this attack, one replaces $r$ by $r-\delta$ in the expressions of  $\mathcal{M}_{\le b}^{\mathbb{F}_2}$ and 
      $\mathcal{N}_{\le b}^{\mathbb{F}_2}$ from Equations (\ref{eq:mb}) and 
      (\ref{eq:nb}). The complexity is finally obtained with Equation (\ref{complexite_RSL}) and its minimal value now depends on $\delta>0$ as well as $b,\alpha_R$, and $\alpha_\lambda$ as above.

    \subsubsection{Visualization of the attacks against \RSL.}

\begin{figure}[ht!]
  \centering
  \caption{Complexity $\mathcal{C}$ (in bits) of the best known attacks against an 
  RSL instance with parameters $[m,n,k,r]=[61, 100, 50, 7]$ in terms of the 
  number $N>n-k-r$ of syndromes. In the legend: $\mathbf{C}$ stands for 
  our combinatorial attack (see Theorem \ref{thm:RSL_combi}), 
  all the other symbols correspond 
  to the 2 cases of the algebraic attack \cite{BB21} where 
  the ``*'' indicates the use of Wiedemann algorithm instead of Strassen's. \label{fig:RSL}}
  \begin{tikzpicture} 
    \begin{axis}
      [
      legend entries={$\delta_{=0}$,$\delta_{=0}^*$,$\delta_{>0}$,\ $\delta_{>0}^*$ \ , $\mathbf{C}$},
      legend pos=south west,
      height=8cm,
      width=12.5cm,
      xmin=10,xmax=305,
      ymin=80,ymax=220,
      xlabel={Number of syndromes $N$},
      ylabel={$\mathcal{C}$ \hspace*{-5mm}},
      ylabel style={rotate=-90}]
        \addplot [
            scatter,
            only marks,
            point meta=explicit symbolic,
            scatter/classes={
                d_zero={mark=square*,blue},
                d_zero_w={mark=triangle*,blue},
                d_pos={mark=square*,red},
                d_pos_w={mark=triangle*,red},
                e={mark=square*,olive}    
            },
        ] table [meta=label] {
            x      y        label
            44    138   d_zero
            
            50    138   d_zero
            
            56    138   d_zero
            
            62    138   d_zero
            
            68    138   d_zero
            
            74    138   d_zero
            
            80    138   d_zero
            
            86    138   d_zero
            
            92    138   d_zero
            
            98    138   d_zero
            
            104   138   d_zero
            
            110   138   d_zero
            
            116   138   d_zero
            
            122   138   d_zero
            
            128   138   d_zero
            
            134   138   d_zero
            
            140   138   d_zero
            
            146   138   d_zero
            
            152   138   d_zero
            
            158   138   d_zero
            
            164   138   d_zero
            
            170   138   d_zero
            
            176   138   d_zero
            
            182   138   d_zero
            
            188   138   d_zero
            
            194   138   d_zero
            
            200   138   d_zero
            
            206   138   d_zero
            
            212   138   d_zero
            
            218   138   d_zero
            
            224   138   d_zero
            
            230   138   d_zero
            
            236   138   d_zero
            
            242   138   d_zero
            
            248   138   d_zero
            
            254   138   d_zero
            
            260   138   d_zero
            
            266   138   d_zero
            
            272   138   d_zero
            
            278   138   d_zero
            
            284   138   d_zero
            
            290   138   d_zero
            
            296   138   d_zero
            
            300   138   d_zero

            44      138   d_zero_w
            
            50      138   d_zero_w
            
            56      138   d_zero_w
            
            62      138   d_zero_w
            
            68      138   d_zero_w
            
            74      138   d_zero_w
            
            80      138   d_zero_w
            
            86      138   d_zero_w
            
            92      138   d_zero_w
            
            98      138   d_zero_w
            
            104     138   d_zero_w
            
            110     138   d_zero_w
            
            116     138   d_zero_w
            
            122     138   d_zero_w
            
            128     138   d_zero_w
            
            134     138   d_zero_w
            
            140     138   d_zero_w
            
            146     138   d_zero_w
            
            152     138   d_zero_w
            
            158     138   d_zero_w
            
            164     138   d_zero_w
            
            170     138   d_zero_w
            
            176     138   d_zero_w
            
            182     138   d_zero_w
            
            188     138   d_zero_w
            
            194     138   d_zero_w
            
            200     138   d_zero_w
            
            206     138   d_zero_w
            
            212     138   d_zero_w
            
            218     138   d_zero_w
            
            224     138   d_zero_w
            
            230     138   d_zero_w
            
            236     138   d_zero_w
            
            242     138   d_zero_w
            
            248     138   d_zero_w
            
            254     138   d_zero_w
            
            260     138   d_zero_w
            
            266     138   d_zero_w
            
            272     138   d_zero_w
            
            278     137   d_zero_w
            
            284     133   d_zero_w
            
            290     130   d_zero_w
            
            296     123   d_zero_w
            
            300     123   d_zero_w

            101   186   d_pos
            
            107   186   d_pos
            
            113   186   d_pos
            
            119   186   d_pos
            
            125   186   d_pos
            
            131   186   d_pos
            
            137   186   d_pos
            
            143   186   d_pos
            
            149   186   d_pos
            
            155   186   d_pos
            
            161   186   d_pos
            
            167   186   d_pos
            
            173   186   d_pos
            
            179   186   d_pos
            
            185   186   d_pos
            
            191   186   d_pos
            
            197   186   d_pos
            
            203   186   d_pos
            
            209   186   d_pos
            
            215   186   d_pos
            
            221   186   d_pos
            
            227   186   d_pos
            
            233   186   d_pos
            
            239   186   d_pos
            
            245   186   d_pos
            
            251   186   d_pos
            
            257   186   d_pos
            
            263   186   d_pos
            
            269   186   d_pos
            
            275   182   d_pos
            
            281   176   d_pos
            
            287   170   d_pos
            
            293   168   d_pos
            
            300   164   d_pos

            101     187   d_pos_w
            
            107     187   d_pos_w
            
            113     187   d_pos_w
            
            119     187   d_pos_w
            
            125     187   d_pos_w
            
            131     187   d_pos_w
            
            137     187   d_pos_w
            
            143     187   d_pos_w
            
            149     187   d_pos_w
            
            155     187   d_pos_w
            
            161     187   d_pos_w
            
            167     187   d_pos_w
            
            173     187   d_pos_w
            
            179     187   d_pos_w
            
            185     187   d_pos_w
            
            191     187   d_pos_w
            
            197     187   d_pos_w
            
            203     187   d_pos_w
            
            209     187   d_pos_w
            
            215     187   d_pos_w
            
            221     187   d_pos_w
            
            227     187   d_pos_w
            
            233     187   d_pos_w
            
            239     187   d_pos_w
            
            245     187   d_pos_w
            
            251     182   d_pos_w
            
            257     177   d_pos_w
            
            263     174   d_pos_w
            
            269     169   d_pos_w
            
            275     165   d_pos_w
            
            281     159   d_pos_w
            
            287     153   d_pos_w
            
            293     152   d_pos_w
            
            300     146   d_pos_w

            50       203     e
            56       203     e
            62       203     e
            68       203     e
            74       196     e
            80       196     e
            86       196     e
            92       189     e
            98       189     e
            104      189     e
            110      189     e
            116      189     e
            122      182     e
            128      182     e
            134      175     e
            140      175     e
            146      175     e
            152      175     e
            158      168     e
            164      168     e
            170      168     e
            176      161     e
            182      161     e
            188      161     e
            194      154     e
            200      154     e
            206      147     e
            212      147     e
            218      140     e
            224      140     e
            230      140     e
            236      133     e
            242      133     e
            248      126     e
            254      126     e
            260      119     e
            266      119     e
            272      119     e
            278      112     e
            284      105     e
            290      105     e
            296      98      e
            300      98      e
        };
        \addplot [mark=+,teal,ultra thick] coordinates {(150, 0) (150, 300)};

        \addplot [mark=none,black,ultra thick] coordinates {(0, 196) (320, 196)};
        \node[label,scale=1.0] at (270, 205) {\textbf{RSD} $\mathbf{=\ 196}$};
    \end{axis}
  \end{tikzpicture}
\end{figure}
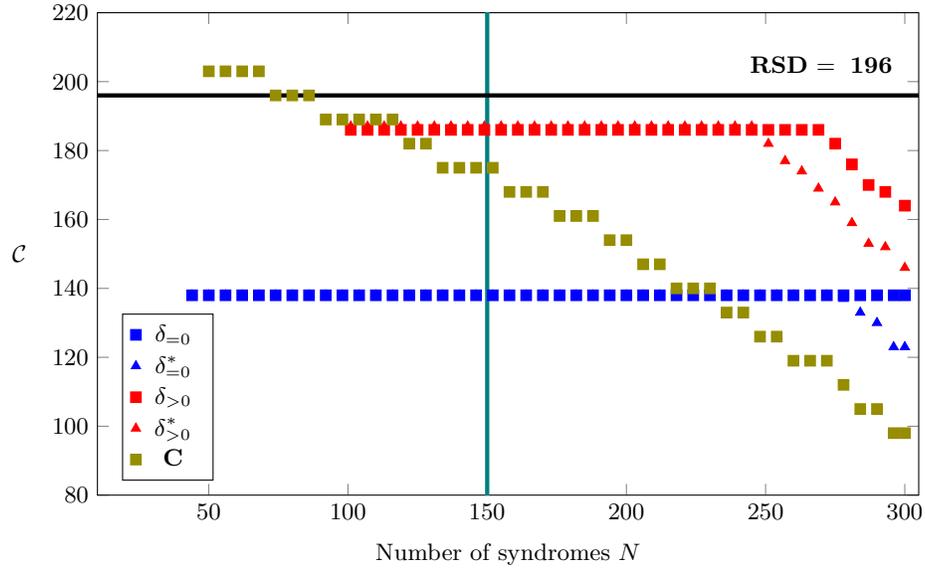

      Last but not least, thanks to our analysis of the complexity to solve \RSL with 
      different attacks, we were able to draw a graph, see Figure \ref{fig:RSL}, 
      of the complexity to solve an \RSD instance as a function of the number of given syndromes $N$. 
      
      The instance parameters are $[m,n,k,r]=[61,100,50,7]$, this is precisely the 
      instance corresponding to attacking our scheme NH-Multi-RQC-AG-128 (see Table \ref{tab:param}). 
      The complexity to solve this \RSD instance using the algebraic attack MaxMinors 
      (see Section \ref{sec:attaques_rsd}) is 196 bits; it corresponds to the horizontal black 
      thick line. 
      Starting with 44 syndromes; recall that it is the thresold for 
      the algebraic attack against \RSL (see Section \ref{sec:attaques_rsl}), 
      one sees that it beats the \RSD attack. 
      It is worth noticing that with approximately 225 syndromes, our new combinatorial 
      attack against \RSL (see Theorem \ref{thm:RSL_combi}), 
      starts to beats the algebraic attack of \cite{BB21}. 
      And finally, one notices that, with a lot of syndromes, all the aforementioned 
      \RSL attacks complexities drops down, which is quite logical.  

  \subsection{Combinatorial attack on \NHRSL}\label{sec:attaques_nhrsl}

    In this section, we adapt the combinatorial attack against \RSL, given 
    in the proof of Theorem \ref{thm:RSL_combi}, to the case of non-homogeneous 
    error, i.e. to the \NHRSL problem (see Problem \ref{problema:nhrsl}). 

    For the sake of simplicity, and since it is the case for all cryptographic 
    parameters studied in this paper, we focus only on \NHRSL instances where 
    $n_1 < n$. 

    \begin{theorem}[Combinatorial attack against \NHRSL]\label{thm:NHRSL_combi}
      There exists a combinatorial attack against an \NHRSL instance 
      with parameters $(m,n,n_1,w_1,w_2)$ whose complexity, in terms of 
      elementary operations in $\mathbb{F}_q$, is given by 
      \[
        \widetilde{\mathcal{O}}\left(q^{(w_1+w_2)(m-r)-w_2\rho}\right),  
      \]
      where $r, \rho$ are integers chosen to maximize the quantity 
      $(w_1+w_2)r+w_2\rho$ under the following constraints: 
      $N_1,N_2,r,\rho\in \mathbb{N}$,
      $N_1+N_2=N$,
      $w_1\le r$,
      $w_2\le \rho$,
      $r+\rho \le m-1$,
      $a:=\left\lfloor \frac{N_1}{w_1} \right\rfloor        \le n_1$,
      $b:=\left\lfloor \frac{N_2}{w_1+w_2} \right\rfloor    \le 2n$,
      $m(n+n_1)    \ge (n_1-b)(r+\rho)+(2n-a)r+N$.

    \end{theorem}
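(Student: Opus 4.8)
The plan is to merge the two combinatorial techniques already developed in this paper: the syndrome‑combination trick from the proof of Theorem~\ref{thm:RSL_combi}, which certifies out of several syndromes sharing a common error support the existence of a single ``virtual'' error with a prescribed set of zero coordinates, and the nested‑subspace guessing of the combinatorial attack on \NHRSD{} (Theorem~\ref{theo:complex_combi}), which exploits that the two outer blocks of an \NHRSD{} error share a support $\mathcal V$ of dimension $w_1$ contained in the support $\mathcal W$ (of dimension $w_1+w_2$) of the middle block. First I would fix a basis of $\mathcal W$ extending one of $\mathcal V$, split $N=N_1+N_2$, and invoke, exactly as in the proof of Theorem~\ref{thm:RSL_combi}, the existence of a nonzero $\boldsymbol\lambda\in\ff{q}^{N}$ for which the associated error $\widetilde{\ev}$ vanishes on $a:=\lfloor N_1/w_1\rfloor$ coordinates of the part of support $\mathcal V$ and on $b:=\lfloor N_2/(w_1+w_2)\rfloor$ coordinates of the part of support $\mathcal W$: clearing one such coordinate imposes $w_1$ (resp. $w_1+w_2$) scalar conditions over $\ff{q}$ on $\boldsymbol\lambda$, and $N_1>aw_1$, $N_2>b(w_1+w_2)$ make these homogeneous systems have a nontrivial solution. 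Deleting the $a+b$ cleared columns of a parity‑check matrix $\Hm$ yields $\widetilde{\Hm}$ and the secret relation $\widetilde{\Hm}\trsp{\widetilde{\ev}}=\sum_{i=1}^{N}\lambda_i\trsp{\sv_i}$, where $\sv_1,\dots,\sv_N$ are the \NHRSL{} syndromes.

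Next I would guess, just as in the \NHRSD{} attack, a subspace $V\supseteq\mathcal V$ with $\dim V=r\ge w_1$ and a subspace $Z$ with $\dim Z=\rho\ge w_2$ and $\mathcal W\subseteq V\oplus Z$. A uniformly random such pair is correct with probability roughly $q^{-w_1(m-r)}\cdot q^{-w_2(m-r-\rho)}=q^{-((w_1+w_2)(m-r)-w_2\rho)}$, namely the probability that a random $r$‑space contains $\mathcal V$ times, conditionally, the probability that a random $\rho$‑space of $\ff{q^m}/V$ contains the image of $\mathcal W$; the inverse of this quantity is the exponential factor in the statement (note the absence of the extra $q^{-m}$ present in \NHRSD{}: as in Theorem~\ref{thm:RSL_combi} the $\ff{q^m}$‑linearity trick is not used here). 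Conditioned on a correct guess, writing the remaining coordinates of $\widetilde{\ev}$ over a fixed basis of $V$ (those in the $\mathcal V$‑part, $r$ scalars each, $2n-a$ of them) and over a fixed basis of $V\oplus Z$ (those in the $\mathcal W$‑part, $r+\rho$ scalars each, $n_1-b$ of them), and treating the $\lambda_i$ as $N$ further unknowns over $\ff{q}$, turns the relation above into an $\ff{q}$‑linear system with $m(n+n_1)$ equations in $(2n-a)r+(n_1-b)(r+\rho)+N$ unknowns. This system is overdetermined precisely under the feasibility inequality $m(n+n_1)\ge(n_1-b)(r+\rho)+(2n-a)r+N$; a genericity argument, as in the analogous steps of \cite{GRS16,AGHT18}, shows it then has a unique solution, recovered by Gaussian elimination in polynomial time, from which one reads off $\mathcal V$ and, re‑solving the now linear instance, $\mathcal W$.

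It then remains to optimise. Since the cost is $\widetilde{\mathcal O}(q^{(w_1+w_2)(m-r)-w_2\rho})$ for any admissible configuration, one maximises $(w_1+w_2)r+w_2\rho$ over $N_1,N_2,r,\rho\in\mathbb N$ subject to $N_1+N_2=N$, $w_1\le r$, $w_2\le\rho$, $r+\rho\le m-1$, the budget bounds $a=\lfloor N_1/w_1\rfloor\le n_1$ and $b=\lfloor N_2/(w_1+w_2)\rfloor\le 2n$, and the feasibility inequality above; this is precisely the integer linear program of the theorem, solved numerically with the same tools as for Theorem~\ref{theo:complex_combi}.

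I expect the main difficulty to be twofold. First, organising the two clearings so that they are realised by a single combination $\boldsymbol\lambda$ (hence consume only one block of $m(n+n_1)$ equations, which is what makes the feasibility inequality tight) and nailing down the exact admissible ranges for $a$ and $b$ together with the bookkeeping of which parity‑check rows involve which unknowns after reordering the error blocks as in the \NHRSD{} attack. Second, the uniqueness of the final $\ff{q}$‑linear solution, which — as for every $\RSD$, \NHRSD{} and \RSL{} analysis in this paper — is heuristic and rests on the pseudo‑randomness of $\Hm$ and of the guessed subspaces $V$ and $Z$; the precise computation of the success probability of the guess (analogous to the one in Appendix~\ref{sec:appendix_Pi}) must also be carried through carefully so as to obtain exactly the exponent $(w_1+w_2)(m-r)-w_2\rho$.
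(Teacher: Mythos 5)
Your proposal is correct and follows exactly the route the paper intends: the paper's own proof is the single sentence ``straightforward adaptation of the attack in the proof of Theorem~\ref{thm:RSL_combi}, combined with the probability results given in Appendix~\ref{sec:appendix_combi}'', and your combination of the syndrome-clearing trick (a single $\boldsymbol\lambda$ zeroing $a$ coordinates at cost $w_1$ each and $b$ coordinates at cost $w_1+w_2$ each) with the nested $(V,Z)$ guess, the unknown count $(2n-a)r+(n_1-b)(r+\rho)+N$, and the success probability $q^{-((w_1+w_2)(m-r)-w_2\rho)}$ supply precisely the details that sentence leaves implicit. One minor remark: your reading makes $a$ the number of cleared coordinates in the length-$2n$ part of support $\mathcal V$ and $b$ the number in the length-$n_1$ part of support $\mathcal W$, which is the only reading consistent with the theorem's final inequality, so the stated constraints $a\le n_1$ and $b\le 2n$ appear to have been swapped in the paper and your version is the internally consistent one.
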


    \begin{proof}
      Straightforward adaptation of the attack in the proof of Theorem \ref{thm:RSL_combi}, 
      combined with the probability results given in Appendix \ref{sec:appendix_combi}. 
    \end{proof}

\section{Security and parameters of our schemes}\label{sec:params}

  \subsection{Security comparison for our schemes}

    According to Theorem \ref{thm:scheme-one},
    the security of \schemeone relies on the Decisional Ideal Rank 
    Syndrome Decoding problem (\DIRSD) and on the Decisional Ideal 
    Non-Homogeneous Rank Support Learning problem (\DNHIRSL). So far, there is no known attack to solve the decisional versions of these problems without solving the associated search instances. 
    In addition to this, there is currently no attack that takes advantage 
    of the ideal structure; thus, studying the security of \schemeone comes 
    down to evaluating the complexity of \RSD and \NHRSL.
    Unlike Multi-RQC-AG, our new scheme \schemetwo does not use ideal 
    structure. Despite the aforementioned absence of attack that exploits 
    ideal structure, it might induce a weakness in a scheme. 
    This is why Multi-RQC-AG, which does not use any structure like its name 
    suggests it, is more secure. 
    To study its complexity, according to Theorem \ref{thm:scheme-two}, one has 
    to study \RSL and \NHRSL. However, for an even better security, one could 
    use \schemetwo with homogeneous weight, making its security relying solely 
    on \RSL (see for instance the parameters sets \schemetwo-128 and \schemetwo-192 
    in Section \ref{sec:params}). 

  \subsection{Examples of parameters}

      In this section, we propose parameters for our different schemes, see Table \ref{tab:param}; 
      all parameters are chosen 
      to resist to attacks described in Section \ref{sec:security_analysis}. 
      Among the different codes that can be attacked for each of our schemes 
      (see proofs of Theorems \ref{thm:scheme-one} and \ref{thm:scheme-two}), 
      there is not a weaker one which enables us to fix all of our parameters. 
      More precisely, sometimes attacking the public key, i.e. a code $[2n,n]$, gives the lowest complexity, 
      but for another set of parameters, it will be the $[2n+n_1,n,w_1,w_2]$-code instead. 
      However, there seems to be an invariant: no matter the length $n$ of the code 
      or the dimension $m$ of the extension, it looks like the closer to GV bound the target rank $r$ is, 
      the better the combinatorial attacks are, and the worse are the algebraic attacks. 
      In other words, for a given $[m,n,k]$-code, there seems to always be a value of $r$ such that 
      all the combinatorial attacks will beat the algebraic ones. This seems to be the case 
      both for homogeneous and non-homogeneous versions of the aforementioned problems, 
      and with or without multiple syndromes. 

      \vspace{-0.5\baselineskip}

      \begin{table}[ht!]
        \centering
        {\setlength{\tabcolsep}{0.25em}
        {\renewcommand{\arraystretch}{1.4}
        {\scriptsize
        \begin{tabu}{|l|c|c|c|c|c|c|c|c|c|c|c|c|c|c|c|}
          \hline 
          \multicolumn{1}{|c|}{\multirow{2}{*}{Instance}}
          & \multirow{2}{*}{Struct.} & \multirow{2}{*}{$m$} & \multirow{2}{*}{$n'$} & \multirow{2}{*}{$n$} & \multirow{2}{*}{$n_1$} & \multirow{2}{*}{$n_2$} & \multirow{2}{*}{$k$} & \multirow{2}{*}{$\varepsilon$} & \multirow{2}{*}{$w$} & \multirow{2}{*}{$w_1$} & \multirow{2}{*}{$w_2$} 
          & \multirow{2}{*}{DFR} & \multicolumn{3}{c|}{Sizes in \textbf{KB}} \\ \cline{14-16}
          & & & & & & & & & & & & & $\mathbf{pk}$ & $\mathbf{ct}$ & Total \\ \hline \hline

          Loong-128 \cite{wang2019loong} & Random & 
            191 & 182 & 35 & 13 & 14 & 6 & 0 & 8 & 11 & 0 & 0 & 10.9 & 16.0 & \textbf{26.9} \\ \hline \hline

          Multi-RQC-AG-128 & Ideal & 
            83 & 82 & - & 5 & 38 & 2 & 74 & 7 & 11 & 0 & -138 & 0.4 & 3.9 & \textbf{4.4} \\ \hline
          NH-Multi-RQC-AG-128 & Ideal & 
            61 & 60 & - & 3 & 50 & 3 & 51 & 7 & 7 & 5 & -158 & 0.4 & 2.3 & \textbf{2.7} \\ \hline 
          Multi-RQC-AG-192 & Ideal & 
            113 & 112 & - & 4 & 60 & 2 & 98 & 8 & 13 & 0 & -215 & 0.9 & 6.8 & \textbf{7.7} \\ \hline
          NH-Multi-RQC-AG-192 & Ideal & 
            79 & 78 & - & 2 & 95 & 5 & 65 & 8 & 8 & 5 & -238 & 0.9 & 3.8 & \textbf{4.7} \\ \hline \hline 

          Multi-UR-AG-128 & Random & 
            97 & 96 & 24 & 14 & 15 & 3 & 83 & 8 & 11 & 0 & -190 & 4.1 & 6.9 & \textbf{11.0} \\ \hline 
          NH-Multi-UR-AG-128 & Random & 
            73 & 72 & 22 & 13 & 14 & 2 & 66 & 8 & 8 & 4 & -133 & 2.7 & 4.5 & \textbf{7.1} \\ \hline 
          Multi-UR-AG-192 & Random & 
            127 & 126 & 35 & 15 & 16 & 3 & 93 & 9 & 12 & 0 & -350 & 8.4 & 12.7 & \textbf{21.1} \\ \hline 
          NH-Multi-UR-AG-192 & Random & 
            97 & 96 & 30 & 14 & 14 & 3 & 77 & 9 & 9 & 4 & -214 & 5.1 & 7.5 & \textbf{12.6} \\ \hline           
        \end{tabu}
        \vspace{0.5\baselineskip}
        \caption{Parameters for our scheme \label{tab:param}}
        }}}
      \end{table}


      \begin{table}[ht!]
        \centering
        \vspace{2mm}
        {\setlength{\tabcolsep}{0.25em}
        {\renewcommand{\arraystretch}{1.4}
        {\scriptsize
        \begin{tabular}{|l|c|c|}
          \hline
          \multicolumn{1}{|c|}{Instance} & \hspace{0.5mm} 128 bits \hspace{0.5mm} 
            & \hspace{0.5mm} 192 bits \hspace{0.5mm} \\ \hline \hline 
          \textbf{NH-Multi-UR-AG} & \textbf{7,122}     & \textbf{12,602}    \\ \hline
          LRPC-MS   \cite{LRPC2022}              & 7,205     & 14,270    \\ \hline
          \textbf{Multi-UR-AG}    & \textbf{11,026}     & \textbf{21,075}    \\ \hline
          FrodoKEM  \cite{frodo}              & 19,336    & 31,376    \\ \hline
          Loong-128 \cite{wang2019loong}              & 26,948    & -         \\ \hline
          Loidreau \cite{pham2021etude}   & 36,300    & -         \\ \hline
          Classic McEliece  \cite{bernstein2017classic}      & 261,248   & 524,348   \\ \hline
        \end{tabular}
        \hspace{1mm}
        \begin{tabular}{|l|c|c|}
          \hline
          \multicolumn{1}{|c|}{Instance} & \hspace{0.5mm} 128 bits \hspace{0.5mm} 
            & \hspace{0.5mm} 192 bits \hspace{0.5mm} \\ \hline \hline
          \textbf{NH-Multi-RQC-AG} & \textbf{2,710}     & \textbf{4,732}     \\ \hline
          ILRPC-MS    \cite{LRPC2022}             & 2,439     & 4,851     \\ \hline
          BIKE    \cite{AABBBBDGGGMMPSTZ21}                 & 3,113     & 6,197     \\ \hline
          \textbf{Multi-RQC-AG}    & \textbf{4,378}     & \textbf{7,668}     \\ \hline
          HQC    \cite{AABBBDGPZ21a}                  & 6,730     & 13,548    \\ \hline
        \end{tabular}
        \vspace{0.5\baselineskip}
        \caption{Comparison of sizes for unstructured (random) and structured (ideal) 
        KEMs. The sizes represent the sum of the public key and the ciphertext, 
        expressed in bytes.\label{tab:param_compare}}
        }}}
      \end{table}

      \vspace{-2\baselineskip}

      Similarly to \cite{AABBBBCDGHZ20}, we use the fact 
      that $1 \in \text{Supp}(\xv,\yv)$ to set $\delta:=ww_1$ in 
      the homogeneous case and $\delta:=ww_1+w_2$ in the non-homogeneous case. 
      Recall that this quantity corresponds to the weight of the 
      error decoded by the public Augmented Gabidulin code. 
      For all our protocols (where ``NH'' denote non-homogeneous errors), both 128 and 192 bits security level are considered.
      As a comparison, we also updated 
      the parameters of the code-based KEM Loong \cite{wang2019loong}.
      Note that this scheme does not use augmented Gabidulin codes nor non-homogeneous error but 
      it does uses multiple syndromes.

      The parameters sets given in Table \ref{tab:param} come  with 
      the sizes of the associated public key $\mathbf{pk}$ and ciphertext $\mathbf{ct}$ expressed in kilo-bytes (KB).
      For \schemeone, $|\mathbf{pk}|=40 + \left\lceil \frac{n_2m}{8} \right\rceil$ and $|\mathbf{ct}|=\left\lceil \frac{2n_1n_2m}{8} \right\rceil $.
      For \schemetwo, $|\mathbf{pk}|=40 + \left\lceil \frac{nn_1m}{8} \right\rceil$ and $|\mathbf{ct}|=\left\lceil \frac{m(nn_2+n_1n_2)}{8} \right\rceil $.
      The term 40 represents the length of a seed used to generate $(\gv,\hv)$, recall that 
      the public key consists in $(\gv,\hv,\sv)$ and the ciphertext in the couple $(\uv,\vv)$. 
      Note that the size of the secret key is not relevant since it is only a seed, 
      thus it always has size 40 bytes. 
      
      To sum up, our most competitive set of parameters, in terms of sizes, 
      is NH-Multi-RQC-AG-128 which 
      uses ideal structure and non-homogeneous error; on the other side, 
      the most secure set of parameters, whose security solely depends 
      on \RSL, and which does not use ideal structure, is Multi-UR-AG-128.
      Table \ref{tab:param_compare} enables one to compare the sizes of our most competitive 
      scheme to other KEMs using ideal or random (unstructured) matrices. Note that using 
      non-homogeneous errors, our schemes are the shortest. 
      
      Last but not least, the vertical green line at $N=150$ on Figure \ref{fig:RSL} shows the number 
      of syndromes available for an adversary trying to attack a ciphertext of our scheme NH-MRQC-AG-128. 
      It is worth noticing that, even though the blue squares are below the black line (complexity of 
      the plain \RSD attack), they are still way above the security level of 128 bits, 
      and even given 150 syndromes, an attacker could not break our scheme. 
      Note that it is far away from the area where the complexities of the different \RSL 
      attacks start to drop. More generally, we picked all our parameters that way, not only 
      to resist to these attacks, but to be sure not to be targeted by any minor improvements. 
      
      \section{Conclusion}

        In this paper, we introduce new variations on the RQC scheme, and more specifically, we
        introduce the Augmented Gabidulin codes which are very well suited to RQC. 
        These new codes, together with the multiple syndrome and the non homogeneous approaches, 
        lead to very small parameters which compare very well with other existing code-based schemes.
        In addition to this, we propose a meaningful scheme only relying on pure random 
        instances, without any ideal structure and with small parameters, around 11KBytes.
        
        We also study more deeply the security of the rank based problems used for
        our new schemes. Because of their properties, problems like \NHRSD or \RSL, 
        are probably bound to be used in many future schemes based on rank metric. 

        \section*{Ackowledgements}
        The third author would like to thank Maxime Bombar for helpful discussion. 
\appendix

        \section{Appendices}

          \subsection{Computation of the success probability $\Pi$}
            \label{sec:appendix_Pi}\label{sec:appendix_combi}
            To compute $\Pi :=  
            \probab{S_1 \subset V,~S_2 \subset V \oplus Z}{V,Z}$ we use 
            \begin{lemma} Let $\Pi:=  
            	\probab{S_1 \subset V,~S_2 \subset V \oplus Z}{V,Z}$, where the randomness comes from the choice of a random subspace $V \subset \ff{q^m}$ and a random complementary subspace $Z$ (hence isomorphic to a subspace of $\ff{q^m}/V$). We have 
            	\begin{align*}
            	\Pi & = 
            	\probab{S_1 \subset V,~S_2/S_1 \subset (V \oplus Z \oplus S_1)/S_1}{}\\
            	&= 
            	\probab{S_1 \subset V}{V}\probab{S_2/S_1 \subset (V \oplus Z \oplus S_1)/S_1 \mid S_1 \subset V}{V,Z} \notag \\
            	& = \probab{S_1 \subset V}{V}\times \Pi_{\text{cond}}, \notag
            	\end{align*}
            	where $\Pi_{\text{cond}} := \probab{S_2/S_1 \subset (V \oplus Z \oplus S_1)/S_1 \mid S_1 \subset V}{V,Z}$.
            \end{lemma}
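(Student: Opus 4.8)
The plan is to prove the three equalities at the level of events, realization by realization of the random pair $(V,Z)$, and to invoke the distribution of $(V,Z)$ only once, through the definition of conditional probability. Throughout, set $S_1 := \Supp(\ev_1,\ev_3)$ and $S_2 := \Supp(\ev_2)$, so that $\dim S_1 = w_1$, $\dim S_2 = w_1+w_2$, and $S_1 \subset S_2$ (this last inclusion is part of the \NHRSD/\NHRSL error distribution). Let $\pi \colon \ff{q^m} \to \ff{q^m}/S_1$ denote the canonical $\ff{q}$-linear projection; since $S_1 \subset S_2$ we have $\pi(S_2) = S_2/S_1$, and $\pi(W) = W/S_1$ for every subspace $W \supseteq S_1$.

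The first step is the set identity
\begin{multline*}
\{S_1 \subset V\} \cap \{S_2 \subset V \oplus Z\} \\
= \{S_1 \subset V\} \cap \{S_2/S_1 \subset (V \oplus Z \oplus S_1)/S_1\},
\end{multline*}
valid for every outcome $(V,Z)$. Indeed, on the event $\{S_1 \subset V\}$ one has $S_1 \subset V \oplus Z$, so $V \oplus Z \oplus S_1 = V \oplus Z$ and $(V \oplus Z \oplus S_1)/S_1 = \pi(V \oplus Z)$; moreover $S_2 \subset V \oplus Z$ trivially implies $\pi(S_2) \subset \pi(V \oplus Z)$, while conversely $\pi(S_2) \subset \pi(V \oplus Z)$ forces $S_2 + S_1 \subset (V \oplus Z) + S_1 = V \oplus Z$ by the correspondence theorem for subspaces containing $S_1$, hence $S_2 \subset S_2 + S_1 \subset V \oplus Z$. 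Taking probabilities over all the randomness (i.e. over $(V,Z)$) gives the first displayed equality of the lemma.

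It then remains to split this probability. Writing $A := \{S_1 \subset V\}$, which depends on $V$ only so that $\Pr[A] = \probab{S_1 \subset V}{V}$, and $B := \{S_2/S_1 \subset (V \oplus Z \oplus S_1)/S_1\}$, the definition of conditional probability yields $\Pr[A \cap B] = \Pr[A]\,\Pr[B \mid A]$, which is exactly the second line, and the third line merely names the conditional factor $\Pi_{\text{cond}}$. The one point requiring care is the converse inclusion above: recovering $S_2 \subset V \oplus Z$ from its image modulo $S_1$ needs $S_1 \subset V \oplus Z$, which is precisely why the conditioning on $\{S_1 \subset V\}$ is built into the statement — without it the two events genuinely differ. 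Everything else is bookkeeping.
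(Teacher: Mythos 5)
Your proof is correct and follows essentially the same route as the paper's: the first equality is established as an identity of events via the quotient map by $S_1$ (the forward inclusion by projecting, the converse by taking preimages / the correspondence theorem, using $S_1 \subset V$ and $S_1 \subset S_2$), and the remaining equalities are just the definition of conditional probability. Your version is if anything slightly tidier in simplifying $V \oplus Z \oplus S_1$ to $V \oplus Z$ on the conditioning event before taking preimages, but the argument is the same.
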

            \begin{proof}
            	The only non-trivial equality is the first one. For $\leq$, this is clear by taking the quotient by $S_1$. For $\geq$, let $\pi_{S_1}$ denote the quotient map $\ff{q^m} \rightarrow \ff{q^m}/S_1$. The event at the right-hand side can be seen as $S_1 \subset V,~\pi_{S_1}(S_2) \subset \pi_{S_1}(V \oplus Z \oplus S_1)$, and by considering the inverse image by $\pi_{S_1}$, this event is included in $S_1 \subset V,~\pi_{S_1}^{-1}(\pi_{S_1}(S_2)) \subset \pi_{S_1}^{-1}(\pi_{S_1}(V \oplus Z \oplus S_1))$. This gives $S_1 \subset V$ and $S_2+\ker{(\pi_{S_1})} = S_1+S_2 = S_2$. This space is included in $V \oplus Z \oplus S_1~+~\ker{(\pi_{S_1})} = V \oplus Z \oplus S_1 + S_1 = V \oplus Z \oplus S_1$, hence $S_1 \subset V$ and $S_2 \subset V \oplus Z$. \qed
            \end{proof}
            We now focus on the $\Pi_{\text{cond}}$ factor. Note that we have the decomposition
            \begin{multline*}\label{eq:disj}
            \left\lbrace S_2/S_1 \subset (V \oplus Z \oplus S_1)/S_1 \mid S_1 \subset V \right\rbrace = \left\lbrace S_2/S_1 \subset (V \oplus Z)/S_1 \right\rbrace \\
            = \coprod_{\ell=0}^{w_2} \left\lbrace \dim_{\ff{q}}(S_2/S_1 \cap V/S_1) = \ell,~
            \frac{S_2/S_1}{	S_2/S_1 \cap V/S_1} \subset 	\frac{(V \oplus Z)/S_1}{V/S_1} \right\rbrace \\ 
            = \coprod_{\ell=0}^{w_2} \left\lbrace A_{\ell} \cap B \right\rbrace,
            \end{multline*} where $A_{\ell}~:~``\dim_{\ff{q}}(S_2/S_1 \cap V/S_1) = \ell"$ and $B~:~``\frac{S_2/S_1}{	S_2/S_1 \cap V/S_1} \subset 	\frac{(V \oplus Z)/S_1}{V/S_1}"$. For $0 \leq \ell \leq w_2$, let $p_{\ell} := \probab{A_{\ell} \cap B}{}$, let $s_{\ell} := \probab{A_{\ell}}{}$ and let $t_{\ell} := \probab{B \mid A_{\ell} }{}$ so that $p_{\ell} = s_{\ell}t_{\ell}$ and $\Pi_{\text{cond}} = \sum_{\ell=0}^{w_2} p_{\ell}$. To compute $s_{\ell}$, we rely on
            \begin{lemma}[§9.3.2 p. 269, \cite{BCN89}]\label{lem:funda}
            	Let $F$ be an $\ff{q}$-linear space of dimension $n$.
            	\begin{enumerate}
            		\item If $X$ is a $j$-dimensional subspace of $F$, then there are $q^{ij}\binom{n-j}{i}_q$ $i$-dimensional subspaces $Y$ such that $X \cap Y = 0$.
            		\item If $X$ is a $j$-dimensional subspace of $F$, then there are $q^{(i-\ell)(j-\ell)}\binom{n-j}{i-\ell}_q \binom{j}{\ell}_q$ $i$-dimensional subspaces $Y$ such that $X \cap Y$ has dimension $\ell$.
            	\end{enumerate}
            \end{lemma}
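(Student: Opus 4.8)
The plan is to prove both statements by a direct count of ordered bases, deriving part 2 from part 1 through a quotient argument.

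For part 1, I would count ordered $i$-tuples $(y_1,\dots,y_i)$ spanning a subspace $Y$ of dimension $i$ with $Y \cap X = 0$, and then divide by the number of ordered bases of a fixed $i$-dimensional space. The key observation is that the two conditions (linear independence of the tuple \emph{and} trivial intersection of its span with $X$) can be imposed incrementally: the tuple is valid if and only if, for each $t$, the vector $y_t$ avoids the subspace $X \oplus \langle y_1,\dots,y_{t-1}\rangle$, which has dimension $j+t-1$. Hence there are $q^n - q^{j+t-1}$ admissible choices for $y_t$, giving $\prod_{t=0}^{i-1}(q^n - q^{j+t})$ ordered tuples in total. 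Since a fixed $i$-dimensional space admits $\prod_{t=0}^{i-1}(q^i - q^t)$ ordered bases, the number of subspaces $Y$ is the ratio of these two products. Factoring $q^j$ out of each term of the numerator yields $q^{ij}\prod_{t=0}^{i-1}(q^{n-j}-q^t)$, and recognizing the remaining ratio as the Gaussian binomial coefficient $\binom{n-j}{i}_q$ gives the claimed value $q^{ij}\binom{n-j}{i}_q$.

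For part 2, I would first choose the intersection $W := X \cap Y$, which is an $\ell$-dimensional subspace of $X$; there are $\binom{j}{\ell}_q$ such choices. Fixing $W$, I would pass to the quotient $F/W$, of dimension $n-\ell$, in which $\bar X := X/W$ has dimension $j-\ell$. The crucial step is to establish a bijection between $i$-dimensional subspaces $Y \supseteq W$ with $Y \cap X = W$ and $(i-\ell)$-dimensional subspaces $\bar Y$ of $F/W$ satisfying $\bar Y \cap \bar X = 0$. Granting this, part 1 applied inside $F/W$ with parameters $(n-\ell,\,j-\ell,\,i-\ell)$ counts the quotient subspaces as $q^{(i-\ell)(j-\ell)}\binom{(n-\ell)-(j-\ell)}{i-\ell}_q = q^{(i-\ell)(j-\ell)}\binom{n-j}{i-\ell}_q$. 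Multiplying by the $\binom{j}{\ell}_q$ choices of $W$ then produces exactly the stated formula.

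The only delicate point is the quotient bijection in part 2, and it is where care is needed to avoid an off-by-$W$ error. One must check both directions: that every valid $Y$ descends to a $\bar Y$ with $\bar Y \cap \bar X = 0$, and that, conversely, the preimage of a subspace $\bar Y$ with $\bar Y \cap \bar X = 0$ is an $i$-dimensional $Y$ whose intersection with $X$ is exactly $W$ and not larger. Both follow from the inclusions $W \subseteq Y$ and $W \subseteq X$, which give $\bar Y \cap \bar X = \overline{Y \cap X}$, together with the correspondence theorem for subspaces of a quotient; this identity turns $Y \cap X = W$ into $\bar Y \cap \bar X = 0$. Everything else is routine manipulation of the $q$-products.
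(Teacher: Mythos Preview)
Your proof is correct. Note, however, that the paper does not supply its own proof of this lemma: it is quoted verbatim from \cite{BCN89} (\S 9.3.2, p.~269) and used as a black box in the computation of $\Pi$. There is therefore no ``paper's proof'' to compare against; your argument is a clean, self-contained verification of the cited result. The incremental counting in part~1 and the quotient bijection in part~2 are both standard and sound, and your care about the identity $(Y/W)\cap(X/W)=(Y\cap X)/W$ under the hypothesis $W\subseteq Y\cap X$ is exactly what is needed to close the loop.
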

            More precisely, we use Lemma \ref{lem:funda}, 2. with $F:= \ff{q^m}/S_1$, fixed $X:=S_2/S_1 \subset \ff{q^m}/S_1$ of dimension $j := w_2$ and random $Y:=V/S_1 \subset \ff{q^m}/S_1$ of dimension $i := r-w_1$. We obtain 
            \begin{equation}\label{eq:sl}
            s_{\ell} = q^{(r - w_1 - \ell )(w_2 - \ell)}\frac{\binom{m-w_1-w_2}{r-w_1-\ell}_q \times \binom{w_2}{\ell}_q}{\binom{m-w_1}{r-w_1}_q}.
            \end{equation}
            To compute $t_{\ell}$, note that conditioned on $\dim_{\ff{q}}(S_2/S_1 \cap V/S_1) = \ell$ the probability that $\frac{S_2/S_1}{	S_2/S_1 \cap V/S_1} \subset 	\frac{(V \oplus Z)/S_1}{V/S_1}$ is the probability that a random subspace of dimension $\rho$ contains a fixed subspace of dimension $w_2 - \ell$ in the ambient space $\frac{\ff{q^m}/S_1}{V/S_1} \simeq \ff{q^m}/V$. From there we obtain $t_{\ell} = \frac{\binom{\rho}{w_2 - \ell}_q}{\binom{m-r}{w_2 - \ell}_q}$, and finally by combining this with Equation \eqref{eq:sl}:
            \begin{eqnarray*}
            	p_{\ell} = q^{(r - w_1 - \ell )(w_2 - \ell)}\frac{\binom{m-w_1-w_2}{r-w_1-\ell}_q \times \binom{w_2}{\ell}_q}{\binom{m-w_1}{r-w_1}_q} \times \frac{\binom{\rho}{w_2 - \ell}_q}{\binom{m-r}{w_2 - \ell}_q}.
            \end{eqnarray*}

            \subsection{Finishing the proof of Theorem \ref{theo:complex_combi}}\label{sec:K}

             Obviously $p_0 < \Pi_{cond}$ and one can also easily show that $\Pi_{cond} = \Theta{(p_0)}$. By including the $q^{-m}$ factor from \cite{AGHT18}, the number of $\ff{q}$-operations in the attack is 
              $$\mathcal{K} = \mathcal{O}\left(L \times \Pi^{-1} \times q^{-m} \right) = \widetilde{\mathcal{O}}\left( \probab{C}{}^{-1}p_0^{-1} q^{-m} \right),$$
              where $\probab{C}{} := \probab{S_1 \subset V}{V}$ and where $L$ is the polynomial factor coming from the linear algebra step whose exact formula is not relevant for the discussion. Using the classical $\textstyle\binom{a}{b}_q = \Theta(q^{b(a-b)})$ when $\max{(a,b)} \rightarrow +\infty$ together with
              \begin{align*}
              p_{0} & = q^{ (r-w_1)w_2}\frac{\binom{m-w_1-w_2}{r-w_1}_q}{\binom{m-w_1}{r-w_1}_q}\frac{\binom{\rho}{w_2}_q}{\binom{m-r}{w_2}_q},
              \end{align*}
              we obtain
              $p_{0} = \Theta{\left( q^{ (r-w_1)w_2} \times q^{-(r-w_1)w_2} \times q^{-w_2(m-r-\rho)}\right)} = \Theta{(q^{-w_2(m-r-\rho)})},$
              and similarly $\probab{C}{} = \Theta{\left( q^{-w_1(m-r)}\right)}$. Therefore $\mathcal{K} =  \widetilde{\mathcal{O}}{\left(q^{(w_1+w_2)(m-r) - w_2\rho - m}\right)}$, which is the statement of Theorem \ref{theo:complex_combi}.

        \subsection{Proofs for the MaxMinors attack on \NHRSD}

          \subsubsection*{Proof of Proposition \ref{prop:MM_easy}.}\label{sec:proof_prop1}For item 1., let $J \subset \{1..n+n_1-1\},~\# J = w_1 + w_2$ such that $P_J \in \mathcal P_{\text{lost}}$. By definition of $\mathcal P_{\text{lost}}$ the set $J+n+1$ has intersection $\leq w_2 - 2$ with $\{n+2..n+n_1\}$, hence any subset $T = T^-\cup T^+,~T^-\subset \{1..n+1\},~T^+\subset (J+n+1)$ satisfies $\#(T\cap\lbrace n+1..n+n_1\rbrace) \leq w_2-1$ since $T^{-}$ might also contain $n+1$. This means that the corresponding minor variable $c_T$ belongs to $\zeta$ and can be set to zero in $P_J$. Using the shape depicted in Equation \eqref{eq:P_J}, this implies that the whole $P_J$ equation becomes zero. For item 2. , recall that the leading term of $P_J \in \mathcal P_{\text{indep}}$ is $c_{J+n+1}$. Moreover we have $\#(J \cap\lbrace 1..n_1-1\rbrace) = \#(J + n + 1\cap\lbrace n+2..n+n_1\rbrace) \geq w_2$, which means $c_{J+n+1} \notin \zeta$. In particular, all the equations from $\mathcal P_{\text{indep}}$ keep the same leading terms after fixing the $M$ minor variables to zero and therefore they remain linearly independent. The last statement on the number of variables in obvious.

          \subsubsection*{Lemmata to prove Proposition \ref{prop:MM}.}\label{sec:proof_prop2}

          \paragraph{Proof of Lemma \ref{lem:P_resta}.} Using Equation \eqref{eq:P_J_tilde}, one has that the equations in 
          $\mathcal{P}_{\text{rest},A}$ all have their monomials in 
          $  \mu_A := \big\lbrace c_T,~T \subset \{1..2n+n_1\},~\#T = w_1+w_2,~n+1 \in T, \\ 
            T \cap \{n+2..n+n_1\} = A \big\rbrace,$
          and this set has size $\textstyle \binom{2n}{w_1}$. Finally, note that $\mu_A$ and $\mu_A'$ are disjoint when $A \neq A'$, which concludes the proof.

          \paragraph{Proof of Lemma \ref{lem:dim_P_resta}, under assumptions.} Using Equation \eqref{eq:P_J_tilde}, it is readily verified that the set of leading terms of all equations in $\mathcal P_{\text{rest},A}$ is 
          $$\tau_A := \left\lbrace c_{\lbrace n+1 \rbrace \cup A \cup U},~U \subset \{(n+n_1 +2)..(2n+n_1)\},~\# U = w_1 \right\rbrace,$$
          and for instance note that the equation $P_{J_U}$ with $J_U + n+1 = A \cup \{n+n_1+1\} \cup U$ has leading term $c_{\lbrace n+1 \rbrace \cup A \cup U} \in \tau_A$. This already shows that $\dim_{\ff{q^m}}\left \langle \mathcal P_{\text{rest},A} \right\rangle \geq \# \tau_A  = \textstyle \binom{n-1}{w_1}$. For the converse inequality, we need to rely on some assumption on the randomness of the entries of the $P_J$'s in $\ff{q^m}$ to argue that we cannot construct an element in $\left \langle \mathcal P_{\text{rest},A} \right\rangle$ whose leading term does not belong to $\tau_A$ with very high probability. First, note that the variables from $P_J \in \mathcal{P}_{\text{rest},A}$ with $J+n+1 = A \cup V_J$ where $V_J = \left\lbrace v^{(J)}_{1} < \dots < v^{(J)}_{w_1+1}\right\rbrace$ which belong to $\tau_A$ are the $c_{\lbrace n+1 \rbrace \cup A \cup V_J \setminus \lbrace v^{(J)}_{j} \rbrace}$ for $1 \leq j \leq w_1 + 1$. To kill the leading term of $P_J$, one would then consider an equation with the same leading term, namely a $P_J'$ with $J' \neq J,~J'+n+1 = A \cup V_{J'}$ and such that $V_{J \setminus \lbrace v^{(J)}_{1} \rbrace} = V_{J'
          	\setminus \lbrace v^{(J')}_{1} \rbrace} = B$ for some $B$. In this case, one can check that the only monomial from $\tau_A$ present in both $P_J$ and $P_{J'}$ is $c_{\lbrace n+1 \rbrace \cup A \cup B}$, so that $P_J + \lambda_{J'}P_{J'}$ contains at least $2w_1$ monomials from $\tau_A$. Similarly, by using a third $J''$ one could kill at most one extra monomial in $P_J$ and in the worst case one in $P_{J'}$ as well. This means that a linear combination of the form $P_J + \lambda_{J'}P_{J'} + \lambda_{J''}P_{J''}$ contains at least $2(w_1 - 1) + (w_1 + 1 - 2) =3(w_1 - 1)$ monomials from $\tau_A$, and the lower bound is reached if and only if those monomials in $P_J$ and $P_J'$ are killed at the same time by $\lambda_{J''}P_{J''}$. This is extremely unlikely if the coefficients of the MaxMinors equations are random elements in $\ff{q^m}$, so that we assume instead that $P_J + \lambda_{J'}P_{J'} + \lambda_{J''}P_{J''}$ contains at least $(w_1 - 1) + w_1 + (w_1 + 1 - 1) = 3w_1 - 1$ monomials in $\tau_A$. Relying on the same type of assumption, one can proceed by induction on the numbers of terms to show that a non-zero linear combination in $\left \langle \mathcal P_{\text{rest},A} \right\rangle$ always has a monomial in $\tau_A$.

        \bibliographystyle{splncs04}
        \bibliography{multi_rqc}
        \end{document}